\newtheorem{thm}{Theorem}
\newtheorem{lem}[thm]{Lemma}
\newtheorem{cor}[thm]{Corollary}
\newtheorem{prop}[thm]{Proposition}
\newtheorem{defn}{Definition}
\newtheorem{rem}[thm]{Remark}
\newtheorem{problem}[thm]{Problem}
\newlist{casenv}{enumerate}{4}
\setlist[casenv]{leftmargin=*,align=left,widest={iiii}}
\setlist[casenv,1]{label={{\itshape\ \casename} \arabic*.},ref=\arabic*}
\setlist[casenv,2]{label={{\itshape\ \casename} \roman*.},ref=\roman*}
\setlist[casenv,3]{label={{\itshape\ \casename\ \alph*.}},ref=\alph*}
\setlist[casenv,4]{label={{\itshape\ \casename} \arabic*.},ref=\arabic*}
\providecommand{\casename}{Case}
\begin{document}

\title{Planar Herding of Multiple Evaders with a Single Herder}
\author{Rishabh Kumar Singh and Debraj Chakraborty 
\thanks{The authors are with the Department of Electrical Engineering, Indian Institute of Technology Bombay, Mumbai 400076, India (e-mail:\tt{\{rishabhsingh,dc\}@ee.iitb.ac.in}
)}}

\maketitle

\begin{abstract}
A planar herding problem is considered, where a superior pursuer herds a flock of non-cooperative, inferior evaders around a predefined target point. An inverse square law of repulsion is assumed between the pursuer and each evader. Two classes of pursuer trajectories are proposed: (i) a constant angular-velocity spiral, and (ii) a constant angular-velocity circle, both centered around the target point. For the spiraling pursuer, the radial velocity is dynamically adjusted based on a feedback law that depends on the instantaneous position of the evader, which is located at the farthest distance from the target at the start of the game. It is shown that, under suitable choices of the model parameters, all the evaders are herded into an arbitrarily small limit cycle around the target point.
Meanwhile, the pursuer also converges onto a circular trajectory around the target. The conditions for the stability of these limit cycles are derived. For the circling pursuer, similar guarantees are provided along with explicit formulas for the radii of the limit cycles.
\end{abstract}

\begin{IEEEkeywords}
Multi-agent systems, nonlinear control systems, herding.
\end{IEEEkeywords}

\section{Introduction}
\label{sec:introduction}
Differential games involving multiple evaders and a single pursuer have important applications in various scientific fields, such as robotics \cite{vaughan2000experiments}, behavioral science \cite{wood2007evolving}, wildlife management \cite{bbc_article}, search and rescue operations \cite{van2023steering},  crowd control \cite{hughes2003flow}, and military strategies \cite{chipade2021aerial}. In many cases, the evaders act independently and try to avoid being grouped together, while they tend to move away from the pursuer. On the other hand, the pursuer is quicker than the evaders and aims to gather them into a designated area.

Although there have been numerous attempts in the literature to understand, simulate, and analyze these games and strategies (e.g., see \cite{long2020comprehensive} and the references therein), no formal solutions have been established yet. In this paper, we present simple and effective strategies for the pursuer that ensure the successful herding of any number of evaders, based on an inverse square law of repulsion. The initial exploration of the herding problem was inspired by natural herding behaviors, such as sheepdogs guiding flocks of sheep  \cite{vaughan2000experiments,lien2004shepherding,miki2006effective,strombom2014solving} and predator-prey interactions \cite{hamilton1971geometry,werner1993evolution,scott2013pursuit}. Most early research focused on mathematical models to simulate herding behavior, but paid little attention to theoretical analysis. This lack of theoretical groundwork has made it challenging to develop a comprehensive understanding of the dynamics involved in herding multiple evaders simultaneously.
In contrast, the problem of a single pursuer successfully herding one evader has been effectively addressed through various methods \cite{shedied2002optimal,kachroo2001dynamic,khalafi2011capture}. Similarly, the scenarios of herding multiple evaders with multiple pursuers have also been modeled and simulated \cite{lien2005shepherding,lu2010cooperative,pierson2018controlling,gadre2001learning}. An arc-based method was used in \cite{pierson2018controlling} to herd multiple evaders with multiple pursuers. In \cite{bacon2012swarm}, a sliding mode controller was proposed to guide a single evader along a desired path using a group of pursuers.  

The more complex issue of herding multiple evaders with a single pursuer was explored in \cite{licitra2017singleadaptive,licitra2019single,licitra2017singleswitched}, where different types of switched sliding mode controllers were suggested. However, these solutions depend on the important assumption of non-uniform repulsion between the pursuer and the "chased" and "unchased" evaders. The slower response of the unchased evader allows for the sequential gathering of all evaders at the target using a switched control method. In natural and robotic herding situations, it can be difficult for an individual evader to know if and when it is explicitly being "chased." Typically, most researchers have modeled repulsion based only on the immediate distance from the pursuer \cite{long2020comprehensive}. 

In this paper, we consider the problem of herding multiple evaders with a single pursuer, assuming an inverse square law of repulsion between the pursuer and each evader. The interaction model is similar to that in \cite{pierson2018controlling}. However, unlike \cite{pierson2018controlling}, where multiple pursuers were used, we demonstrate that herding is achievable with just one (sufficiently capable) pursuer. Compared to many of the naturally inspired herding models discussed above, an additional difference in our work is the lack of attraction between the evaders. Although this makes herding more difficult, we believe it suits most engineering applications. We propose two effective herder/pursuer strategies: a constant angular-velocity spiral and a constant angular-velocity circle, both centered around the target point. The spiraling pursuer adjusts its radial velocity based on the instantaneous position of the evader, which was located at the farthest distance from the target at the start of the game. The circular pursuit follows a path with constant angular velocity and radius centered at the target point.  

We show the following.
\begin{enumerate}
    \item For both pursuit strategies, if the evaders are initially located within a specific distance from the target, they all converge onto a limiting circle around the target point (say, of radius $r^{\star}$). For the spiraling strategy, the pursuer simultaneously converges onto another circle (say, of radius $R^{\star}$).
    \item Under suitable assumptions, $r^{\star}$ and $R^{\star}$ are both unique and asymptotically stable.
    \item The magnitude of $r^{\star}$ can be controlled through appropriate pursuer parameters.
    \item For circular pursuit, explicit computation of both $r^{\star}$ and $R^{\star}$ is possible. 
    \item Moreover, for circular herding, the evaders' convergence rate to $r^{\star}$ can be controlled through choices of the pursuit parameters.
\end{enumerate} 
Furthermore, we try to numerically estimate the true regions of attraction for the limit cycles in each scenario since the theoretical estimates of such sets turned out to be intractable. 

This work is an extended version of \cite{rishabh2024}, where the circular pursuit law (see Section \ref{sec:special case}) was proposed. This paper introduces the more versatile spiraling control law (Sections \ref{sec:Herding-of-single} and \ref{sec:Herding-of-multiple}) and provides formal proofs of Lemmas 3, 4, and Theorem 5, which were not included in \cite{rishabh2024}. In addition, several novel numerical examples that demonstrate the effectiveness of the spiraling pursuit law are presented in Section \ref{sec:simulation results}.

\section{Preliminaries and Problem Formulation}\label{sec:prelim} 
Consider a pursuer $P$, with coordinates defined by $(x_p(t),y_p(t))\in\mathbb{R}^2$ in a Cartesian $\{x,y\}$-coordinate frame centered at a target point $z$. The $x$-axis passes through the initial position of the pursuer $(x_p(0),y_p(0))$ as shown in Fig. \ref{fig1}. Consider $n$ evaders $e_{i}$ where $i=\{0,1,2,3,\ldots,n-1\}$, who are required to be herded around $z$ by the pursuer. The index $0$ corresponds to the evader farthest from the target point $z$ at time $t=0$. Denote the position of the $i^{th}$ evader in this frame by $(x_{e_{i}}(t) ,y_{e_{i}}(t))\in\mathbb{R}^2$, $i=\{0,1,2,\ldots,n-1\}$. Let $d_{e_{i}p}(t) =\sqrt{(x_{p}(t) -x_{e_i}(t))^2+(y_{p}(t) -y_{e_i}(t))^2}$  be
the distance between the $i^{th}$ evader and the pursuer and $\hat{d}_{e_{i}p} $ be
the unit vector pointing from $P$ to $e_i$. The distance of the $i^{th}$ evader and the pursuer  
from the origin at time $t$ is $r_{i}(t)$ and $r_p(t)$ respectively. The distance between the initial pursuer position $(x_{p}(0),y_{p}(0))$ and the target point $z$ (i.e. the origin of the $\{x,y\}$ frame) is assumed to be $R$. Since $r_0(0)$ will have a special role to play in our solution, we denote it by $\boldsymbol{\kappa}:=r_0(0)$.
\subsection{Evader Kinematics}
We assume that the evader's instantaneous velocity is proportional to the inverse of the square of the distance between the pursuer and the evader and is pointed directly away from the pursuer at each instant. The following equations capture the aforementioned behavior:

\begin{equation}
\left[\begin{array}{c}
\dot{x}_{e_{i}}(t) \\
\dot{y}_{e_{i}}(t) 
\end{array}\right]=\frac{k}{d_{e_{i}p}^{2}(t)}\hat{d}_{e_{i}p}(t), ~\forall\,i=\{0,1,2,\ldots,n-1\}.\label{eq:evader_dynamics}
\end{equation}
Here, $k$ is a positive constant representing the (identical) repulsion strength between each evader and the pursuer. It is further assumed that there is no attraction or repulsion between the evaders.
\begin{figure}[tbh]
\begin{centering}
\includegraphics[scale=0.8]{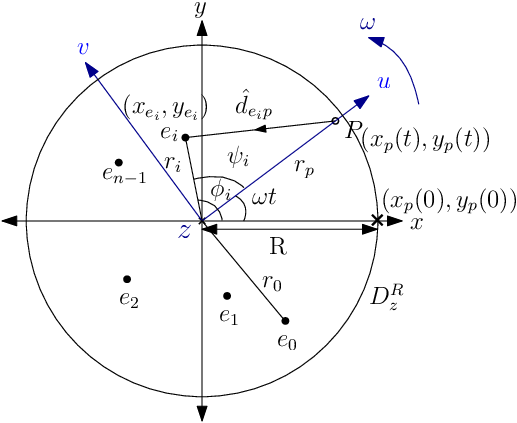}
\par\end{centering}
\centering{}\caption{Evader and pursuer position in Cartesian coordinate system}\label{fig1}
\end{figure}
\subsection{Problem Formulation}\label{subsec:Problem-formulation}
We pose the following problem.
\begin{problem}\label{first_problem}
Find a strategy for the pursuer to herd the evaders to an arbitrarily small neighborhood around the origin (i.e, target $z$). In other words, the pursuer strategy should ensure, for any $\epsilon>0$,
\begin{equation}
\lim_{t\rightarrow\infty}\Vert r_{i}(t) \Vert_2 \le \epsilon,~\forall\,i=\{0,1,2,\ldots,n-1\}.\label{eq:problem_formulation}
\end{equation}
\end{problem}
\textbf{Proposed Strategy:} Let $k_1>0$ and $\omega>0$. We then propose the following strategy for the pursuer:
\begin{equation}
\begin{aligned}
x_{p}(t)  & =R e^{(k_{1}(r_{0}(t) -\boldsymbol{\kappa}))}\cos\left(\omega t\right),\\
y_{p}(t)  & =R e^{(k_{1}(r_{0}(t) -\boldsymbol{\kappa}))}\sin\left(\omega t\right).
\label{eq:pursuer_dynamics}
\end{aligned}
\end{equation}
Hereafter, we suppress the time index $t$ in all the equations for brevity. 

\subsection{Coordinate Frames}\label{sec:coordiante frames}
Apart from the static coordinate frame $\{x,y\}$, we require the following additional frames to describe and simplify the motions of the pursuer and the evading herd:
\begin{enumerate}
    \item Polar coordinates for $\{x,y\}$ denoted by $\{r,\phi\}$:  $x=r\cos\left(\phi\right),y=r\sin\left(\phi\right)$.
\item A rotating frame $\{u,v\}$ with its origin coincident with the target position $z$, with its $u$-axis passing through the pursuer position (Fig.\ref{fig1}) at each instant of time. Hence, this frame rotates counterclockwise with a constant angular velocity of $\omega$.
\item A polar version of the  $\{u,v\}$ frame denoted by $\{r,\psi\}$:  $u=r\cos\left(\psi\right)$, $v=r\sin\left(\psi\right)$. Clearly $\psi=\phi-\omega t\label{psi}$.
\end{enumerate}

\subsubsection{Evader Kinematics in $\{r,\phi\}$-frame}
 By substituting \eqref{eq:pursuer_dynamics} into \eqref{eq:evader_dynamics} and converting the equations into polar coordinates, the coordinates of the $i$-th evader, $\{r_{i}=\sqrt{x^2_{e_i}+y^2_{e_i}},\phi_{i}=\tan^{-1}\frac{y_{e_i}}{x_{e_i}}\}$, $\forall i \in \{0,1,2,\dots,n-1\}$ satisfy:
 {\footnotesize
\begin{equation}
\begin{aligned}\dot{r}_i & =\frac{k\left(r_i-Re^{k_{1}(r_0-\boldsymbol{\kappa})}\cos\left(\phi_i-\omega t\right)\right)}{\left(r_{i}^{2}+R^{2}e^{2k_{1}(r_0-\boldsymbol{\kappa})}-2r_{i}Re^{k_{1}(r_0-\boldsymbol{\kappa})}\cos\left(\phi_i-\omega t\right)\right)^{\frac{3}{2}}},\\
\dot{\phi}_i & =\frac{kRe^{k_{1}(r_0-\boldsymbol{\kappa})}\sin\left(\phi_i-\omega t\right)}{r_i \left(r_{i}^{2}+R^{2}e^{2k_{1}(r_0-\boldsymbol{\kappa})}-2r_iRe^{k_{1}(r_0-\boldsymbol{\kappa})}\cos\left(\phi_{i}-\omega t\right)\right)^{\frac{3}{2}}}.
\end{aligned}
\label{eq:actual_system}
\end{equation}
}%
Clearly, these equations represent a time-varying system. To simplify the analysis, we change coordinates to obtain the following.
\subsubsection{Evader Kinematics in $\{r, \psi\}$-frame}
Define $\psi_i:=\phi_i-\omega t$,\quad$\forall i \in \{0,1,2,\dots,n-1\}$ and substitute this definition into \eqref{eq:actual_system} to obtain the following time-invariant form:
{\footnotesize
\begin{equation}
\begin{aligned}\dot{r}_i & =\frac{k\left(r_i-Re^{k_{1}(r_0-\boldsymbol{\kappa})}\cos\left(\psi_i\right)\right)}{\left(r_{i}^{2}+R^{2}e^{2k_{1}(r_0-\boldsymbol{\kappa})}-2r_iRe^{k_{1}(r_0-\boldsymbol{\kappa})}\cos\left(\psi_i\right)\right)^{\frac{3}{2}}},\\
\dot{\psi}_i & =\frac{kRe^{k_{1}(r_0-\boldsymbol{\kappa})}\sin\left(\psi_i\right)}{r_i\left(r_{i}^{2}+R^{2}e^{2k_{1}(r_0-\boldsymbol{\kappa})}-2r_{i}Re^{k_{1}(r_0-\boldsymbol{\kappa})}\cos\left(\psi_{i}\right)\right)^{\frac{3}{2}}}-\omega.
\end{aligned}
\label{eq:transferred_dynamics}
\end{equation}
}%
\subsubsection{Evader Kinematics in $\{u, v\}$-frame}
Although \eqref{eq:transferred_dynamics} is convenient for analysis, polar coordinates can be less intuitive for visualization. Therefore, we derive the equivalent system of \eqref{eq:transferred_dynamics} in the Cartesian $\{u,v\}$-frame, $\forall i \in \{0,1,2,\dots,n-1\}$:
\begin{equation}
\begin{aligned}
\dot{u}_{i}&=\frac{k\left(u_i-Re^{k_{1}(\sqrt{u_0^2+v_0^2}-\boldsymbol{\kappa})}\right)}{\left((u_i-Re^{k_{1}(\sqrt{u_0^2+v_0^2}-\boldsymbol{\kappa})})^{2}+v_{i}^2\right)^\frac{3}{2}}+\omega v_i,\\
\dot{v}_i&=\frac{kv_i}{\left((u_i-Re^{k_{1}(\sqrt{u_0^2+v_0^2}-\boldsymbol{\kappa})})^{2}+v_{i}^{2}\right)^\frac{3}{2}}-\omega u_i. 
\end{aligned}\label{kinematic_uv}
\end{equation}
\section{Herding of Single Evader}\label{sec:Herding-of-single}
In this section, we examine the behavior of just one evader (i.e. \( i=0 \)) under the action of the pursuer strategy \eqref{eq:pursuer_dynamics}. The analysis of the other evaders, specifically \( i=1,2,\dots,n-1 \), will be presented in the subsequent sections. First note from \eqref{eq:actual_system}, \eqref{eq:transferred_dynamics}, and \eqref{kinematic_uv}, that the motion of \( i=0 \) is independent of \( i=\{1,2,\dots,n-1\} \). 
\subsection{Equilibrium Points}\label{sec:equilibrium_points}
The equilibrium points (say $(r_0^{\star},\psi_0^{\star})$) of \eqref{eq:transferred_dynamics} can be calculated by setting $\dot{r}_0=0$ and $\dot{\psi}_0=0$:
\begin{align}
r_{0}^{\star}&=Re^{k_{1}(r_{0}^{\star}-\boldsymbol{\kappa})}\cos\left(\psi_{0}^{\star}\right)\label{eq:r_eql},\\
\omega &=\frac{kRe^{k_{1}(r_{0}^{\star}-\boldsymbol{\kappa})}\sin\left(\psi_{0}^{\star}\right)}{r_{0}^{\star}\left[(r_{0}^{\star})^{2}+R^{2}e^{2k_{1}(r_{0}^{\star}-\boldsymbol{\kappa})}-2r_{0}^{\star}Re^{k_{1}(r_{0}^{\star}-\boldsymbol{\kappa})}\cos\left(\psi_{0}^{\star}\right)\right]^{\frac{3}{2}}}.\nonumber
\end{align}
Rearranging the above equations, we get 
\begin{equation}
\cos\left(\psi_{0}^{\star}\right)\sin^{2}\left(\psi_{0}^{\star}\right)=\frac{k}{\omega R^{3}e^{3k_{1}(r_{0}^{\star}-\boldsymbol{\kappa})}}.\label{condition_for_eql}
\end{equation}
\subsubsection{Existence of Equilibrium Points}
By eliminating $\psi_{0}^{\star}$ from \eqref{eq:r_eql} and \eqref{condition_for_eql}, we get
\begin{equation}
r_{0}^{\star3}-R^{2}e^{2k_{1}(r_{0}^{\star}-\boldsymbol{\kappa})}r_{0}^{\star}+\frac{k}{\omega}=0.\label{eq:cubic_equation}
\end{equation}
The following lemma provides the condition for the existence of roots of \eqref{eq:cubic_equation}.
\begin{lem}\label{lem:one root existance}
   For the system described in \eqref{eq:transferred_dynamics}, if $\boldsymbol{\kappa}>0$, there exist either one or three positive real roots of \eqref{eq:cubic_equation}, $\forall k,k_1,R,\omega>0$. Furthermore, if \( 2k_{1}^{2}R^{2} > 1 \) and \( \boldsymbol{\kappa}< \frac{\ln(2k_{1}^{2}R^{2})}{2k_{1}} \), then there exists exactly one positive real root of \eqref{eq:cubic_equation}, \( \forall k,w > 0 \).
\end{lem}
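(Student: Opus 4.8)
The plan is to reduce the transcendental root-counting problem to a one-dimensional curve-versus-level comparison. For $r>0$ I would divide \eqref{eq:cubic_equation} by $r$ and isolate the exponential, rewriting it as $L(r)=C$, where
\[
L(r):=\left(r^{2}+\frac{k}{\omega r}\right)e^{-2k_{1}r},\qquad C:=R^{2}e^{-2k_{1}\boldsymbol{\kappa}}>0 .
\]
The endpoint behaviour is immediate: $L(r)\to+\infty$ as $r\to 0^{+}$ (the $k/(\omega r)$ term dominates) and $L(r)\to 0^{+}$ as $r\to\infty$ (the exponential beats the polynomial). Hence $L-C$ changes sign on $(0,\infty)$, so at least one positive real root always exists.

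For the first claim I would count the critical points of $L$. Differentiating and clearing the strictly positive factor $r^{-2}e^{-2k_{1}r}$ shows that $L'(r)=0$ on $(0,\infty)$ is equivalent to the genuine quartic
\[
Q(r):=2k_{1}r^{4}-2r^{3}+\frac{2k_{1}k}{\omega}r+\frac{k}{\omega}=0 .
\]
The coefficient sign pattern $(+,-,+,+)$ has exactly two sign changes, so by Descartes' rule of signs $Q$ has either zero or two positive roots; thus $L$ has at most two critical points. A smooth function running from $+\infty$ down to $0$ with at most one interior local-minimum/local-maximum pair meets a horizontal level $C$ in either one or three points (the tangency being the non-generic double-root case), which yields the ``one or three'' dichotomy for every $C>0$. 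The hypothesis $\boldsymbol{\kappa}>0$ only fixes the physical regime $C<R^{2}$ and is otherwise inessential to the count.

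For the second claim the goal is to push the level $C$ above every local maximum of $L$, uniformly in $k,\omega$. At any critical point $r_{\star}$ I would use $Q(r_{\star})=0$ to eliminate $k/\omega$, which collapses the critical value to the clean form
\[
L(r_{\star})=\frac{3\,r_{\star}^{2}}{1+2k_{1}r_{\star}}\,e^{-2k_{1}r_{\star}} .
\]
The crux is then the scalar inequality $\dfrac{3r^{2}e^{-2k_{1}r}}{1+2k_{1}r}<\dfrac{1}{2k_{1}^{2}}$ for all $r>0$; with $s=k_{1}r$ this is $\dfrac{s^{2}e^{-2s}}{1+2s}<\dfrac{1}{6}$, which I would establish from $(1+2s)e^{2s}\ge 1+4s+6s^{2}+4s^{3}>6s^{2}$ (using $e^{2s}\ge 1+2s+2s^{2}$). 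Consequently every critical value of $L$, in particular its local maximum, lies strictly below $1/(2k_{1}^{2})$. Finally I would rewrite the stated hypotheses as exactly $C=R^{2}e^{-2k_{1}\boldsymbol{\kappa}}>\tfrac{1}{2k_{1}^{2}}$ (the condition $2k_{1}^{2}R^{2}>1$ being what makes the logarithm positive), so the line $y=C$ sits above the local maximum, and a fortiori above the local minimum, leaving $L(r)=C$ with a single transversal crossing, independent of $k$ and $\omega$.

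The main obstacle I anticipate is precisely this uniform-in-$(k,\omega)$ bound on the local-maximum value: the critical abscissa $r_{\star}$ and the ratio $k/\omega$ are coupled through $Q$, so the whole argument hinges on using $Q(r_{\star})=0$ to remove $k/\omega$ and reduce everything to the single-variable estimate above. Getting the constant to land exactly at $1/(2k_{1}^{2})$ is what matches the threshold $\boldsymbol{\kappa}<\ln(2k_{1}^{2}R^{2})/(2k_{1})$ appearing in the statement.
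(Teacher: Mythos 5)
Your proof is correct, and it takes a genuinely different route from the paper's. The paper's own argument expands the exponential in \eqref{eq:cubic_equation} as a power series, so the equation becomes an ``infinite polynomial'' whose coefficients have sign pattern $+,-,-,\pm,-,-,\dots$, and then invokes Curtiss's extension of Descartes' rule of signs to entire functions \cite{curtiss1918recent}: when the cubic coefficient $1-2k_{1}^{2}R^{2}e^{-2k_{1}\boldsymbol{\kappa}}$ is negative --- which is exactly the hypothesis $2k_{1}^{2}R^{2}>1$, $\boldsymbol{\kappa}<\ln(2k_{1}^{2}R^{2})/(2k_{1})$ --- there is one sign change and hence exactly one positive root, while otherwise there are three sign changes and hence one or three roots. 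You instead recast the problem as the level crossing $L(r)=C$, apply only the classical polynomial Descartes rule to the quartic $Q$ arising from $L'$, and prove uniqueness via the uniform bound $L(r_{\star})<1/(2k_{1}^{2})$ on critical values. It is worth noticing that the two arguments pivot on the same quantity: your level condition $C=R^{2}e^{-2k_{1}\boldsymbol{\kappa}}>1/(2k_{1}^{2})$ is literally the paper's coefficient condition $2k_{1}^{2}R^{2}e^{-2k_{1}\boldsymbol{\kappa}}>1$ in disguise. What the paper's route buys is brevity (a few lines once the series is written), at the cost of leaning on the series extension of Descartes' rule. What your route buys is self-containedness --- only polynomial Descartes, elementary calculus, and $e^{2s}\ge 1+2s+2s^{2}$ --- plus a geometric explanation of where the threshold comes from (it pushes the level $C$ above every possible local maximum of $L$, uniformly in $k,\omega$), and the additional information that under the stated hypotheses the unique root is a transversal, hence simple, crossing. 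One small point you should make explicit: the ``one or three'' dichotomy (including your tangency case) is a count with multiplicity, which is also the convention implicit in the Descartes-based count, so the two statements agree exactly.
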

\begin{proof}
    Expanding \eqref{eq:cubic_equation}, we get:
      \begin{equation}\label{lem2:c1}
          r_{0}^{*3}-R^{2}e^{-2k_{1}\boldsymbol{\kappa}}( r_{0}^{\star}+2k_{1}r_{0}^{*2}+2k_1^{2}r_{0}^{*3}+\frac{8}{6}k_{1}^{3}r_{0}^{*4}+\dots)+\frac{k}{\omega}=0
      \end{equation}
     If the coefficient of $r_{0}^{*3}$, i.e., $1- 2k_{1}^{2}R^{2}e^{-2k_{1}\boldsymbol{\kappa}}<0$, then the total number of sign changes in the coefficients of \eqref{lem2:c1} is equal to one. By applying an extension of Descartes' rule of signs \cite{curtiss1918recent}, there is only one positive real root. However since $k, R, \boldsymbol{\kappa}>0$, $1- 2k_{1}^{2}R^{2}e^{-2k_{1}\boldsymbol{\kappa}}<0$ can only be satisfied if $2k_{1}^{2}R^{2} > 1$ and $\boldsymbol{\kappa}< \frac{\ln(2k_{1}^{2}R^{2})}{2k_{1}}$. On the other hand, if $1- 2k_{1}^{2}R^{2}e^{-2k_{1}\boldsymbol{\kappa}}>0$, then the total number of sign changes in the coefficients of \eqref{lem2:c1} is equal to three. Again, by applying \cite{curtiss1918recent}, the number of positive roots is either one or three.
\end{proof}
The following lemma computes the equilibrium radius of the pursuer directly from \eqref{eq:pursuer_dynamics}.
\begin{lem}\label{lem:pursuer_radius}
    If the pursuer follows the strategy described by \eqref{eq:pursuer_dynamics} and $r_{0}^{\star}$ is the equilibrium radius for the evader obtained as a solution of \eqref{eq:cubic_equation}, then the pursuer converges to a circular trajectory with a radius given by
    \begin{align}\label{eq:pursuer radius}
        R^{\star} = Re^{k_1(r_{0}^{\star}-\boldsymbol{\kappa})}.
    \end{align}
\end{lem}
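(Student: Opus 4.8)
The plan is to obtain the pursuer's radial distance from the target as an explicit function of the evader's radial coordinate, and then pass to the limit as that coordinate approaches its equilibrium. First I would compute $r_p(t) = \sqrt{x_p^2(t) + y_p^2(t)}$ directly from the proposed strategy \eqref{eq:pursuer_dynamics}. Factoring the common amplitude $Re^{k_1(r_0(t) - \boldsymbol{\kappa})}$ out of both components and using $\cos^2(\omega t) + \sin^2(\omega t) = 1$ collapses the expression to the unconditional identity
\begin{equation}
r_p(t) = Re^{k_1(r_0(t) - \boldsymbol{\kappa})},\nonumber
\end{equation}
valid for every $t$. This already exhibits the pursuer's instantaneous distance from the target as a continuous (indeed strictly increasing) function of $r_0(t)$ alone, independent of the remaining evaders.

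Next I would invoke convergence of the evader's radial coordinate to the equilibrium $r_0^\star$ solving \eqref{eq:cubic_equation} (whose existence is guaranteed by Lemma \ref{lem:one root existance} and whose attractivity is supplied by the stability analysis of the single-evader subsystem), and take $t \to \infty$. By continuity of the exponential map, $r_p(t) \to Re^{k_1(r_0^\star - \boldsymbol{\kappa})} =: R^\star$. Since the pursuer's angular coordinate advances at the constant rate $\omega$ by construction, a constant limiting radius together with constant angular velocity means the trajectory $(x_p,y_p)$ asymptotically traces a circle of radius $R^\star$ centered at the target, which is exactly \eqref{eq:pursuer radius}.

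The computation is elementary, so the only substantive content is the appeal to $r_0(t) \to r_0^\star$, which is external to this lemma; I would therefore emphasize that the algebraic identity $r_p = Re^{k_1(r_0 - \boldsymbol{\kappa})}$ holds with no hypotheses, and that it is solely the \emph{existence of the limit} that borrows from the evader convergence result. Framed this way, the lemma is essentially a direct computation of the limiting pursuer radius given the equilibrium $r_0^\star$, and no machinery beyond the explicit formula and continuity is required. The main obstacle, were one to demand a fully self-contained statement, is precisely the convergence $r_0(t) \to r_0^\star$; but under the stated framing that fact is assumed, and the remaining argument is immediate.
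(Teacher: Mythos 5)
Your proof is correct and follows exactly the argument the paper intends: the paper offers no explicit proof for this lemma, treating it as a direct computation from \eqref{eq:pursuer_dynamics}, namely that $r_p(t)=\sqrt{x_p^2+y_p^2}=Re^{k_1(r_0(t)-\boldsymbol{\kappa})}$ identically, which evaluates to $R^{\star}$ once $r_0(t)\to r_0^{\star}$. Your added care in separating the unconditional algebraic identity from the borrowed convergence fact $r_0(t)\to r_0^{\star}$ is a faithful (and slightly more explicit) rendering of the same one-line argument.
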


From  \eqref{eq:r_eql} and \eqref{eq:pursuer radius}, it is easy to see that 
\begin{equation}\label{eq:r0<R_star}
    r_0^{\star}<R^{\star}.
\end{equation}

\subsubsection{Uniqueness of Equilibrium Points}
The following lemma establishes the uniqueness of the equilibrium point ($r_0^{\star},\psi_0^{\star}$).
\begin{thm}\label{lem:unieq eql point}
    Consider the system described in \eqref{eq:transferred_dynamics}. If $2k_{1}^{2}R^{2}>1$ and $\boldsymbol{\kappa}$ is fixed such that $\boldsymbol{\kappa}<\frac{\ln(2k_{1}^{2}R^{2})}{2k_{1}}$, then there exists a unique equilibrium point $(r_{0}^{\star},\psi_{0}^{\star})$ that satisfies \eqref{eq:r_eql} and \eqref{condition_for_eql}, $\forall k,\omega>0$.
\end{thm}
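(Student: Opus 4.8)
The plan is to leverage Lemma \ref{lem:one root existance} to pin down the radial coordinate and then show that the angular coordinate is forced into a range on which it is uniquely determined. Under the stated hypotheses $2k_1^2R^2>1$ and $\boldsymbol{\kappa}<\frac{\ln(2k_{1}^{2}R^{2})}{2k_{1}}$, Lemma \ref{lem:one root existance} guarantees that the cubic \eqref{eq:cubic_equation} has \emph{exactly one} positive real root, which I denote $r_{0}^{\star}$. Since \eqref{eq:cubic_equation} was obtained precisely by eliminating $\psi_{0}^{\star}$ from \eqref{eq:r_eql} and \eqref{condition_for_eql}, every equilibrium of \eqref{eq:transferred_dynamics} must share this same radial coordinate; hence it only remains to show that $r_{0}^{\star}$ determines $\psi_{0}^{\star}$ uniquely.

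First I would use \eqref{eq:r_eql} together with Lemma \ref{lem:pursuer_radius} to write $\cos(\psi_{0}^{\star})=r_{0}^{\star}/R^{\star}$, where $R^{\star}=Re^{k_{1}(r_{0}^{\star}-\boldsymbol{\kappa})}$. Because $r_{0}^{\star}>0$ and, by \eqref{eq:r0<R_star}, $r_{0}^{\star}<R^{\star}$, this ratio lies strictly in $(0,1)$; in particular $\cos(\psi_{0}^{\star})>0$. Next I would extract the sign of $\sin(\psi_{0}^{\star})$ from the second equilibrium relation in Section \ref{sec:equilibrium_points}: since $k,\omega,R^{\star},r_{0}^{\star}$ and the bracketed denominator are all strictly positive, that relation forces $\sin(\psi_{0}^{\star})>0$. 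Combining $\cos(\psi_{0}^{\star})>0$ and $\sin(\psi_{0}^{\star})>0$ confines $\psi_{0}^{\star}$ to the open first quadrant $(0,\pi/2)$, on which cosine is strictly decreasing and hence injective. Therefore the fixed value $\cos(\psi_{0}^{\star})=r_{0}^{\star}/R^{\star}$ determines $\psi_{0}^{\star}$ uniquely, and consistency with \eqref{condition_for_eql} is automatic, since that relation collapses to \eqref{eq:cubic_equation} once $\sin^{2}(\psi_{0}^{\star})=1-(r_{0}^{\star}/R^{\star})^{2}$ is substituted.

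The main (and essentially the only) obstacle is the angular ambiguity: the relation $\cos(\psi_{0}^{\star})=r_{0}^{\star}/R^{\star}$ on its own admits a second solution in the fourth quadrant, so the crux of the argument is to eliminate this spurious branch using the sign of $\sin(\psi_{0}^{\star})$ dictated by $\omega>0$ in the $\dot{\psi}_0=0$ equation. Once the equilibrium angle is confined to $(0,\pi/2)$, uniqueness of $\psi_{0}^{\star}$ is immediate from the monotonicity of cosine, while the uniqueness of $r_{0}^{\star}$ is inherited directly from Lemma \ref{lem:one root existance}. Taken together, these give the unique pair $(r_{0}^{\star},\psi_{0}^{\star})$ claimed in the statement.
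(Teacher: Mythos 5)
Your proof is correct, and its skeleton matches the paper's: uniqueness of $r_{0}^{\star}$ is inherited from Lemma \ref{lem:one root existance}, and the angular ambiguity is then resolved by pinning down the signs of $\cos(\psi_{0}^{\star})$ and $\sin(\psi_{0}^{\star})$ to confine $\psi_{0}^{\star}$ to $(0,\tfrac{\pi}{2})$. You differ in two technical steps, both in the direction of economy. First, where the paper proves $\cos(\psi_{0}^{\star})<1$ via an explicit estimate --- maximizing $g=r_{0}^{\star}e^{-k_{1}r_{0}^{\star}}$ to get $\cos(\psi_{0}^{\star})<\frac{1}{k_{1}Re^{1-\boldsymbol{\kappa}k_{1}}}<\frac{\sqrt{2}}{e}$, a computation that consumes the hypothesis $\boldsymbol{\kappa}<\frac{\ln(2k_{1}^{2}R^{2})}{2k_{1}}$ a second time --- you simply invoke \eqref{eq:r0<R_star}, i.e.\ $r_{0}^{\star}<R^{\star}$, so that $\cos(\psi_{0}^{\star})=r_{0}^{\star}/R^{\star}\in(0,1)$ immediately. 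Second, where the paper passes to the $\{u,v\}$-frame and reads the sign of $v_{0}^{\star}$ off the $\dot{u}_{0}=0$ equation of \eqref{kinematic_uv}, you read $\sin(\psi_{0}^{\star})>0$ directly from the $\dot{\psi}_{0}=0$ equation of \eqref{eq:transferred_dynamics}, since $\omega>0$ and every other factor there is strictly positive; these are the same idea expressed in different coordinates, with yours arguably the more direct rendering. One caveat worth noting: the paper's seemingly superfluous bound $\cos(\psi_{0}^{\star})<\frac{1}{k_{1}Re^{1-\boldsymbol{\kappa}k_{1}}}$ (its \eqref{eq:cos_psi_bound}) is not wasted effort in context, since it is reused in the stability arguments of Appendices A and B; with your leaner proof, that bound would have to be derived separately for those later results to go through.
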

\begin{proof}
    Recall $r_0^{\star}$ is unique under the assumption $2k_{1}^{2}R^{2}>1$ and $\boldsymbol{\kappa}<\frac{\ln(2k_{1}^{2}R^{2})}{2k_{1}}$. Now, from \eqref{eq:r_eql}, we have:
    \begin{align}\label{eq:cos_psi_star}
        \cos(\psi_{0}^{\star})&=\frac{r_{0}^{\star}}{Re^{k_1(r_{0}^{\star}-\boldsymbol{\kappa})}}\nonumber\\
        & = \frac{1}{Re^{-\boldsymbol{\kappa}k_1}}\frac{r_{0}^{\star}}{e^{k_1r_{0}^{\star}}}.
    \end{align}
    Since all the terms on the right hand side of \eqref{eq:cos_psi_star} are positive, it follows that $\cos(\psi_{0}^{\star})>0$. Now, let $g=\frac{r_{0}^{\star}}{e^{k_1r_{0}^{\star}}}$ then 
    \begin{align*}
        \frac{dg}{dr_{0}^{\star}}=\frac{1-k_1r_{0}^{\star}}{e^{k_1r_{0}^{\star}}}=0\implies r_{0}^{\star}=\frac{1}{k_1}. 
    \end{align*}
    So the maximum value of $g$ is $\frac{1}{k_1e}$. This proves that for arbitrary values of $r_0^{\star}$, $g<\frac{1}{k_1 e}$. Hence
    \begin{align}\label{eq:cos_psi_bound}
        \cos(\psi_{0}^{\star})<\frac{1}{k_1Re^{(1-\boldsymbol{\kappa}k_1)}}.
    \end{align} 
    Now, by hypothesis:
    \begin{equation}\label{eq:a2}
        \boldsymbol{\kappa}<\frac{\ln(2k_{1}^{2}R^{2})}{2k_{1}} \implies - e^{-\boldsymbol{\kappa}k_1}>\frac{1}{\sqrt{2}k_1R}
    \end{equation}
    From \eqref{eq:cos_psi_bound} and \eqref{eq:a2}, we obtain:
    \begin{equation*}
        0<\cos(\psi_0^{\star})<\frac{1}{k_1Re^{(1-\boldsymbol{\kappa}k_1)}}<\frac{\sqrt{2}}{e}<1
    \end{equation*}
Since the value of the cosine function is between \(0\) and \(1\), $\psi_0^{\star}$ will exist for $\frac{-\pi}{2}\leq\psi_0^{\star}\leq\frac{\pi}{2}$. However, two values of \(\psi_{0}^{\star}\) are possible: positive and negative. 

\textbf{Claim:} \(\psi_{0}^{\star}\in (0,\frac{\pi}{2})\)\\ 
Let the equilibrium point in the \(\{u,v\}\)-frame be given as \( u_{0}^{\star} = r_{0}^{\star}\cos(\psi_{0}^{\star}) \) and \( v_{0}^{\star} = r_{0}^{\star}\sin(\psi_{0}^{\star}) \). In the context of \eqref{kinematic_uv}, when we set \( \dot{u}_{0} = 0 \), the equilibrium point satisfies:
\[
\frac{k\left(u_0^{\star}-Re^{k_{1}(\sqrt{u_0^{\star2}+v_0^{\star2}}-\boldsymbol{\kappa})}\right)}{\left((u_0^{\star}-Re^{k_{1}(\sqrt{u_0^{\star2}+v_0^{\star2}}-\boldsymbol{\kappa})})^{2}+v_{0}^{\star2}\right)^{\frac{3}{2}}} = -\omega v_0^{\star}
\]
From \eqref{eq:r0<R_star} and Lemma \( \ref{lem:pursuer_radius} \), we find that the left-hand side of the equation is always negative. This implies that \( v_{0}^{\star} \) must be positive, which in turn indicates that \( \psi_{0}^{\star} \) is always positive.
Thus, we conclude that \( \psi_{0}^{\star} \) is unique.
\end{proof}
{\begin{rem}
   If the distance of the initial position of the pursuer from the target point, i.e., $R$, decreases, then \( k_1 \) must increase to satisfy the inequality \( 2k_1^2 R^2 > 1 \), thereby ensuring the existence of the unique equilibrium point \( (r_{0}^{\star}, \psi_{0}^{\star}) \).
\end{rem}
}
\subsubsection{Stability of the equilibrium point}\label{sec:Stability of equilibrium points}
\begin{thm}\label{thm:For-the-single}
Consider the system described in \eqref{eq:transferred_dynamics}. If  \( 2k_{1}^{2}R^{2}>1 \) and $\boldsymbol{\kappa}$ is fixed such that \( \boldsymbol{\kappa}<\frac{\ln(2k_{1}^{2}R^{2})}{2k_{1}} \), then the unique equilibrium point \( (r_{0}^{\star},\psi_{0}^{\star}) \) is asymptotically stable.
\end{thm}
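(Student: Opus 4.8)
The plan is to prove local asymptotic stability by Lyapunov's indirect method: linearize the planar autonomous system \eqref{eq:transferred_dynamics} about $(r_0^\star,\psi_0^\star)$, form the $2\times2$ Jacobian $J$ of its right-hand side, and show both eigenvalues lie in the open left half-plane. In dimension two this is equivalent to the two scalar conditions $\operatorname{tr}J<0$ and $\det J>0$, after which the conclusion is immediate.

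Before differentiating, I would fix the shorthand $\rho(r_0):=Re^{k_1(r_0-\boldsymbol{\kappa})}$ for the state-dependent pursuer radius, so that $\rho'(r_0)=k_1\rho(r_0)$ and $\rho(r_0^\star)=R^\star$ by Lemma \ref{lem:pursuer_radius}, and write $d^\star$ for the evader--pursuer distance $d_{e_0p}$ at equilibrium. The computation is made tractable by three equilibrium identities, applied only \emph{after} differentiation. First, \eqref{eq:r_eql} states that the numerator $r_0-\rho\cos\psi_0$ of $\dot r_0$ vanishes at the equilibrium, which kills every term carrying this factor in the two derivatives of $\dot r_0$. Second, substituting $r_0^\star=R^\star\cos\psi_0^\star$ into $(d^\star)^2=(r_0^\star)^2+(R^\star)^2-2r_0^\star R^\star\cos\psi_0^\star$ collapses it to the clean relation $d^\star=R^\star\sin\psi_0^\star$. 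Third, the equilibrium value of $\omega$ fixed by $\dot\psi_0^\star=0$ lets me eliminate $k/(d^\star)^2$ in favor of $\omega r_0^\star$. With these, each of the four partial derivatives reduces to a compact expression in $\omega$, $r_0^\star$, $d^\star$, $k_1$, and $\psi_0^\star$.

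Carrying this out, I expect the diagonal entries to combine into $\operatorname{tr}J=-\omega r_0^\star(1+k_1 r_0^\star)/d^\star$, which is negative for all positive parameters, so the trace condition holds unconditionally. The determinant should reduce (using $(r_0^\star/d^\star)^2=\cot^2\psi_0^\star$) to a positive multiple of $(2k_1 r_0^\star+1)-2\cot^2\psi_0^\star(1-k_1 r_0^\star)$, and showing this is positive is the main obstacle. I would split into cases: if $k_1 r_0^\star\ge1$ the subtracted term is non-positive and positivity is immediate; if $k_1 r_0^\star<1$ I would invoke the bound $\cos\psi_0^\star<\sqrt2/e$ established in the proof of Theorem \ref{lem:unieq eql point}, which yields $\cot^2\psi_0^\star<2/(e^2-2)$ and hence $2\cot^2\psi_0^\star(1-k_1 r_0^\star)<4/(e^2-2)<1<2k_1 r_0^\star+1$. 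Either way $\det J>0$.

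Combining $\operatorname{tr}J<0$ and $\det J>0$ places both eigenvalues of $J$ in the open left half-plane, so $(r_0^\star,\psi_0^\star)$ is asymptotically stable. The genuinely delicate step is the bookkeeping in the second paragraph: if one simplifies the vector field using the equilibrium relations and then differentiates, the terms that carry the sign information in $\det J$ are lost, so the order of operations must be respected throughout.
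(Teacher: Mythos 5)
Your proposal is correct and follows essentially the same route as the paper's Appendix A: linearize \eqref{eq:transferred_dynamics} at $(r_0^\star,\psi_0^\star)$ and verify $\operatorname{tr}J<0$ (which holds unconditionally) and $\det J>0$ (which uses the bound $\cos\psi_0^\star<\sqrt{2}/e$ inherited from Theorem \ref{lem:unieq eql point}); indeed, your simplified forms check out against the paper's Jacobian, since with $d^\star=R^\star\sin\psi_0^\star$ and $k=\omega r_0^\star (d^\star)^2$ one gets $\operatorname{tr}J=-\omega r_0^\star(1+k_1r_0^\star)/d^\star$ and $\det J$ a positive multiple of $(2k_1r_0^\star+1)-2\cot^2(\psi_0^\star)(1-k_1r_0^\star)=2k_1r_0^\star+1-3\cos^2(\psi_0^\star)$ up to a positive factor. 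The only difference is cosmetic: the paper packages the determinant's numerator as $2\cos(\psi_0^\star)\bigl(k_1Re^{k_1(r_0^\star-\boldsymbol{\kappa})}-\cos(\psi_0^\star)\bigr)+\sin^2(\psi_0^\star)$ and lower-bounds the parenthesized term, whereas you case-split on $k_1r_0^\star\gtrless 1$ with a cotangent bound—both hinge on the same cosine estimate and the same hypothesis on $\boldsymbol{\kappa}$.
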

\begin{proof}
The proof is provided in Appendix A.
\end{proof}

\subsection{Estimation of Stable Region}\label{section3c}
Assume $k,\omega>0$ and $k_1,R$ are fixed such that \( 2k_{1}^{2}R^{2}>1 \). Then we have established that as long as $
\boldsymbol{\kappa} < \frac{\ln(2k_1^2 R^2)}{2k_1}$, there exists a unique asymptotically stable equilibrium point \( (r_0^{\star}, \psi_0^{\star}) \), which depends on \( \boldsymbol{\kappa} =r_0(0)\). However, note that \( (r_0^{\star}, \psi_0^{\star}) \) does not depend on \( \psi_0(0) \). Denote the equilibrium point by $(r_{0}^{\star}(\boldsymbol{\kappa}),\psi_{0}^{\star}(\boldsymbol{\kappa}))$, and the set of all such asymptotically stable equilibrium points as\\
\begin{equation}\label{eq:R_E}
    R_{E}:=\biggl\{(r_{0}^{\star}(\boldsymbol{\kappa}),\psi_{0}^{\star}(\boldsymbol{\kappa}))\in \mathbb{R}^2|\boldsymbol{\kappa}< \frac{\ln(2k_1^2 R^2)}{2k_1}\biggl\}.
\end{equation}
In this section, our objective is to numerically estimate a stable region defined as:
\begin{equation}\label{stable region}
    S = \{(\boldsymbol{\kappa},\psi_{0}(0)) | (r_{0}(t),\psi_{0}(t)) \to R_{E}~~\text{as}~ t\to \infty \}.
\end{equation}
We first denote $(r_{0}^{\star}(\boldsymbol{\kappa}),\psi_{0}^{\star}(\boldsymbol{\kappa}))$ in the Cartesian $\{u,v\}$-frame as $u_{0}^{\star}(\boldsymbol{\kappa})=r_{0}^{\star}(\boldsymbol{\kappa})\cos (\psi_{0}^{\star}(\boldsymbol{\kappa}))$ and $v_{0}^{\star}(\boldsymbol{\kappa})=r_{0}^{\star}(\boldsymbol{\kappa})\sin (\psi_{0}^{\star}(\boldsymbol{\kappa}))$. The equations of motion in this frame were already derived in (\ref{kinematic_uv}). For computational purposes, we further shift the origin of the $\{u,v\}$-frame to $\{u_{0}^{\star}(\boldsymbol{\kappa}), v_{0}^{\star}(\boldsymbol{\kappa})\}$. Denote this new frame as $ \{\bar{u}_0=u_0-u_{0}^{\star}(\boldsymbol{\kappa}),\bar{v}_0=v_0-v_{0}^{\star}(\boldsymbol{\kappa})\}$. Define $\bar{r}_0:=\sqrt{({\bar{u}} + u_{0}^{\star}(\boldsymbol{\kappa}))^2 + ({\bar{v}} + v_{0}^{\star}(\boldsymbol{\kappa}))^2}$, and $h:=((\bar{u}_0 - R e^{k_1(\bar{r}_0 - \boldsymbol{\kappa})} + u_{0}^{\star}(\boldsymbol{\kappa}))^2 + (\bar{v}_0 + v_{0}^{\star}(\boldsymbol{\kappa}))^2)^{3/2}$, then the equations of motion are modified to:
\begin{align}\label{eq:spiral_ROA_shifted_equation_farthest}
    \begin{split}
       \dot{\bar{u}}_0 &= \frac{k(\bar{u}_0 - R e^{k_1(\bar{r}_0 - \boldsymbol{\kappa})} + u_{0}^{\star}(\boldsymbol{\kappa}))}{h} + \omega(\bar{v}_0 + v_{0}^{\star}(\boldsymbol{\kappa})), \\
       \dot{\bar{v}}_0 &= \frac{k(\bar{v}_0 + v_{0}^{\star}(\boldsymbol{\kappa}))}{h} - \omega(\bar{u}_0 + u_{0}^{\star}(\boldsymbol{\kappa})).
    \end{split}
\end{align}

Define \( w_0 = [\bar{u}_0, \bar{v}_0]^T \) and denote \eqref{eq:spiral_ROA_shifted_equation_farthest} as $\dot{w}_0 = f(\boldsymbol{\kappa},w_0)$. For a fixed \( \boldsymbol{\kappa} \), define the function $V(w_0) = w_0^T A_0 w_0, A_0 = A_0^T > 0$. We further define:
\[
\mathcal{A}_0 = \{ w_0 \in \mathbb{R}^2 : V(w_0) \leq 1 \}
\]
and denote its boundary by $\partial \mathcal{A}_0 = \{ w_0 \in \mathbb{R}^2 : V(w_0) = 1 \}$. To estimate $S$, we pose the following problem:
\begin{problem}
    Choose $A_0$ such that the area of $\mathcal{A}_0$ is maximized and 
  \begin{align}\label{eq:lyapunoc_check}
f^T (\boldsymbol{\kappa}, w_0) \frac{\partial V}{\partial w} < 0, \quad \forall w_0 \in \partial \mathcal{A}_0.
\end{align}
\end{problem}
The existence of a solution is guaranteed by the asymptotic stability of the origin. Denote the solution of this problem by $A_0^{\star}$ and define $\bar{\Omega}_0(\boldsymbol{\kappa}) = \{ w_0 \in \mathbb{R}^2 : \omega_0^T A_0^{\star} \omega_0 \leq 1 \}$
We recall the following standard result:
\begin{prop}\cite{hirsch1974differential}\label{proposition}
    A nonempty compact set that is positively or negatively invariant contains either a limit cycle or an equilibrium point.
\end{prop}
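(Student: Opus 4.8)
The plan is to obtain this statement as a direct consequence of the Poincar\'e--Bendixson theorem, which is available precisely because the flow under consideration evolves in the plane $\mathbb{R}^2$. Throughout, I assume the generating vector field is $C^1$ on a neighborhood of the invariant set, so that solutions exist, are unique, and depend continuously on initial data; this is exactly the setting of \eqref{kinematic_uv} away from the collision singularities. I would treat the positively invariant case in detail, the negatively invariant case following by reversing time and replacing $\omega$-limit sets with $\alpha$-limit sets.

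First I would fix an arbitrary point $x_0$ in the nonempty compact positively invariant set $K$. By positive invariance the forward orbit $\gamma^+(x_0)=\{\Phi_t(x_0):t\ge 0\}$ stays in $K$ and is therefore bounded with compact closure; in particular, since $K$ is compact the forward solution cannot escape or reach a singularity in finite time, so the flow is complete on $K$. Consequently the $\omega$-limit set $\omega(x_0)$ is nonempty, and by the standard theory of limit sets it is compact, connected, and invariant under the flow. Moreover $\omega(x_0)\subseteq\overline{\gamma^+(x_0)}\subseteq K$ because $K$ is closed, so anything I locate inside $\omega(x_0)$ automatically lies in $K$.

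Next I would invoke the structural dichotomy of Poincar\'e--Bendixson: a nonempty compact $\omega$-limit set of a planar $C^1$ flow that contains no equilibrium point must be a single periodic orbit. Applying this to $\omega(x_0)$ yields two mutually exclusive cases. Either $\omega(x_0)$ contains an equilibrium point $p$, whence $p\in K$ and $K$ contains an equilibrium; or $\omega(x_0)$ contains no equilibrium, whence it is a closed orbit $\Gamma\subseteq K$. In the latter case $\Gamma$ is the $\omega$-limit set of the trajectory through $x_0$; if this trajectory is not itself periodic, $\Gamma$ is approached by a distinct orbit and is therefore a genuine limit cycle, while if the trajectory is periodic then $\Gamma$ is simply that periodic orbit lying in $K$. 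In either branch $K$ contains a limit cycle or an equilibrium point, as claimed.

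The main obstacle is not the bookkeeping above but the Poincar\'e--Bendixson structure theorem on which it rests. Its justification requires genuinely two-dimensional tools: the Jordan curve theorem together with the monotonicity of successive crossings of an orbit through a transversal segment (the transversal lemma, proved via a flow-box argument), which together rule out any recurrent behavior other than closed orbits. I would either cite this theorem from \cite{hirsch1974differential} directly, as the statement itself signals, or reconstruct the transversal argument, showing that the intersections of a single orbit with a transversal form a monotone sequence, that an $\omega$-limit point on the transversal is therefore approached monotonically, and that a non-equilibrium limit set must consequently close up into exactly one periodic orbit. A secondary point worth checking carefully is the completeness and regularity of the field on $K$, so that the limit-set machinery applies; this is immediate here since $K\subseteq\mathbb{R}^2$ is compact and the vector field is $C^1$ on it.
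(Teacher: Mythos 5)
The paper does not prove this proposition at all --- it is quoted directly from the cited reference \cite{hirsch1974differential} --- and your argument is exactly the standard proof given in that source: pass to the $\omega$-limit set of an arbitrary point of the compact positively invariant set (or the $\alpha$-limit set after time reversal in the negatively invariant case), note that it is nonempty, compact, and contained in the set, and apply the Poincar\'e--Bendixson dichotomy. Your proof is correct; the only caveat, which you yourself flag, is that when the chosen orbit is itself periodic the closed orbit you obtain need not be a limit cycle in the strict ``limit set of a distinct trajectory'' sense, but this looseness is inherited from the proposition's own wording (the textbook statement has the same feature) rather than being a defect of your argument.
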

Define the set of possible initial conditions for fixed $\boldsymbol{\kappa}$ as $\bar{\mathcal{R}}_0(\boldsymbol{\kappa})=\{(\bar{u}_0(0),\bar{v}_0(0))|\bar{u}_0(0)=\boldsymbol{\kappa}\cos(\psi_0(0))-r_0^{\star}(\boldsymbol{\kappa})\cos(\psi_0^{\star}(\boldsymbol{\kappa}))~ \text{and}~ \bar{v}_0(0)=\boldsymbol{\kappa}\sin(\psi_0(0))-r_0^{\star}(\boldsymbol{\kappa})\sin(\psi_0^{\star}(\boldsymbol{\kappa})), \psi_0(0)\in(\frac{-\pi}{2},\frac{\pi}{2})\}$ and define $\bar{\mathcal{R}}(\boldsymbol{\kappa})=\bar{\mathcal{R}}_0(\boldsymbol{\kappa})\cap\bar{\Omega}_0(\boldsymbol{\kappa})$. Then the following result holds.
\begin{lem}
    For fixed $\boldsymbol{\kappa}<\frac{\ln(2k_1^2R^2)}{2k_1}$, any trajectory of \eqref{eq:spiral_ROA_shifted_equation_farthest} starting from $\bar{\mathcal{R}}(\boldsymbol{\kappa})$, converges asymptotically to the origin.
\end{lem}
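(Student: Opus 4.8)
The plan is to convert the boundary Lyapunov inequality \eqref{eq:lyapunoc_check} into positive invariance of $\bar{\Omega}_0(\boldsymbol{\kappa})$, and then to collapse the planar Poincar\'e--Bendixson dichotomy of Proposition~\ref{proposition} onto the origin using Theorems~\ref{lem:unieq eql point} and~\ref{thm:For-the-single}.

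First I would establish positive invariance. Writing $\dot{V} = f^{T}(\boldsymbol{\kappa},w_0)\,\partial V/\partial w$ with $\partial V/\partial w = 2A_0^{\star}w_0$ the outward normal to the level set $\{V=1\}$, the optimizer property \eqref{eq:lyapunoc_check} gives $\dot{V}<0$ on the whole boundary $\partial\bar{\Omega}_0(\boldsymbol{\kappa})=\{V=1\}$. A trajectory could only leave $\bar{\Omega}_0(\boldsymbol{\kappa})$ by crossing this boundary with $\dot V\ge 0$ at the crossing point, which contradicts $\dot V<0$ there; hence $\bar{\Omega}_0(\boldsymbol{\kappa})$ is positively invariant even though no sign of $\dot V$ is assumed in its interior. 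Since $\bar{\mathcal{R}}(\boldsymbol{\kappa})=\bar{\mathcal{R}}_0(\boldsymbol{\kappa})\cap\bar{\Omega}_0(\boldsymbol{\kappa})\subseteq\bar{\Omega}_0(\boldsymbol{\kappa})$, every trajectory starting in $\bar{\mathcal{R}}(\boldsymbol{\kappa})$ stays in the compact set $\bar{\Omega}_0(\boldsymbol{\kappa})$ for all $t\ge 0$, so it is bounded and possesses a nonempty, compact, connected, invariant $\omega$-limit set $L\subseteq\bar{\Omega}_0(\boldsymbol{\kappa})$.

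Next I would apply Proposition~\ref{proposition} to $L$: as a compact invariant planar set, $L$ is a single equilibrium, a periodic orbit, or a union of equilibria and their connecting orbits. By Theorem~\ref{lem:unieq eql point} the shifted equilibrium (the origin) is the only equilibrium in $\bar{\Omega}_0(\boldsymbol{\kappa})$, and by Theorem~\ref{thm:For-the-single} it is asymptotically stable. If $L$ contains the origin, then because the trajectory accumulates there it eventually enters the (open) basin of attraction and converges, forcing $L=\{0\}$; this simultaneously disposes of the connecting-orbit case, since an asymptotically stable equilibrium has no unstable manifold and hence admits no homoclinic loop.

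The hard part will be excluding the one remaining alternative, namely that $L$ is a periodic orbit $\Gamma$ not meeting the origin. By the Poincar\'e index theorem such a $\Gamma$ must enclose an equilibrium, which can only be the origin, and mere uniqueness together with local asymptotic stability of that equilibrium does not by itself rule out a surrounding limit cycle. I would close this gap in one of two ways. The cleaner route is to upgrade the verification so that $\dot V<0$ holds throughout $\bar{\Omega}_0(\boldsymbol{\kappa})\setminus\{0\}$ rather than only on the boundary; then $V$ strictly decreases along every nontrivial solution, a closed orbit (which would return $V$ to its starting value) is impossible, and LaSalle's invariance principle yields $L=\{0\}$ directly. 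Failing that, I would seek a Bendixson--Dulac multiplier rendering the weighted divergence of the right-hand side of \eqref{eq:spiral_ROA_shifted_equation_farthest} sign-definite on $\bar{\Omega}_0(\boldsymbol{\kappa})$, which precludes periodic orbits outright. Either device removes the limit-cycle case, leaving $L=\{0\}$, so that every trajectory originating in $\bar{\mathcal{R}}(\boldsymbol{\kappa})$ converges asymptotically to the origin.
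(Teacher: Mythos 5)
Your invariance step and your appeal to Proposition \ref{proposition} reproduce the paper's proof exactly; indeed, the paper's entire proof consists of those two sentences (boundary negativity of $\dot{V}$ gives positive invariance of $\bar{\Omega}_0(\boldsymbol{\kappa})$, then Theorem \ref{thm:For-the-single} and Proposition \ref{proposition} are cited and the conclusion is asserted). The real content of your write-up is that you explicitly name the step that neither you nor the paper completes: excluding a periodic orbit. Proposition \ref{proposition} only guarantees that the compact positively invariant set $\bar{\Omega}_0(\boldsymbol{\kappa})$ \emph{contains} a limit cycle or an equilibrium point; combined with Theorems \ref{lem:unieq eql point} and \ref{thm:For-the-single} this shows the set contains the unique, locally asymptotically stable origin, but it does not force trajectories starting in $\bar{\mathcal{R}}(\boldsymbol{\kappa})$ to converge there. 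A periodic orbit inside $\bar{\Omega}_0(\boldsymbol{\kappa})$ encircling the origin is consistent with everything that has actually been verified (negativity of $\dot{V}$ on the boundary only, uniqueness of the equilibrium, its local stability), and if such an orbit existed, trajectories starting outside it would never reach the origin. So your diagnosis is correct and not over-cautious: the gap you isolate is present in the paper's own proof.

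That said, your proposal does not close the gap either — both remedies are left as intentions. Note also that your first remedy is not a mere ``upgrade of the verification'': Problem \ref{optimation_problem} constrains $f^{T}(\boldsymbol{\kappa},w_0)A_{0}w_0<0$ only on $\partial\mathcal{A}$, and maximizing the area of the ellipse actively works against interior negativity, so requiring $\dot{V}<0$ on all of $\bar{\Omega}_0(\boldsymbol{\kappa})\setminus\{0\}$ is a genuinely new condition that must be checked for $A_0^{\star}$ (or the set shrunk until it holds). If it does hold, strict decrease of $V$ along solutions is incompatible with a closed orbit and LaSalle finishes the argument — this would repair both your proof and the paper's; the Bendixson--Dulac alternative is likewise sound in principle but unexecuted. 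As written, your submission is an accurate reconstruction of the paper's argument together with a correct identification of its unproved case, but it is not yet a complete proof.
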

\begin{proof}
From \eqref{eq:lyapunoc_check}, the set \( \bar{\Omega}_0(\boldsymbol{\kappa}) \) is a positively invariant set. From Theorem 6, a unique asymptotically stable equilibrium point exists. Then, from Proposition \ref{proposition}, the proof follows.
\end{proof} 
Clearly, the area of $\bar{\Omega}_0(\boldsymbol{\kappa})$, contained in the closed curve \( w_0^T A_0 w_0 = 1 \), is proportional to $\frac{1}{\det(A_0)}$. Inspired by \cite{davison1971computational}, we now pose the following optimization problem for the maximal \( A_0 \).
\begin{problem}\label{optimation_problem}
    $\underset{A_{0} \in \mathbb{R}^{2\times2}}{\text{min}}  \underset{i=1}{\overset{2}{\prod}}\lambda_i(A_{0})~~\text{such that}$
 \begin{enumerate}
    \item $\lambda_{i}(A_{0})>0,~\forall i=1,2$
    \item $f^{T}(\boldsymbol{\kappa}, w_0)A_{0}w_0<0,\forall w_0\in\partial\mathcal{A}$
\end{enumerate}
\end{problem}
Unlike in \cite{davison1971computational}, we use a modern constrained optimization routine (e.g., fmincon in MATLAB \cite{fmincon_matlab}) to obtain solutions to Problem \ref{optimation_problem}. The initial guess for the $A_{0}$ matrix is taken to be the normalized solution of the Lyapunov equation $A_{0}J+J^{T}A_{0}=-\mathbf{I}$, where $J$ is the Jacobian of \eqref{eq:spiral_ROA_shifted_equation_farthest} evaluated at the stable equilibrium point $(\bar{u}_0=0,\bar{v}_0=0)$. For notational convenience, let the sets $\bar{\Omega}_{0}(\boldsymbol{\kappa})$ and $\bar{\mathcal{R}}(\boldsymbol{\kappa})$ are shifted back to the $\{u,v\}$ frame, be denoted by $\Omega_{0}(\boldsymbol{\kappa})$ and $\mathcal{R}(\boldsymbol{\kappa})$.\\

We numerically solve the above optimization for different $\boldsymbol{\kappa}$ values, with fixed  $k_1=1, \omega=2, R=2$, and $k=1$. First, we set $\boldsymbol{\kappa}=1$, and $\Omega_{0}(1)$ is shown as hatched region in Fig. \ref{fig:singh22}. Clearly, $\mathcal{R}(\boldsymbol{\kappa})=\{(r,\psi)| r=\boldsymbol{\kappa}\}$. In particular, for $\boldsymbol{\kappa}=1$, the set $\mathcal{R}(1)=\{(r,\psi)| r=1\}$ (shown as the black circle in Fig. \ref{fig:singh22}) is fully contained in $\Omega_{0}(1)$, i.e., $\mathcal{R}(1)\subset\Omega_{0}(1)$. This implies that any trajectory starting from $\boldsymbol{\kappa}=1$ with an arbitrary $\psi_0(0)$ converges to the asymptotically stable equilibrium point $(r^{\star}(1),\psi^{\star}(1))$ as illustrated in Fig. \ref{fig:singh22}. Recall that $\boldsymbol{\kappa}<\frac{\ln(2k_1^2R^2)}{2k_1}$ is required to guarantee asymptotic stability. Now define $c=\frac{\ln(2k_1^2R^2)}{2k_1}$. The experiment is repeated for a fine grid of values of $\boldsymbol{\kappa}\in[0,c]$.  
Thus an estimate of stable region $S$ (for $k_1=1, \omega=2, R=2, k=1$) can be obtained as: 
\begin{align*}
 \hat{S} = \bigcup\limits_{\boldsymbol{\kappa}\in[0,c]} \mathcal{R}(\boldsymbol{\kappa}),   
\end{align*}
which is drawn as a light blue shaded disk in Fig. \ref{fig:singh22}. Similar results can be obtained for other admissible values of the parameters $\{k, k_1, R, \omega\}$ satisfying the conditions $2k_1^2R^2>1$ and $\boldsymbol{\kappa}<\frac{2k_1^2R^2}{2k_1}$. However, they are not included here due to space constraints.
\begin{figure}[H]
    \centering
        \centering
        \includegraphics[scale=0.3]{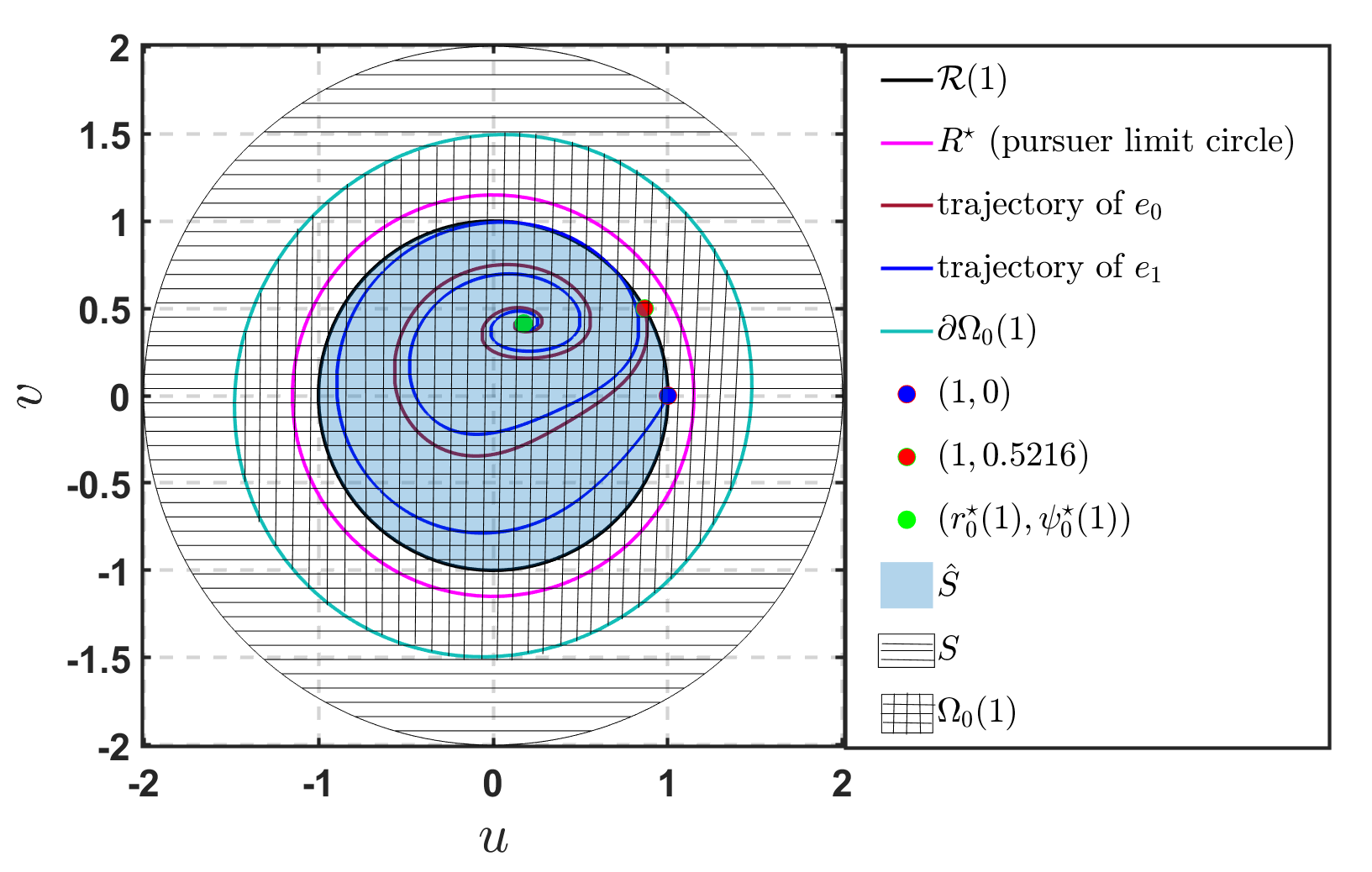}
        \caption{Estimation of $S$ in $(u,v)$-frame for $\boldsymbol{\kappa}=1, R^{\star}=1.15, \omega=2, k=1, k_1=1, R=2$. Two separate trajectories starting from $(1,0)$ and $(1,0.5216)$ converge to $(r_0^{\star}(1),\psi_0^{\star}(1))$.}
        \label{fig:singh22}
\end{figure}

\begin{rem}
    Note that the actual stability region $S$ is significantly larger than $\hat{S}$ shown in Fig. \ref{fig:singh22}. However, due to the multiplicity of possible equilibrium points and the lack of formal stability guarantees, we currently lack an efficient method to rigorously validate the effectiveness of our algorithm beyond $\hat{S}$. Nevertheless, for this specific example, we illustrate the actual region of stability in Fig. \ref{fig:singh22} through brute force computation.
\end{rem}
\subsection{Translation of Results to $\{x, y\}/\{r, \phi\}$-frame}\label{sec:Translation of Results}
The previous results almost entirely dealt with the $\{u,v\}/\{r,\psi\}$ frame, mainly due to the time-invariance of the system equations in this frame. However, we need to interpret all the results in the $\{x,y\}$/$\{r,\phi\}$ frame for physical understanding of the solution. In all the trajectories below, we use the fact that the solutions $r_0(t)$ remain the same in both $\{u,v\}$ and $\{x,y\}$ frames. On the other hand, any $\psi_0$ in the $\{u,v\}$ frame translates to $\phi_0=\psi_0+\omega t$ in the $\{x,y\}$ frame. Hence the point ($r_{0}^{\star},\psi_{0}^{\star}$) corresponds to the circular trajectory defined by radius $r_{0}^{\star}$ and the angle evolving according to $\phi_{0}^{\star}:=\psi_{0}^{\star}+\omega t$.  
    \begin{thm}
        Consider the system described in (\ref{eq:actual_system}). Assume $2k_{1}^{2}R^{2}>1$ and $\boldsymbol{\kappa}$ is fixed such that $\boldsymbol{\kappa}<\frac{\ln(2k_{1}^{2}R^{2})}{2k_{1}}$, then (\ref{eq:actual_system}) has a unique asymptotically stable limit cycle described by the trajectory $L_0:=\{r_{0}^{\star}=r_{0}^{\star}, \phi_{0}^{\star}=\psi_{0}^{\star}+\omega t,~\forall t\geq 0\}$, $\forall k,\omega>0$.
        \label{thm:eql set theorem}
    \end{thm}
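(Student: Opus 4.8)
The plan is to transfer the two facts already established for the autonomous system \eqref{eq:transferred_dynamics} --- uniqueness of the equilibrium $(r_0^\star,\psi_0^\star)$ (Theorem \ref{lem:unieq eql point}) and its asymptotic stability (Theorem \ref{thm:For-the-single}) --- into statements about a periodic orbit of the time-varying system \eqref{eq:actual_system}. The bridge is the coordinate change $\psi_0=\phi_0-\omega t$ that was used to pass from \eqref{eq:actual_system} to \eqref{eq:transferred_dynamics}; in Cartesian terms this is the rigid rotation $\tilde w=Q(\omega t)w$ between the $\{x,y\}$ and $\{u,v\}$ frames, where $Q(\omega t)$ denotes the planar rotation by angle $\omega t$. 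The entire argument rests on exploiting the structure of this transformation rather than on any new computation.

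First I would verify that the constant solution $r_0(t)\equiv r_0^\star$, $\psi_0(t)\equiv\psi_0^\star$ of \eqref{eq:transferred_dynamics} maps, under $\phi_0=\psi_0+\omega t$, to the solution $r_0(t)\equiv r_0^\star$, $\phi_0(t)=\psi_0^\star+\omega t$ of \eqref{eq:actual_system}. This traces the circle of radius $r_0^\star$ at constant angular velocity $\omega$, i.e. a closed orbit of period $2\pi/\omega$; this is exactly $L_0$. Since equilibria of \eqref{eq:transferred_dynamics} are in one-to-one correspondence with such uniformly rotating circular orbits of \eqref{eq:actual_system}, uniqueness of the equilibrium (Theorem \ref{lem:unieq eql point}) immediately gives uniqueness of $L_0$.

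For asymptotic stability I would use the fact that $Q(\omega t)$ is an isometry of $\mathbb{R}^2$ for every fixed $t$. Let $w(t)$ be a solution of the $\{u,v\}$-system \eqref{kinematic_uv} with $w(t)\to w_0^\star:=(u_0^\star,v_0^\star)$, which exists by Theorem \ref{thm:For-the-single}. The associated solution of \eqref{eq:actual_system} is $\tilde w(t)=Q(\omega t)w(t)$, and the limit cycle is the curve $t\mapsto Q(\omega t)w_0^\star$. Then
\[
\big\|\tilde w(t)-Q(\omega t)w_0^\star\big\|=\big\|Q(\omega t)\big(w(t)-w_0^\star\big)\big\|=\big\|w(t)-w_0^\star\big\|\longrightarrow 0,
\]
so $\tilde w(t)$ converges to $L_0$ (indeed to the matching phase point on it), which is precisely asymptotic stability of the limit cycle. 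The same isometry also carries the region of attraction of the equilibrium onto that of $L_0$ at each instant, so the stability region obtained in Section \ref{section3c} transfers verbatim.

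The main obstacle --- really a point to state carefully rather than a hard computation --- is the conceptual one that asymptotic stability of a point in the rotating frame yields asymptotic stability of a limit cycle (not an equilibrium) in the original frame. What makes this clean is precisely that the frame change is a time-dependent rotation and hence norm-preserving, so no Gronwall-type estimate or Floquet analysis is needed: the convergence is literally inherited from the autonomous system. I would take care to distinguish the resulting (isochronous) convergence to the specific periodic solution from the weaker notion of mere orbital stability, and to note that the claim holds for all $k,\omega>0$ because Theorems \ref{lem:unieq eql point} and \ref{thm:For-the-single} already hold for all such parameters.
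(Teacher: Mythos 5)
Your proposal is correct and takes essentially the same route as the paper: the paper's proof of this theorem is precisely the one-line deduction from Theorem \ref{lem:unieq eql point} (uniqueness of $(r_0^{\star},\psi_0^{\star})$) and Theorem \ref{thm:For-the-single} (its asymptotic stability), transferred to the $\{x,y\}/\{r,\phi\}$-frame via $\phi_0=\psi_0+\omega t$. The only difference is that you make explicit the bridging argument --- that the frame change is a time-dependent rotation, hence an isometry, so convergence to the equilibrium in the rotating frame is literally convergence to the rotating circular orbit $L_0$ --- which the paper leaves implicit.
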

    \begin{proof}
    The proof follows directly from Theorem \ref{lem:unieq eql point} and Theorem \ref{thm:For-the-single}. 
    \end{proof}

\section{Herding of Multiple Evader}\label{sec:Herding-of-multiple}
In this section, we collectively analyze the behavior of \eqref{eq:transferred_dynamics}, for $ i=\{0,1,2,\dots,n-1\}$.
\subsection{Equilibrium Points}
To determine the equilibrium points of \eqref{eq:transferred_dynamics}, we set $\dot{r}_i=0$ and $\dot{\psi}_i=0$, $\forall i \in \{0,1,2,\dots,n-1\}$.
Denote $(r_i^{\star}, \psi_i^{\star})$ as the equilibrium points. For each evader $i$, the equilibrium conditions yield:
\begin{subequations}\label{eq:multiple_eql_condition}
    \begin{align}
        r_{i}^{\star}&=Re^{k_{1}(r_{0}^{\star}-\boldsymbol{\kappa})}\cos\left(\psi_{i}^{\star}\right),\label{eq:eql_condition_a}\\
\cos\left(\psi_{i}^{\star}\right)\sin^{2}\left(\psi_{i}^{\star}\right)&=\frac{k}{\omega R^{3}e^{3k_{1}(r_{0}^{\star}-\boldsymbol{\kappa})}}.\label{eq:eql_condition_b}
    \end{align}
\end{subequations}
\begin{thm}\label{lem:rstar}
     If $2k_1^2R^2>1$ and $\boldsymbol{\kappa}$ is fixed such that $\boldsymbol{\kappa}<\frac{\ln(2k_1^2R^2)}{2k_1}$, then \eqref{eq:transferred_dynamics} has a unique equilibrium point defined by $r_i^{\star}=r^{\star}, \psi_i^{\star}=\psi^{\star},  \forall i={0,\dots,n-1}, \forall k,\omega>0$. Moreover, $r^{\star}$ is the unique positive real root of
    \begin{align*}
        r^{\star3}-R^{2}e^{2k_1(r_0^{\star}-\boldsymbol{\kappa})}r^{\star}+\frac{k}{\omega}=0~~~ \text{and}\\
        \psi^{\star}=\cos^{-1}\left(\frac{r^{\star}}{Re^{k_1(r_0^{\star}-\boldsymbol{\kappa})}}\right)>0.
    \end{align*}
\end{thm}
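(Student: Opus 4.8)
The plan is to exploit the one-way coupling already highlighted in the excerpt: the radial and angular equations for evader $0$ in \eqref{eq:transferred_dynamics} involve only $r_0$ and $\psi_0$, so evader $0$ evolves autonomously. I would first invoke Theorem \ref{lem:unieq eql point} and Theorem \ref{thm:For-the-single} to fix, under the hypotheses $2k_1^2R^2>1$ and $\boldsymbol{\kappa}<\frac{\ln(2k_1^2R^2)}{2k_1}$, the unique and asymptotically stable equilibrium $(r_0^\star,\psi_0^\star)$ of evader $0$, with $r_0^\star$ the unique positive root of \eqref{eq:cubic_equation} and $\psi_0^\star\in(0,\tfrac{\pi}{2})$. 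By Lemma \ref{lem:pursuer_radius} this freezes the pursuer radius at the constant $R^\star:=Re^{k_1(r_0^\star-\boldsymbol{\kappa})}$.

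Next I would observe that for every remaining evader $i\in\{1,\dots,n-1\}$ the equilibrium equations \eqref{eq:eql_condition_a}--\eqref{eq:eql_condition_b} contain the exponential only through $r_0^\star$, hence through the same constant $R^\star$. Consequently, once evader $0$ is at equilibrium, the pair \eqref{eq:eql_condition_a}--\eqref{eq:eql_condition_b} for evader $i$ is structurally identical to the pair \eqref{eq:r_eql}--\eqref{condition_for_eql} that defined the equilibrium of evader $0$, with the now-frozen $R^\star$ playing the role of the pursuer radius. Eliminating $\psi_i^\star$ between the two reproduces exactly the cubic stated in the theorem, $r_i^{\star 3}-R^{\star2}r_i^\star+\tfrac{k}{\omega}=0$, while the computation in \eqref{eq:cos_psi_star} gives $\cos\psi_i^\star=r_i^\star/R^\star>0$.

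The heart of the argument is then to show that this cubic admits a unique admissible positive root, which must therefore coincide with $r_0^\star$, yielding $r_i^\star=r_0^\star=:r^\star$ and $\psi_i^\star=\psi_0^\star=:\psi^\star$ for all $i$. Here I would transfer the uniqueness of Lemma \ref{lem:one root existance}: $r_0^\star$ already solves the cubic, and the sign condition $1-2k_1^2R^2e^{-2k_1\boldsymbol{\kappa}}<0$ continues to hold under the hypotheses, so the admissible positive root is forced to be unique. To pin down the correct branch I would reuse the estimate from Theorem \ref{lem:unieq eql point}, namely $\cos\psi^\star<\sqrt2/e<1/\sqrt3$, which places $r^\star$ strictly below $R^\star/\sqrt3$. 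Finally, $\psi^\star>0$ follows verbatim from the $\{u,v\}$-frame sign argument in Theorem \ref{lem:unieq eql point}: the left-hand side of the $\dot u_i=0$ relation is negative because $r^\star<R^\star$, forcing $v_i^\star>0$ and hence $\psi_i^\star\in(0,\tfrac{\pi}{2})$.

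The step I expect to be the main obstacle is precisely the uniqueness of the positive root of this frozen-coefficient cubic. Unlike the self-referential equation \eqref{eq:cubic_equation} for evader $0$ --- whose single sign change in the power-series expansion \eqref{lem2:c1} delivered uniqueness directly --- the cubic for $i\neq0$ is a genuine degree-three polynomial with the exponential held constant, for which Descartes' rule alone permits two positive roots; indeed, since $r^\star<R^\star/\sqrt3$ lies to the left of the cubic's positive local minimum, a second, larger root in $(R^\star/\sqrt3,R^\star)$ is not excluded a priori. The delicate part of the proof is therefore to certify that this second root is inadmissible for the herded configuration --- either by showing it violates the branch selection $\cos\psi^\star<1/\sqrt3$ inherited from evader $0$, or by the stability screening that singles out the small-radius branch as the only attracting equilibrium --- so that the equilibrium shared by every evader is the common point $(r^\star,\psi^\star)$.
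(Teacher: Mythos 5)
Your proposal mirrors the paper's decomposition --- freeze evader $0$ at its unique self-referential equilibrium so that $R^{\star}=Re^{k_1(r_0^{\star}-\boldsymbol{\kappa})}$ becomes a constant, then reduce every other evader's equilibrium conditions \eqref{eq:eql_condition_a}--\eqref{eq:eql_condition_b} to the frozen cubic $r_i^{\star3}-R^{\star2}r_i^{\star}+k/\omega=0$ --- but it stops exactly at the decisive step, and the obstacle you flag is not merely delicate: it is genuine and, as the statement is written, unfixable. Since $r_0^{\star}=R^{\star}\cos\psi_0^{\star}<R^{\star}\sqrt{2}/e<R^{\star}/\sqrt{3}$ lies strictly to the left of the cubic's positive local minimum, and the cubic is positive at $r=0$ and at $r=R^{\star}$, a second positive root $r'=\tfrac{1}{2}\bigl(-r_0^{\star}+\sqrt{4R^{\star2}-3r_0^{\star2}}\bigr)\in(R^{\star}/\sqrt{3},R^{\star})$ always exists; and one checks directly that $\bigl(r',\cos^{-1}(r'/R^{\star})\bigr)$ satisfies both \eqref{eq:eql_condition_a} and \eqref{eq:eql_condition_b} (for any positive root of the cubic below $R^{\star}$ the angular condition holds automatically, since the cubic gives $r'(R^{\star2}-r'^{2})=k/\omega$), so it is a bona fide equilibrium for each evader $i\neq0$. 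Neither of your two suggested patches closes this gap: the bound $\cos\psi^{\star}<\sqrt{2}/e$ came from the self-referential relation $\cos\psi_0^{\star}=r_0^{\star}/(Re^{k_1(r_0^{\star}-\boldsymbol{\kappa})})$, which is special to evader $0$ and is not inherited by evaders whose exponent contains $r_0^{\star}$ rather than their own radius; and stability screening would only show that the coincident point is the unique \emph{asymptotically stable} equilibrium, which is weaker than the uniqueness of the equilibrium point (and of the positive root of the cubic) that the theorem asserts.

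You should also know that the paper's own proof closes this step only through an algebra error. It subtracts \eqref{eq:cubic_equation} from the frozen cubic and factors the difference as $(r_i^{\star}-r_0^{\star})(r_i^{\star2}+r_0^{\star2}+r_i^{\star}r_0^{\star}+R^{2}e^{2k_1(r_0^{\star}-\boldsymbol{\kappa})})=0$, concluding $r_i^{\star}=r_0^{\star}$ because the second factor is positive. The correct factorization carries a minus sign, $(r_i^{\star}-r_0^{\star})(r_i^{\star2}+r_i^{\star}r_0^{\star}+r_0^{\star2}-R^{2}e^{2k_1(r_0^{\star}-\boldsymbol{\kappa})})=0$, and its second factor vanishes precisely at the root $r'$ above, so the desired conclusion does not follow. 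In short: your plan is incomplete at the step you yourself identified, but your diagnosis is sharper than the published argument; the multi-evader system genuinely possesses the mixed equilibria you feared (each evader $i\neq0$ may independently sit at $r^{\star}$ or at $r'$), and the result can only be salvaged in a weaker form --- e.g.\ uniqueness of the equilibrium at which all evaders coincide, or uniqueness of the asymptotically stable equilibrium, the latter requiring a saddle-type analysis of the large-radius branch analogous to the $k_1=0$ case in Theorem \ref{thm:stability}.
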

\begin{proof}
    Eliminating $\psi_i^{\star}$ from \eqref{eq:multiple_eql_condition}, we derive:
    \begin{align}\label{thm14:c1}
         r_i^{\star3}-R^{2}e^{2k_1(r_0^{\star}-\boldsymbol{\kappa})}r_i^{\star}+\frac{k}{\omega}=0.
    \end{align}
    Under the assumptions of this theorem, the existence and uniqueness of the root of \eqref{thm14:c1} has been established already for $i=0$ (see Section \ref{sec:Herding-of-single} and Theorem \ref{lem:unieq eql point}). This root was denoted by $r_0^{\star}$.
    Subtracting \eqref{eq:cubic_equation} from \eqref{thm14:c1} gives:
    \begin{align*}
        r_i^{\star3}-r_0^{\star3}+R^2e^{2k_1(r_0^{\star}-\boldsymbol{\kappa})}(r_0^{\star}-r_i^{\star})=0.
    \end{align*}
    Simplifying, we obtain:
    \begin{align}\label{eq:thm14 c1}
        (r_i^{\star}-r_0^{\star})(r_i^{\star2}+r_0^{\star2}+r_i^{\star}r_0^{\star}+R^2e^{2k_1(r_0^{\star}-\boldsymbol{\kappa})})=0.
    \end{align}
    Since $r_i^{\star2}+r_0^{\star2}+r_i^{\star}r_0^{\star}+R^2e^{2k_1(r_0^{\star}-\boldsymbol{\kappa})}>0$, it follows from \eqref{eq:thm14 c1} that 
    \begin{align}\label{eq:common_eql}
    r_i^{\star}=r_0^{\star}=:r^{\star},~ \forall i\in\{1,2,\dots,n-1\}.
    \end{align}
    From \eqref{eq:common_eql}, \eqref{eq:cubic_equation} and Theorem \ref{lem:unieq eql point}, the uniqueness and positivity of  $\psi_{i}^{\star}$ follow, thereby allowing us to write:
     \begin{align*}
    \psi_i^{\star}=\psi_0^{\star}=:\psi^{\star}>0,~ \forall i\in\{0,1,2,\dots,n-1\}.
    \end{align*}
    
\end{proof}
In the following result, we analyze the effect of increasing $\omega$ on the location of the equilibrium points.
\begin{lem}\label{lem:r_limit_multiple}
For system \eqref{eq:transferred_dynamics}, under the assumptions of Theorem \ref{lem:rstar} and the equilibrium radius $r^{\star}$ defined as the solution of \eqref{eq:cubic_equation}, satisfies $r^{\star}\rightarrow 0$ as $\omega\rightarrow\infty$.
\end{lem}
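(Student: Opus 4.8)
The plan is to recast the defining cubic as a scalar level-set equation and show that, as $\omega\to\infty$, its unique positive root is forced to the origin. By Theorem \ref{lem:rstar} we may set $r_0^\star=r^\star$, so that \eqref{eq:cubic_equation} reads $q(r^\star)=\frac{k}{\omega}$, where I define
\begin{equation*}
q(r):=r\bigl(R^2e^{2k_1(r-\boldsymbol{\kappa})}-r^2\bigr)=r\,h(r)\bigl(Re^{k_1(r-\boldsymbol{\kappa})}+r\bigr),\qquad h(r):=Re^{k_1(r-\boldsymbol{\kappa})}-r.
\end{equation*}
Since $\frac{k}{\omega}\to 0^+$ as $\omega\to\infty$, it suffices to prove $r^\star\to 0$. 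I would argue by contradiction: suppose there exist $\epsilon>0$ and a sequence $\omega_n\to\infty$ with $r^\star(\omega_n)\ge\epsilon$ for all $n$. Note that the pointwise inequality $r^\star<R^\star$ from \eqref{eq:r0<R_star}, i.e.\ $h(r^\star)>0$, is by itself insufficient, because it need not survive to the limit as a strict inequality; this is precisely why I will need a \emph{global} positivity statement for $h$.

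The essential step is to show that $q$ is bounded below by a strictly positive constant on $[\epsilon,\infty)$, which reduces to proving $h(r)>0$ for all $r\ge 0$. I would locate the minimizer of $h$ from $h'(r)=Rk_1e^{k_1(r-\boldsymbol{\kappa})}-1=0$, giving $r_m=\boldsymbol{\kappa}-\frac{\ln(Rk_1)}{k_1}$ with $h(r_m)=\frac{1+\ln(Rk_1)}{k_1}-\boldsymbol{\kappa}$ (and, when $r_m<0$, $h$ is increasing on $[0,\infty)$ with $h(0)=Re^{-k_1\boldsymbol{\kappa}}>0$, so positivity is immediate). The main obstacle is the threshold comparison: I must verify that the standing hypothesis $\boldsymbol{\kappa}<\frac{\ln(2k_1^2R^2)}{2k_1}$ implies $\boldsymbol{\kappa}<\frac{1+\ln(Rk_1)}{k_1}$, and hence $h(r_m)>0$. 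Writing both bounds over $2k_1$, their difference is $\frac{2-\ln 2}{2k_1}>0$, so the hypothesis is strictly stronger and forces $h>0$ everywhere on $[0,\infty)$; consequently $q(r)>0$ on $(0,\infty)$.

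Finally I would close the argument. Because $q$ is continuous and strictly positive on $[\epsilon,\infty)$ with $q(r)\to+\infty$ as $r\to\infty$, its infimum over $[\epsilon,\infty)$ is attained at some finite point and equals a constant $m>0$; this single bound simultaneously rules out a subsequence trapped in a compact $[\epsilon,M]$ and an unbounded one. But $q(r^\star(\omega_n))=\frac{k}{\omega_n}\to 0<m$ for large $n$, contradicting $r^\star(\omega_n)\ge\epsilon$. Hence no such $\epsilon$ exists and $r^\star\to 0$ as $\omega\to\infty$. I expect the only delicate points to be (i) the threshold inequality above and (ii) obtaining a uniform positive lower bound on $q$ while permitting $r^\star(\omega_n)$ to be unbounded; both are resolved once $h>0$ is secured, which is why the hypothesis $\boldsymbol{\kappa}<\frac{\ln(2k_1^2R^2)}{2k_1}$ is the indispensable ingredient.
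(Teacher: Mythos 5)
Your proposal is correct, and it reaches the same mathematical crux as the paper but by a genuinely different route. The paper argues by formally setting $k/\omega=0$ in \eqref{eq:cubic_equation}, factoring the limiting equation as $r^{\star}\bigl(r^{\star2}-R^{2}e^{2k_{1}(r^{\star}-\boldsymbol{\kappa})}\bigr)=0$, and then ruling out positive roots of the second factor by expanding the exponential as a power series and invoking Descartes' rule of signs: under $\boldsymbol{\kappa}<\frac{\ln(2k_{1}^{2}R^{2})}{2k_{1}}$ the coefficient of $r^{\star2}$ is negative, so there are no sign changes and hence no positive roots. Your key positivity statement $h(r)=Re^{k_{1}(r-\boldsymbol{\kappa})}-r>0$ on $[0,\infty)$ is exactly equivalent to the paper's ``no positive roots'' claim (since $R^{2}e^{2k_{1}(r-\boldsymbol{\kappa})}-r^{2}=h(r)\bigl(Re^{k_{1}(r-\boldsymbol{\kappa})}+r\bigr)$), but you prove it by locating the minimizer of the convex function $h$ and checking the threshold inequality $\frac{1+\ln(Rk_{1})}{k_{1}}-\frac{\ln(2k_{1}^{2}R^{2})}{2k_{1}}=\frac{2-\ln 2}{2k_{1}}>0$, which is a clean calculus argument that avoids both the series expansion and the application of Descartes' rule to an infinite series. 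More significantly, your contradiction argument with the uniform lower bound $\inf_{[\epsilon,\infty)}q=m>0$ rigorously justifies the limit passage itself: the paper's proof tacitly assumes that $r^{\star}(\omega)$ converges and that its limit satisfies the limiting equation, whereas your argument needs no such assumption and even handles root sequences that might escape to infinity. What the paper's proof buys is brevity and consistency with the technique already used in its Lemma \ref{lem:one root existance}; what yours buys is a self-contained and logically complete treatment of the $\omega\rightarrow\infty$ limit.
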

\begin{proof}
From \eqref{eq:cubic_equation}, we have:
\begin{equation*}
r^{\star3}-R^{2}e^{2k_1(r^{\star}-\boldsymbol{\kappa})}r^{\star}+\frac{k}{\omega}=0.    
\end{equation*}
As $\omega\rightarrow\infty$, we have $\frac{k}{\omega}\rightarrow0$
which implies:
\begin{equation*}
r^{\star3}-R^{2}e^{2k_1(r^{\star}-\boldsymbol{\kappa})}r^{\star}=0
\Longleftrightarrow r^{\star}(r^{\star2}-R^{2}e^{2k_1(r^{\star}-\boldsymbol{\kappa})})=0 
\end{equation*}
This implies: either $r^{\star}=0$ or $r^{\star2}-R^{2}e^{2k_1(r^{\star}-\boldsymbol{\kappa})}=0$.\\ Now, let:
\begin{equation*}
  q = r^{\star2}-R^{2}e^{2k_1(r^{\star}-\boldsymbol{\kappa})}.
\end{equation*}
Expanding $q$, we get:
\begin{equation*}
    q = r^{\star2}-R^2e^{-2k_{1}\boldsymbol{\kappa}}(2k_{1}r^{\star}+\frac{(2k_{1}r^{\star})^2}{2!}+\dots).
\end{equation*}
Since $\boldsymbol{\kappa}<\frac{\ln(2k_{1}^{2}R^{2})}{2k_{1}}$ by assumption the coefficient of $r^{\star2}$ is $1-2k_{1}^{2}R^2e^{-2k_{1}\boldsymbol{\kappa}}<0$, which implies that there are no sign changes in the coefficients. Hence, by \cite{curtiss1918recent}, there are no positive real roots for $r^{\star}$ that satisfy $q=0$.
\end{proof}

\subsection{Stability of the equilibrium point}
The asymptotic stability of the unique equilibrium point is established in the following result.
\begin{thm}\label{thm:multiple_spiral}
    Let $2k_{1}^{2}R^{2}>1$ and $\boldsymbol{\kappa}$ is fixed such that $\boldsymbol{\kappa}<\frac{\ln(2k_{1}^{2}R^{2})}{2k_{1}}$ and $\forall k,\omega>0$. Then, the unique equilibrium point of \eqref{eq:transferred_dynamics} defined by $(r^{\star},\psi^{\star})$ (see Theorem \ref{lem:rstar}) is asymptotically stable for all the evaders. 
\end{thm}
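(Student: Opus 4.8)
The plan is to exploit the triangular (cascade) structure of \eqref{eq:transferred_dynamics}. As already noted at the start of Section \ref{sec:Herding-of-single}, the pair $(r_0,\psi_0)$ evolves autonomously, while for every $i\in\{1,\dots,n-1\}$ the dynamics of $(r_i,\psi_i)$ depends on the remaining evaders \emph{only} through the scalar signal $r_0(t)$, entering via the common factor $Re^{k_1(r_0-\boldsymbol{\kappa})}$. Thus the full state obeys a cascade in which the $(r_0,\psi_0)$-subsystem drives $n-1$ mutually decoupled $(r_i,\psi_i)$-subsystems of identical form. I would therefore analyze a single representative pair $(r_0,\psi_0,r_i,\psi_i)$ and invoke the decoupled product structure to conclude for all $i$ at once.

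First, by Theorem \ref{thm:For-the-single} the driving subsystem converges, $(r_0(t),\psi_0(t))\to(r_0^{\star},\psi_0^{\star})$, and this equilibrium is (locally) asymptotically stable. Consequently $Re^{k_1(r_0(t)-\boldsymbol{\kappa})}\to R^{\star}:=Re^{k_1(r_0^{\star}-\boldsymbol{\kappa})}$ (Lemma \ref{lem:pursuer_radius}). Freezing the driving state at $r_0=r_0^{\star}$ turns the $i$-th subsystem into an autonomous planar system in $(r_i,\psi_i)$ with the \emph{constant} pursuer radius $R^{\star}$ in place of the state-dependent factor, and by Theorem \ref{lem:rstar} it retains the same equilibrium $(r^{\star},\psi^{\star})=(r_0^{\star},\psi_0^{\star})$.

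The second step is to establish local asymptotic stability of this frozen, constant-$R^{\star}$ subsystem at $(r^{\star},\psi^{\star})$. Being planar, it suffices to linearize and check that the Jacobian $J_i$ has negative trace and positive determinant. This computation mirrors the one carried out for the $i=0$ subsystem in Appendix A, but is in fact simpler: because $R^{\star}$ is now a constant rather than a function of the radial state, all the feedback terms proportional to $\partial/\partial r_0\,\bigl(Re^{k_1(r_0-\boldsymbol{\kappa})}\bigr)=k_1R^{\star}$ drop out, leaving only the ``direct'' part of the Jacobian. I would verify $\operatorname{tr}(J_i)<0$ and $\det(J_i)>0$ at $(r^{\star},\psi^{\star})$ by using the equilibrium identities \eqref{eq:eql_condition_a}--\eqref{eq:eql_condition_b} (in particular $r^{\star}=R^{\star}\cos\psi^{\star}$ and $\cos\psi^{\star}\sin^2\psi^{\star}=k/(\omega R^{\star3})$) to simplify the entries and fix their signs under the standing hypotheses $2k_1^2R^2>1$ and $\boldsymbol{\kappa}<\tfrac{\ln(2k_1^2R^2)}{2k_1}$.

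Finally, I would assemble the two facts through the cascade. The Jacobian of the whole system at the equilibrium is block lower-triangular, with diagonal blocks equal to the (Hurwitz) Jacobian of the $(r_0,\psi_0)$-subsystem and the $n-1$ identical (Hurwitz) frozen blocks $J_i$; its spectrum is the union of the spectra of these blocks and hence lies in the open left half-plane, so Lyapunov's indirect method yields asymptotic stability of the common equilibrium $(r^{\star},\psi^{\star})$ for all evaders. Equivalently one may quote the standard fact that a cascade of two asymptotically stable autonomous subsystems is locally asymptotically stable, which avoids appealing to Hurwitzness of the $(r_0,\psi_0)$-block beyond Theorem \ref{thm:For-the-single}. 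The main obstacle I anticipate is the second step---the sign analysis of $\operatorname{tr}(J_i)$ and $\det(J_i)$---since, despite being simpler than the $i=0$ case, it still requires careful use of the equilibrium relations to pin down the signs; the cascade assembly itself is routine once both diagonal blocks are known to be Hurwitz.
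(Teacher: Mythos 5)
Your proposal is correct and follows essentially the same route as the paper's Appendix~B: the paper likewise writes the full Jacobian in block lower-triangular form $\begin{bmatrix} J_1 & \mathbf{0} \\ B & J_2 \end{bmatrix}$, where $J_2$ is exactly your ``frozen'' block (with the $k_1$-feedback terms absent, as you predicted), and proves it Hurwitz by showing $\operatorname{tr}(J_2)<0$ from $0<\cos\psi^{\star}<1$ and $\det(J_2)>0$ from the bound $\cos^{2}(\psi^{\star})<\tfrac{2}{e^{2}}<\tfrac{1}{3}$ obtained via \eqref{eq:cos_psi_bound} and the hypothesis on $\boldsymbol{\kappa}$, before combining with the Hurwitz $J_1$ of Appendix~A. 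The sign analysis you flagged as the main remaining work is indeed the substance of the paper's proof, and it goes through exactly as you outlined.
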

\begin{proof}
    The proof is provided in Appendix B.
\end{proof}
\subsection{Estimation of Stable Region}
It is evident that the common equilibrium point defined in Theorem \ref{lem:rstar}, $\forall i={0,\dots,n-1}$, depends on $\boldsymbol{\kappa}$. From Theorem \ref{lem:rstar}, it also follows that the set of equilibrium points for varying $\boldsymbol{\kappa}$ is identical for all evaders. Hence, we can denote the common equilibrium set using the notation $R_{E}$ introduced in \eqref{eq:R_E} earlier. However, the trajectory of the $i^{th}$ evader $(i=1,2,\dots,n-1)$ depends on the starting position of the $i=0$ evader, i.e., $(\boldsymbol{\kappa},\psi_0(0))$, as well as its own initial condition $r_i(0),\psi_i(0)$. Hence, corresponding to $R_E$, we can define a stability region for the $i^{th}$ evader, $i=\{1,2,\dots,n-1\}$ indexed by $(\boldsymbol{\kappa},\psi_0(0))$ as follows:
\begin{align}\label{eq:stability set}
    S^{i}(\boldsymbol{\kappa},\psi_0(0))=\{(r_i(0),\psi_i(0))|(r_i(t),\psi_i(t))\to R_E\}
\end{align}
Since no efficient method of estimation of $S^{i}(\boldsymbol{\kappa},\psi_0(0))$ is known, a brute force computation of this set is shown in Fig. \ref{fig:singh34} for an arbitrary evader $i\in\{1,2,\dots,n-1\}$ for $\boldsymbol{\kappa}=1, \psi_0(0)=0, R=2, k_1=1, k=1$.
\begin{figure}[H]
\hspace{-0.603cm}
    \centering
        \includegraphics[scale=0.248]{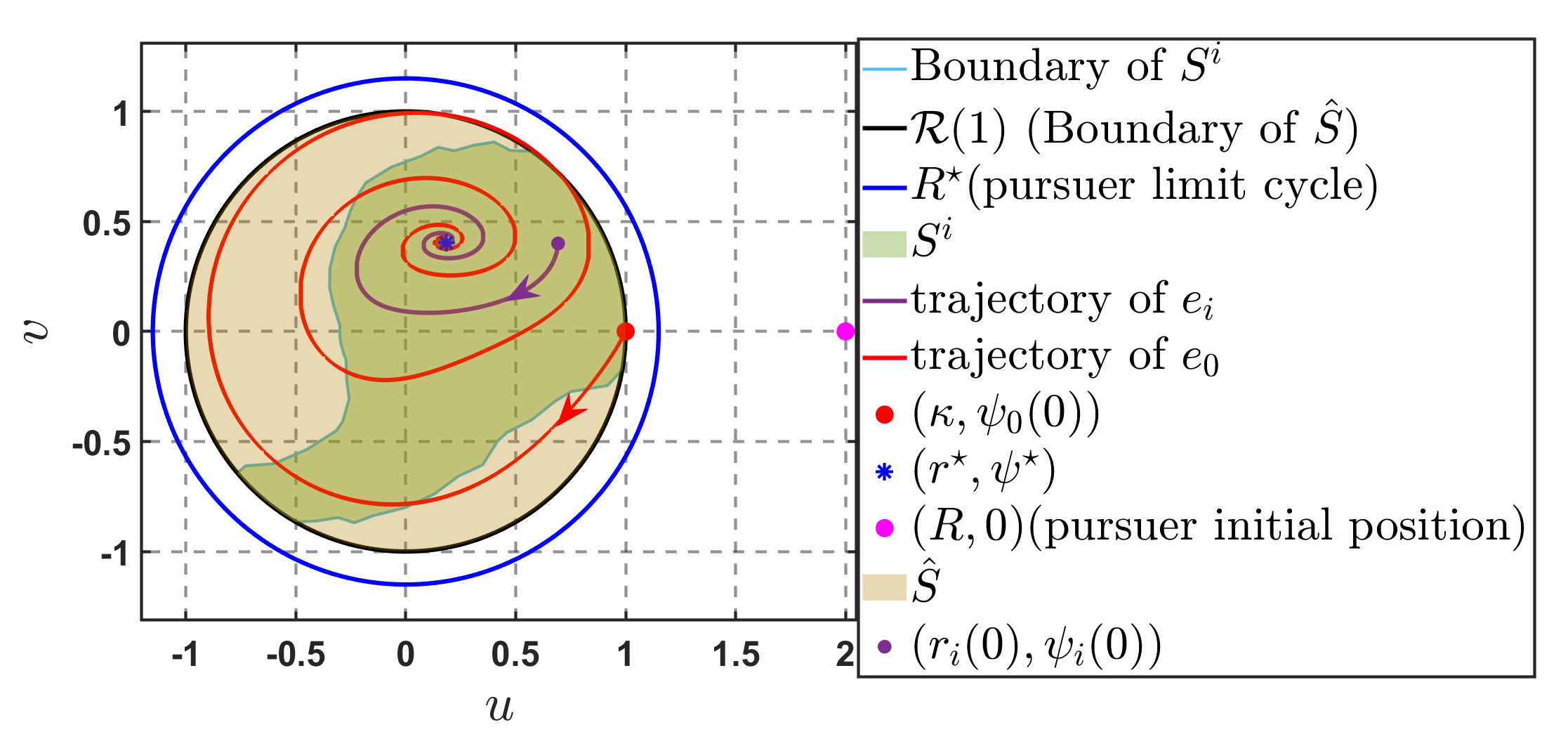}
    \caption{Estimation of $S^{i}$ in $(u,v)$-frame for $\boldsymbol{\kappa}=1, \psi_0(0)=0, R=2, k_1=1, k=1$}
    \label{fig:singh34}
\end{figure}
This figure shows that if the pursuer starts from $(R,0)$ (indicated by a magenta dot), the outermost evader $e_0$ starts from $(\boldsymbol{\kappa},\psi_0(0))$, and if another evader $e_i$ starts at $(r_i(0),\psi_i(0))$ from within $S^{i}$ (the light green shaded region), then the trajectories of both the evaders (red and purple curves, respectively) converge to the equilibrium point $(r^{\star},\psi^{\star})$. Meanwhile, the pursuer stabilizes onto a circular trajectory of radius $R^{\star}$ (shown as a blue circle).
\subsection{Translation of Results to $\{x, y\}/\{r, \phi\}$-frame}\label{sec:Translation of Results multiple}
Similar to Section \ref{sec:Translation of Results}, all the results can be interpreted in the $\{x, y\}/\{r, \phi\}$-frame.
    \begin{thm}
        Consider the system described in \eqref{eq:actual_system}. Let $2k_{1}^{2}R^{2}>1$ and $\boldsymbol{\kappa}$ is fixed such that $\boldsymbol{\kappa}<\frac{\ln(2k_{1}^{2}R^{2})}{2k_{1}}$. Then (\ref{eq:actual_system}) has a unique asymptotically stable limit cycle described by the trajectory $L:=\{r^{\star}=r^{\star}, \phi^{\star}=\psi^{\star}+\omega t,~\forall t\geq 0\}$, $\forall k,\omega>0$.
        \label{thm:eql set theorem multiple}
    \end{thm}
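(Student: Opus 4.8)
The plan is to mirror exactly the argument used for the single-evader limit cycle in Theorem \ref{thm:eql set theorem}, now invoking the multi-evader counterparts together with the coordinate-translation facts recorded in Section \ref{sec:Translation of Results multiple}. First I would recall that the autonomous system \eqref{eq:transferred_dynamics} is obtained from \eqref{eq:actual_system} through the time-dependent rotation $\psi_i = \phi_i - \omega t$, which leaves the radial coordinate $r_i$ unchanged and merely shifts the angular coordinate by $-\omega t$. Under the standing hypotheses $2k_1^2 R^2 > 1$ and $\boldsymbol{\kappa} < \frac{\ln(2k_1^2 R^2)}{2k_1}$, Theorem \ref{lem:rstar} already guarantees that \eqref{eq:transferred_dynamics} has a single equilibrium $(r^{\star},\psi^{\star})$ common to all evaders, and Theorem \ref{thm:multiple_spiral} guarantees that this equilibrium is asymptotically stable for every evader. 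These two facts are the entire substantive content; the remaining task is purely a translation back to the $\{r,\phi\}$-frame.

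Next I would perform that translation. Since $r_i$ is frame-invariant and $\phi_i = \psi_i + \omega t$, the constant equilibrium point $(r^{\star},\psi^{\star})$ of the autonomous system maps to the trajectory $r_i(t) \equiv r^{\star}$, $\phi_i(t) = \psi^{\star} + \omega t$, which is precisely $L$. Geometrically this is a circle of radius $r^{\star}$ swept at constant angular velocity $\omega$, hence a closed periodic orbit of \eqref{eq:actual_system}, i.e. a limit cycle. Uniqueness of $L$ is then inherited directly from the uniqueness of $(r^{\star},\psi^{\star})$ established in Theorem \ref{lem:rstar}, since distinct limit cycles of \eqref{eq:actual_system} would pull back to distinct equilibria of \eqref{eq:transferred_dynamics}.

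The one point requiring mild care — and the main obstacle, though a routine one — is transferring asymptotic stability from the equilibrium of the autonomous system to the limit cycle of the original time-varying system. I would argue this by observing that the instantaneous map between the $\{u,v\}$- and $\{x,y\}$-frames is a pure rotation by angle $\omega t$, hence an isometry that preserves Euclidean distances at each instant. Consequently, if a trajectory of \eqref{eq:transferred_dynamics} converges to $(r^{\star},\psi^{\star})$, the $\{x,y\}$-frame distance between the corresponding trajectory of \eqref{eq:actual_system} and the point of $L$ at the same time equals the $\{u,v\}$-frame distance to the equilibrium, which tends to zero; the analogous bound holds for the Lyapunov stability estimate supplied by Theorem \ref{thm:multiple_spiral}. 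This shows $L$ is asymptotically stable for all evaders, completing the argument in direct parallel with Theorem \ref{thm:eql set theorem}.
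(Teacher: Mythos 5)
Your proposal is correct and follows essentially the same route as the paper: the paper's proof simply cites Theorem \ref{lem:rstar} (uniqueness of the common equilibrium $(r^{\star},\psi^{\star})$) and Theorem \ref{thm:multiple_spiral} (its asymptotic stability), with the translation $\phi = \psi + \omega t$ back to the $\{r,\phi\}$-frame understood from Section \ref{sec:Translation of Results multiple}. Your additional care in noting that the frame change is a time-dependent isometry, so convergence and stability transfer to the limit cycle, is a sound elaboration of what the paper leaves implicit.
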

    \begin{proof}
    The proof follows directly from Lemma \ref{lem:rstar} and Theorem \ref{thm:multiple_spiral}. 
    \end{proof}
        The following result shows that the evader can be driven arbitrarily close to the target by choosing a sufficiently large pursuer velocity.
    \begin{cor}
    For any $\epsilon>0, \exists W>0$ such that if $\omega>W$, then the radius of $L$ i.e. $r^{\star}<\epsilon$
    \end{cor}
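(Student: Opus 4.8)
The plan is to derive this corollary as an immediate consequence of Lemma \ref{lem:r_limit_multiple} together with the identification of $r^{\star}$ as the radius of the limit cycle $L$ in Theorem \ref{thm:eql set theorem multiple}. First I would emphasize that under the standing hypotheses $2k_1^2R^2>1$ and $\boldsymbol{\kappa}<\frac{\ln(2k_1^2R^2)}{2k_1}$, Theorem \ref{lem:rstar} guarantees that $r^{\star}$ is the \emph{unique} positive real root of the cubic \eqref{eq:cubic_equation}, so that $r^{\star}$ is a well-defined function of the angular velocity, which I would write as $r^{\star}(\omega)$. The dependence on $\omega$ enters \eqref{eq:cubic_equation} only through the term $\frac{k}{\omega}$.

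Next I would invoke Lemma \ref{lem:r_limit_multiple}, which asserts precisely that $r^{\star}(\omega)\to 0$ as $\omega\to\infty$. The corollary is then nothing more than the $\epsilon$--$W$ restatement of this limit: by the definition of $\lim_{\omega\to\infty} r^{\star}(\omega)=0$, for every $\epsilon>0$ there exists $W>0$ such that $\omega>W$ implies $\lvert r^{\star}(\omega)\rvert<\epsilon$. Since $r^{\star}(\omega)>0$ for all admissible $\omega$ (its positivity is part of Theorem \ref{lem:rstar}), we have $\lvert r^{\star}(\omega)\rvert=r^{\star}(\omega)$, and hence $r^{\star}(\omega)<\epsilon$ for all $\omega>W$. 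Because Theorem \ref{thm:eql set theorem multiple} identifies $r^{\star}$ as exactly the radius of the limit cycle $L$, this establishes the claim.

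The only point requiring a moment's care, which I would not call a genuine obstacle, is verifying that Lemma \ref{lem:r_limit_multiple} remains applicable over the entire range $\omega>W$ rather than at a single value of $\omega$. This is immediate: the qualifying conditions $2k_1^2R^2>1$ and $\boldsymbol{\kappa}<\frac{\ln(2k_1^2R^2)}{2k_1}$ involve only $k_1$, $R$, and $\boldsymbol{\kappa}$, and do not depend on $\omega$ at all. Consequently the uniqueness and positivity of $r^{\star}(\omega)$, and therefore the limit in Lemma \ref{lem:r_limit_multiple}, hold uniformly for all $\omega>0$, and the choice of $W$ supplied by the limit definition completes the proof.
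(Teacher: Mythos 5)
Your proposal is correct and follows essentially the same route as the paper, whose entire proof is the single line that the result follows from Lemma \ref{lem:r_limit_multiple}; you have simply spelled out the $\epsilon$--$W$ unpacking of the limit, the positivity of $r^{\star}$ from Theorem \ref{lem:rstar}, and the $\omega$-independence of the standing hypotheses. These elaborations are accurate and make the paper's terse argument fully explicit, but they do not constitute a different approach.
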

    \begin{proof}
        The proof follows from Lemma \ref{lem:r_limit_multiple}.
    \end{proof}
    
\section{A Simplified Herding Law}\label{sec:special case}
We consider a special case of the proposed strategy of the pursuer when $k_1=0$. Note that this assumption fundamentally changes the pursuer's behavior since it can no longer adjust the radius of its herding circle. Refer to Sections \ref{subsec:Problem-formulation} and \ref{sec:coordiante frames} for the coordinate frames and evader kinematics while considering $k_1=0$. To distinguish from the equations discussed above, we add the subscript 's' to each symbol: $\{x,y\} \to \{x_s,y_s\}$, $\{r,\phi\} \to \{r_s,\phi_s\}$, $\{u,v\} \to \{u_s,v_s\}$, $\{r,\psi\} \to \{r_s,\psi_s\}$.

\subsection{Computation of Equilibrium Points}\label{sec:special_equilibrium_points}
If we put $k_1=0$ in \eqref{eq:multiple_eql_condition}, we get (denoting ($r^{i\star}_{s},\psi^{i\star}_{s}$) as the equilibrium point for the $i^{th}$ evader):
\begin{subequations}
\begin{align}
r^{i\star}_{s}&=R\cos\left(\psi^{i\star}_{s}\right)\label{eq:special_r_eql}
\end{align} 
\begin{align}
\cos\left(\psi^{i\star}_{s}\right)\sin^{2}\left(\psi^{i\star}_{s}\right)=\frac{k}{\omega R^{3}}.\label{special_condition_for_eql}
\end{align}
\end{subequations}
\subsubsection{Existence of Equilibrium Points}
We know that $-\frac{2}{3\sqrt{3}}< \cos\left(\psi^{i\star}_{s}\right)\sin^{2}\left(\psi^{i\star}_{s}\right)<\frac{2}{3\sqrt{3}}$ (see Fig. \ref{singh15}) and $k,\omega$ and $R$ are all positive. This implies $0<\frac{k}{\omega R^{3}}<\frac{2}{3\sqrt{3}}$.
\begin{figure}[tbh]
\begin{centering}
\includegraphics[scale=0.5]{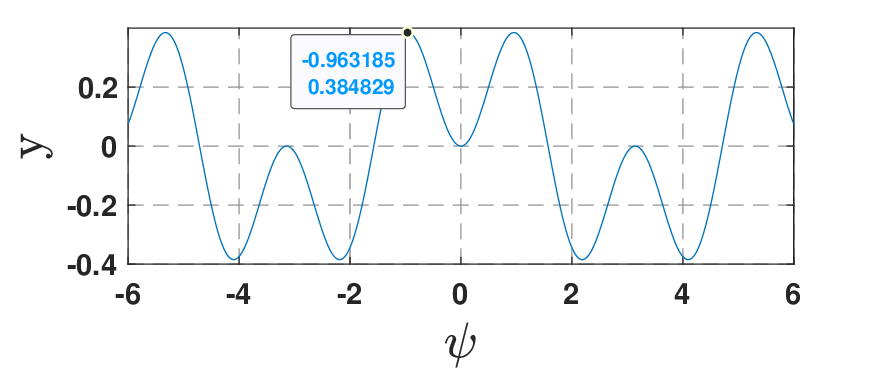}
\par\end{centering}
\centering{}\caption{Graph for $y=\cos(\psi)\sin^{2}(\psi)$}\label{singh15}
\end{figure}
The next Lemma follows immediately.
\begin{lem}\label{lem_existance_of_eql_points}
If $k_1=0$ is substituted in \eqref{eq:transferred_dynamics}, then the resulting system has equilibrium points if and only if $0<\frac{k}{\omega R^{3}}<\frac{2}{3\sqrt{3}}$. Equivalently, system \eqref{eq:actual_system} (with $k_1=0$) has limit sets/cycles if and only if $0<\frac{k}{\omega R^{3}}<\frac{2}{3\sqrt{3}}$.
\end{lem}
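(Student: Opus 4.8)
The plan is to reduce the existence of equilibrium points of the $k_1=0$ system to the solvability of the single scalar equation \eqref{special_condition_for_eql}, and then to read off the answer from the range of the function $g(\psi):=\cos(\psi)\sin^{2}(\psi)$, which has in effect already been determined in the display preceding the lemma and in Fig.~\ref{singh15}. First I would observe that any equilibrium $(r^{i\star}_{s},\psi^{i\star}_{s})$ must satisfy both \eqref{eq:special_r_eql} and \eqref{special_condition_for_eql}. Since \eqref{eq:special_r_eql} merely expresses $r^{i\star}_{s}=R\cos(\psi^{i\star}_{s})$ as a function of $\psi^{i\star}_{s}$, an equilibrium exists if and only if \eqref{special_condition_for_eql} admits a real solution $\psi^{i\star}_{s}$. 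Because $k,\omega,R>0$, the right-hand side $\frac{k}{\omega R^{3}}$ is strictly positive, so the entire question collapses to whether this positive number lies in the range of $g$.

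Next I would record the range of $g$. Substituting $c=\cos(\psi)\in[-1,1]$ gives $g=c-c^{3}$, whose derivative $1-3c^{2}$ vanishes at $c=\pm\frac{1}{\sqrt{3}}$, and evaluating there produces the global extrema $\pm\frac{2}{3\sqrt{3}}$. The necessity (``only if'') direction is then immediate: since $\frac{k}{\omega R^{3}}>0$ cannot be nonpositive and since no $\psi$ can make $g$ exceed $\frac{2}{3\sqrt{3}}$, a solution is impossible unless $0<\frac{k}{\omega R^{3}}<\frac{2}{3\sqrt{3}}$. For sufficiency I would apply the intermediate value theorem on the branch $\psi\in\left(0,\arccos\frac{1}{\sqrt{3}}\right)$, where $g$ is continuous and strictly increasing from $0$ up to $\frac{2}{3\sqrt{3}}$; hence every value in $\left(0,\frac{2}{3\sqrt{3}}\right)$ is attained at some $\psi^{i\star}_{s}$ with $\cos(\psi^{i\star}_{s})>0$, which in turn guarantees a genuine positive radius $r^{i\star}_{s}=R\cos(\psi^{i\star}_{s})>0$.

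For the ``equivalently'' clause I would invoke the coordinate relation $\psi=\phi-\omega t$ from Section~\ref{sec:coordiante frames}: a fixed point $(r^{i\star}_{s},\psi^{i\star}_{s})$ of the time-invariant system \eqref{eq:transferred_dynamics} (with $k_1=0$) corresponds to the trajectory $r=r^{i\star}_{s}$, $\phi=\psi^{i\star}_{s}+\omega t$ of the time-varying system \eqref{eq:actual_system}, which is exactly a circular limit set/cycle. Consequently, equilibria in the $\{r,\psi\}$-frame and limit sets/cycles in the $\{r,\phi\}$-frame appear under identical conditions, yielding the stated equivalence.

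The main obstacle is conceptual rather than computational: one must be careful to select the branch with $\cos(\psi^{i\star}_{s})>0$ so that the radius is positive (equilibria with $\cos(\psi^{i\star}_{s})<0$ are spurious for a radial coordinate), and one must decide how to treat the boundary value $\frac{k}{\omega R^{3}}=\frac{2}{3\sqrt{3}}$. At that tangency the two equilibrium branches coalesce into a single non-hyperbolic (saddle-node) point; the strict upper bound in the statement reflects the choice to exclude this degenerate case, which is also what keeps the later existence/uniqueness and stability arguments clean.
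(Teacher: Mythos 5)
Your proposal is correct and takes essentially the same route as the paper: the paper likewise reduces existence to the range of $\cos(\psi)\sin^{2}(\psi)$ (read off from its graph, Fig.~\ref{singh15}) together with positivity of $\frac{k}{\omega R^{3}}$, then translates equilibria of \eqref{eq:transferred_dynamics} into limit cycles of \eqref{eq:actual_system} via $\phi_{s}=\psi_{s}+\omega t$, stating that the lemma ``follows immediately.'' If anything, your version is more careful than the paper's: you derive the extrema analytically and use the intermediate value theorem for sufficiency, and you correctly observe that the extremum $\frac{2}{3\sqrt{3}}$ is actually attained (at $\cos\psi=\tfrac{1}{\sqrt{3}}$), so the boundary case is a genuine degenerate equilibrium --- a point the paper glosses over by asserting strict inequalities for the range of $\cos(\psi)\sin^{2}(\psi)$.
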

As in previous cases, since $\phi^{i}_{s} :=\psi^{i}_{s} +\omega t$, the point ($r^{i\star}_{s},\psi^{i\star}_{s}$) corresponds to the circular trajectory defined by radius $r^{i\star}_{s}$ and the angle evolving according to $\phi^{i\star}_{s} :=\psi^{i\star}_{s}+\omega t$.

\subsubsection{Computation of Equilibrium Points}
Eliminating $\psi^{i\star}_{s}$ from \eqref{eq:special_r_eql} and \eqref{special_condition_for_eql}, we get
\begin{equation}
r^{i*3}_s-R^{2}r^{i\star}_{s}+\frac{k}{\omega}=0 \label{eq:special_cubic_equation}
\end{equation}
Denote the three roots of \eqref{eq:special_cubic_equation} by $r^{i\star}_{s_{1}}$, $r^{i\star}_{s_{2}}$, and $r^{i\star}_{s_{3}}$. Note that $\frac{k}{\omega R^{3}}<\frac{2}{3\sqrt{3}} \implies \frac{-R^{6}}{27}+\frac{k^{2}}{4\omega^{2}}<0$. From the theory of general cubic equations \cite{abramowitz1968handbook}, this implies that all the roots are real. If we denote $\sigma_{1}=\left(-\sqrt{\frac{k^{2}}{4\omega^{2}}-\frac{R^{6}}{27}}-\frac{k}{2\omega}\right)^{\frac{1}{3}}$ and $\sigma_{2}=\left(\sqrt{\frac{k^{2}}{4\omega^{2}}-\frac{R^{6}}{27}}-\frac{k}{2\omega}\right)^{\frac{1}{3}}$, then
\begin{equation}
\begin{aligned}
r^{i\star}_{s_{1}}&=\sigma_2+\sigma_1,\\
r^{i\star}_{s_{2}}&=\frac{-\sigma_{2}}{2}+\frac{\sqrt{3}\sigma_{1}}{2}j-\frac{\sigma_{1}}{2}-\frac{\sqrt{3}\sigma_{2}}{2}j,\\
r^{i\star}_{s_{3}}&=\frac{-\sigma_{2}}{2}-\frac{\sqrt{3}\sigma_{1}}{2}j-\frac{\sigma_{1}}{2}+\frac{\sqrt{3}\sigma_{2}}{2}j.
\end{aligned}\label{special_threeroots}
\end{equation}
\begin{lem}\label{special_lem_threeroots}
For $\frac{k}{\omega R^{3}}<\frac{2}{3\sqrt{3}}$ and for finite $k, \omega, R>0$, the three roots \eqref{special_threeroots} satisfy $R>r^{i\star}_{s_{1}}>\frac{R}{\sqrt{3}}>r^{i\star}_{s_{2}}>0>r^{i\star}_{s_{3}}$, $\forall i\in\{0,1,\dots,n-1\}$.
\end{lem}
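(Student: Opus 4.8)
The plan is to treat the three roots not via the radical expressions in \eqref{special_threeroots}, but as the real zeros of the single scalar polynomial $p(r) := r^3 - R^2 r + \frac{k}{\omega}$ from \eqref{eq:special_cubic_equation}, locating each by an elementary sign-and-monotonicity argument. First I would compute $p'(r) = 3r^2 - R^2$, whose only zeros are $r = \pm R/\sqrt{3}$. Hence $p$ has a local maximum at $-R/\sqrt{3}$ and a local minimum at $+R/\sqrt{3}$; it is strictly increasing on $(-\infty, -R/\sqrt{3})$ and on $(R/\sqrt{3}, \infty)$, and strictly decreasing on $(-R/\sqrt{3}, R/\sqrt{3})$.

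The second step is to evaluate $p$ at the two critical points and at the two test points $r=0$ and $r=R$. A short computation gives $p(R/\sqrt{3}) = \frac{k}{\omega} - \frac{2R^3}{3\sqrt{3}}$, which is strictly negative precisely because of the hypothesis $\frac{k}{\omega R^3} < \frac{2}{3\sqrt{3}}$, i.e. $\frac{k}{\omega} < \frac{2R^3}{3\sqrt{3}}$; symmetrically, $p(-R/\sqrt{3}) = \frac{k}{\omega} + \frac{2R^3}{3\sqrt{3}} > 0$. This is the only place the hypothesis enters, and together with the sign at infinity it already confirms three distinct real roots. I would also record that $p(0) = p(R) = \frac{k}{\omega} > 0$, since the constant term reappears at both test points.

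Combining these signs with the monotonicity intervals via the intermediate value theorem then confines each root: as $p$ increases from $-\infty$ to $p(-R/\sqrt{3}) > 0$, the smallest root $r^{i\star}_{s_3}$ lies in $(-\infty, -R/\sqrt{3})$, so $r^{i\star}_{s_3} < 0$; as $p$ decreases through positive and negative values on $(-R/\sqrt{3}, R/\sqrt{3})$ with $p(0) > 0$, the middle root $r^{i\star}_{s_2}$ must lie strictly to the right of $0$, giving $0 < r^{i\star}_{s_2} < R/\sqrt{3}$; and as $p$ increases from $p(R/\sqrt{3}) < 0$ past $p(R) > 0$, the largest root $r^{i\star}_{s_1}$ lies in $(R/\sqrt{3}, R)$. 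Concatenating these intervals yields exactly $R > r^{i\star}_{s_1} > R/\sqrt{3} > r^{i\star}_{s_2} > 0 > r^{i\star}_{s_3}$, and the bounds are uniform in $i$ since the polynomial does not depend on $i$.

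I expect no substantive obstacle here, as the argument is just calculus on a depressed cubic. The only points requiring minor care are the bookkeeping that matches the labels $r^{i\star}_{s_1}, r^{i\star}_{s_2}, r^{i\star}_{s_3}$ in \eqref{special_threeroots} with the three real roots in decreasing order (so the chain of inequalities carries the correct subscripts), and insisting on the \emph{strict} inequality $\frac{k}{\omega} < \frac{2R^3}{3\sqrt{3}}$ so that $p(R/\sqrt{3}) < 0$ strictly, which is what guarantees three simple, separated roots rather than a coincidence at the critical point.
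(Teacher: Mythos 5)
Your proof is correct and follows essentially the same route as the paper: both analyze the depressed cubic $r^3 - R^2 r + \frac{k}{\omega}$, locate its critical points at $\pm\frac{R}{\sqrt{3}}$, and use the sign of the polynomial there (negative at $\frac{R}{\sqrt{3}}$ by the hypothesis, positive at $-\frac{R}{\sqrt{3}}$) to separate the three real roots. The only difference is cosmetic: the paper obtains the outer inequalities $R>r^{i\star}_{s_{1}}$ and $r^{i\star}_{s_{2}}>0>r^{i\star}_{s_{3}}$ by appealing to the radical expressions \eqref{special_threeroots} and reads the intermediate bound off the graph in Fig.~\ref{special_Graph_of_cubic_equation}, whereas you derive the entire chain from the test values $p(0)=p(R)=\frac{k}{\omega}>0$ together with monotonicity and the intermediate value theorem, which is self-contained and slightly more rigorous.
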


\begin{proof}
Using $\frac{k}{\omega R^{3}}<\frac{2}{3\sqrt{3}}$ in \eqref{special_threeroots}, one can show $R>r^{i\star}_{s_{1}}>r^{i\star}_{s_{2}}>0>r^{i\star}_{s_{3}}$. To calculate the intermediate bound of $\frac{R}{\sqrt{3}}$, let $y=r^{i*3}_s-R^{2}r^{i\star}_{s}+\frac{k}{\omega}$ and consider the graph of this cubic equation as shown in Fig. \ref{special_Graph_of_cubic_equation}.
\begin{figure}[tbh]
\begin{centering}
\includegraphics[scale=0.75]{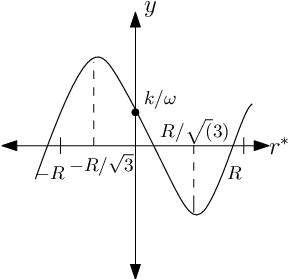}
\par\end{centering}
\caption{Graph of $y=r^{i*3}_s-R^{2}r^{i\star}_{s}+\frac{k}{\omega}$}
\label{special_Graph_of_cubic_equation}
\end{figure}
Notice that the minimum and maximum value of $y$ occur at $r^{i\star}_{s}=\pm\frac{R}{\sqrt{3}}$.
It is easy to verify that, since $\frac{k}{\omega}<\frac{2}{3\sqrt{3}}R^{3}$, $y\left(\frac{R}{\sqrt{3}}\right)=-\frac{2}{3\sqrt{3}}R^{3}+\frac{k}{\omega}<0$. On the other hand,  $y\left(\frac{-R}{\sqrt{3}}\right)=\frac{4R^{3}}{3\sqrt{3}}+\frac{k}{\omega}>0$. This implies that the maximum and the minimum (as drawn in the Fig. \ref{special_Graph_of_cubic_equation}) occur at $\frac{-R}{\sqrt{3}}$ and $\frac{R}{\sqrt{3}}$ respectively. The claim follows directly from the figure.
\end{proof}

Clearly, the negative root $r^{i\star}_{s_{3}}$ is not relevant to our analysis since the radius must be positive. The other roots $r^{i\star}_{s_{1}}$, $r^{i\star}_{s_{2}}$ each correspond to an equilibrium point for \eqref{eq:transferred_dynamics} with $k_1=0$. Using arguments similar to Theorem \ref{lem:unieq eql point}, the corresponding angles $\psi_{s_{1}}^{\star}=\cos^{-1}\frac{r^{i\star}_{s_{1}}}{R}$ and $\psi_{s_{2}}^{\star}=\cos^{-1}\frac{r^{i\star}_{s_{2}}}{R}$ are positive and unique. We observe that the entire analysis is independent of the initial position of each evader $e_{i}$. Thus, we have the following Theorem:
\begin{thm}\label{lem:two_eql_points}
        Consider \eqref{eq:transferred_dynamics} with $k_1=0$. Then, if $0<\frac{k}{\omega R^3}<\frac{2}{3\sqrt{3}}$, each evader $(i=0,\dots,n-1)$ has two equilibrium points $(r_{s_{1}}^{\star},\psi_{s_{1}}^{\star})$ and $(r_{s_{2}}^{\star},\psi_{s_{2}}^{\star})$ satisfying $R>r_{s_{1}}^{\star}>\frac{R}{\sqrt{3}}>r_{s_{2}}^{\star}>0$. Correspondingly,  $0<\psi_{s_{1}}^{\star}=cos^{-1}(\frac{r_{s_{1}}^{\star}}{R})<\frac{\pi}{2}$ and $0<\psi_{s_{2}}^{\star}=cos^{-1}(\frac{r_{s_{2}}^{\star}}{R})<\frac{\pi}{2}$.
\end{thm}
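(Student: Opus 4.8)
The plan is to assemble the theorem from the cubic-root analysis of \eqref{eq:special_cubic_equation} together with the angle-recovery already set up in this subsection. First I would note that any equilibrium of \eqref{eq:transferred_dynamics} with $k_1=0$ must satisfy \eqref{eq:special_r_eql}--\eqref{special_condition_for_eql}, and that eliminating the angle reduces these to the cubic \eqref{eq:special_cubic_equation}. By Lemma \ref{lem_existance_of_eql_points}, the hypothesis $0<\frac{k}{\omega R^3}<\frac{2}{3\sqrt3}$ is exactly the condition for equilibria to exist; moreover, as already observed, it forces the discriminant $-\frac{R^6}{27}+\frac{k^2}{4\omega^2}<0$, so all three roots \eqref{special_threeroots} are real.

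Next I would invoke Lemma \ref{special_lem_threeroots} to order the three real roots as $R>r^{\star}_{s_1}>\frac{R}{\sqrt3}>r^{\star}_{s_2}>0>r^{\star}_{s_3}$. Since a radius must be positive, the root $r^{\star}_{s_3}$ is discarded, leaving exactly the two positive roots claimed in the statement. For each surviving root, \eqref{eq:special_r_eql} yields $\cos\psi^{\star}_{s_j}=r^{\star}_{s_j}/R$; because $r^{\star}_{s_j}<R$ strictly, we have $0<\cos\psi^{\star}_{s_j}<1$, which places $\psi^{\star}_{s_j}$ in $(-\frac{\pi}{2},0)\cup(0,\frac{\pi}{2})$.

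The one nonroutine step is fixing the sign of each $\psi^{\star}_{s_j}$, i.e., ruling out the negative branch; this is where I expect the only real work. I would reuse the device from the proof of Theorem \ref{lem:unieq eql point}: pass to the $\{u_s,v_s\}$-frame (equations \eqref{kinematic_uv} specialized to $k_1=0$, so that $Re^{k_1(\cdots)}=R$), set $\dot{u}_s=0$ at equilibrium, and obtain $\frac{k(u^{\star}_s-R)}{((u^{\star}_s-R)^2+v^{\star2}_s)^{3/2}}=-\omega v^{\star}_s$. From \eqref{eq:special_r_eql} we have $u^{\star}_s=r^{\star}_s\cos\psi^{\star}_s=R\cos^2\psi^{\star}_s<R$, so the left-hand side is strictly negative; hence $v^{\star}_s>0$, forcing $\sin\psi^{\star}_s>0$ and therefore $\psi^{\star}_s\in(0,\frac{\pi}{2})$. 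This selects $\psi^{\star}_{s_j}=\cos^{-1}(r^{\star}_{s_j}/R)$ as the unique positive angle for $j=1,2$.

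Finally, I would observe that none of \eqref{eq:special_cubic_equation}, \eqref{eq:special_r_eql}, nor the $\{u_s,v_s\}$ sign argument involves the initial data $(r_i(0),\psi_i(0))$: the equilibrium conditions are identical across the index $i$. Consequently the same pair $(r^{\star}_{s_1},\psi^{\star}_{s_1})$ and $(r^{\star}_{s_2},\psi^{\star}_{s_2})$ is recovered for every evader $i=0,\dots,n-1$, which completes the proof. Since the substantive content already resides in Lemmas \ref{lem_existance_of_eql_points} and \ref{special_lem_threeroots} and in the angle-sign argument borrowed from Theorem \ref{lem:unieq eql point}, I anticipate no genuine obstacle beyond carefully transcribing the $k_1=0$ specialization.
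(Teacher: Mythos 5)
Your proposal is correct and matches the paper's own argument essentially step for step: the paper likewise combines Lemma \ref{lem_existance_of_eql_points} (existence), Lemma \ref{special_lem_threeroots} (root ordering, with the negative root discarded), the sign-of-$\psi^{\star}_{s}$ argument "similar to Theorem \ref{lem:unieq eql point}" via the $\{u_s,v_s\}$-frame equilibrium equation, and the observation that none of this depends on the evaders' initial positions. The only difference is cosmetic — you write out the $k_1=0$ specialization of the angle-sign argument explicitly, whereas the paper merely cites it.
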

In Fig. \ref{singh17}, these two equilibrium points are depicted in the $\{u,v\}$-frame with the red and blue dots, respectively.
\begin{figure}[tbh]
\begin{centering}
\includegraphics[scale=0.8]{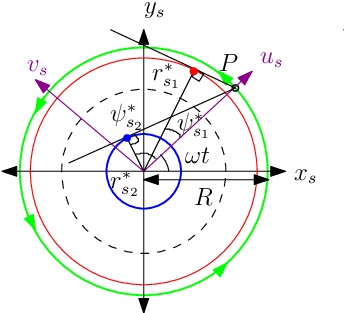}
\par\end{centering}
 \caption{Green-pursuer circle, red-unstable limit cycle, blue-stable limit cycle, dotted-estimation of ROA}
\label{singh17}
\end{figure}
Using $\phi^{\star}_{s} :=\psi^{\star}_{s}+\omega t$, the equilibrium points $(r^{\star}_{s_{i}}, \psi^{\star}_{s_{i}})$, $i=1,2$, correspond to the limit cycles $(r^{\star}_{s_{i}}, \phi^{\star}_{s_{i}}) = (r^{\star}_{s_{i}}, \psi^{\star}_{s_{i}} + \omega t)$ for $\forall t>0$. These limit cycles are shown in Fig. \ref{singh17} by the red and blue circles, respectively.

In the following result, we analyze the effect of increasing $\omega$ on the location of the equilibrium points.

\begin{cor}\label{special_limit_eq}
For system \eqref{eq:transferred_dynamics} with $k_1=0$, the equilibrium radii defined in \eqref{special_threeroots} satisfies
$r^{i\star}_{s_{2}}\rightarrow 0$ and $r^{i\star}_{s_{1}}\rightarrow R$ as $\omega\rightarrow\infty$, $\forall i\in\{0,1,\dots,n-1\}$.
\end{cor}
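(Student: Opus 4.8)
The plan is to argue directly from the defining cubic \eqref{eq:special_cubic_equation}, namely $r^{i*3}_s-R^{2}r^{i\star}_{s}+\frac{k}{\omega}=0$, rather than from the radical expressions \eqref{special_threeroots}. The reason is that under the existence condition $\frac{k}{\omega R^3}<\frac{2}{3\sqrt{3}}$ one always has $\frac{k^2}{4\omega^2}<\frac{R^6}{27}$, so the formulas \eqref{special_threeroots} sit in the \emph{casus irreducibilis}: the three real roots are expressed through complex cube roots, and tracking a limit through those branches is error-prone. Instead I observe that as $\omega\to\infty$ the constant term $\frac{k}{\omega}\to 0$, so the cubic degenerates to $r^3-R^2 r = r(r-R)(r+R)=0$, whose only nonnegative roots are $0$ and $R$. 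The whole argument then amounts to showing that the two positive roots separate cleanly onto these two limiting values.

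First I would record the trapping bounds already furnished by Lemma \ref{special_lem_threeroots} (equivalently Theorem \ref{lem:two_eql_points}): for every admissible $\omega$ one has $r^{i\star}_{s_2}\in(0,\frac{R}{\sqrt{3}})$ and $r^{i\star}_{s_1}\in(\frac{R}{\sqrt{3}},R)$. These confinements hold for all large $\omega$ and are exactly what pin down which limiting root each branch must approach.

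For the small root, I rearrange \eqref{eq:special_cubic_equation} as $r^{i\star}_{s_2}\bigl(R^2-(r^{i\star}_{s_2})^2\bigr)=\frac{k}{\omega}$. Since $0<r^{i\star}_{s_2}<\frac{R}{\sqrt{3}}$ gives $R^2-(r^{i\star}_{s_2})^2>\frac{2R^2}{3}$, this yields the explicit estimate $0<r^{i\star}_{s_2}<\frac{3k}{2R^2\omega}$, which forces $r^{i\star}_{s_2}\to 0$. For the large root, I write $r^{i\star}_{s_1}\bigl(R-r^{i\star}_{s_1}\bigr)\bigl(R+r^{i\star}_{s_1}\bigr)=\frac{k}{\omega}$; using $r^{i\star}_{s_1}>\frac{R}{\sqrt{3}}$ together with $R+r^{i\star}_{s_1}>R$ gives $r^{i\star}_{s_1}(R+r^{i\star}_{s_1})>\frac{R^2}{\sqrt{3}}$, hence $0<R-r^{i\star}_{s_1}<\frac{\sqrt{3}k}{R^2\omega}$, so $r^{i\star}_{s_1}\to R$. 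Both bounds are uniform in $i$, since the cubic \eqref{eq:special_cubic_equation} is identical for every evader.

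There is no genuine obstacle here; the only care needed is in the root assignment. The bounds of Lemma \ref{special_lem_threeroots} are precisely what guarantee that the branch confined below $\frac{R}{\sqrt{3}}$ cannot drift toward $R$ and vice versa, so the two positive roots cannot collide or swap as $\omega$ grows. This parallels the limiting argument of Lemma \ref{lem:r_limit_multiple} for the spiraling law, but is more transparent because the constant-coefficient cubic admits the simple algebraic rearrangements above and even furnishes explicit $O(1/\omega)$ convergence rates.
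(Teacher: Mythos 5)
Your proposal is correct, and it rests on the same core observation as the paper's own proof: as $\omega\to\infty$ the constant term $k/\omega$ in \eqref{eq:special_cubic_equation} vanishes, so the cubic degenerates to $r(r-R)(r+R)=0$, whose only nonnegative roots are $0$ and $R$. The difference is in how much of the argument is actually carried out. The paper passes to the limit in the equation, factors, and concludes from positivity of the radii, leaving implicit both why the roots of the perturbed cubic converge to roots of the limiting cubic at all and which branch converges to which limiting root. Your version fills exactly these gaps: the rearrangements $r^{i\star}_{s_2}\bigl(R^2-(r^{i\star}_{s_2})^2\bigr)=k/\omega$ and $\bigl(R-r^{i\star}_{s_1}\bigr)\,r^{i\star}_{s_1}\bigl(R+r^{i\star}_{s_1}\bigr)=k/\omega$, combined with the confinement $0<r^{i\star}_{s_2}<R/\sqrt{3}<r^{i\star}_{s_1}<R$ from Lemma \ref{special_lem_threeroots}, yield the explicit estimates $r^{i\star}_{s_2}<\frac{3k}{2R^{2}\omega}$ and $0<R-r^{i\star}_{s_1}<\frac{\sqrt{3}k}{R^{2}\omega}$, which settle convergence, pin down the branch assignment (the roots cannot drift across $R/\sqrt{3}$ or swap), and additionally provide $O(1/\omega)$ convergence rates that the paper's qualitative limit argument does not give. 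Both estimates check out, so your proof is a strict strengthening of the paper's along the same route.
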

\begin{proof}
From \eqref{eq:special_cubic_equation}, we have $r^{i*3}_s-R^{2}r^{i\star}_{s}+\frac{k}{\omega}=0$.
If $\omega\rightarrow\infty$, then $\frac{k}{\omega}\rightarrow0$.
This implies $r^{i*3}_s-R^{2}r^{i\star}_{s}=0
\Longleftrightarrow r^{i\star}_{s}(r^{i\star2}_{s}-R^{2})=0$. Hence, either $r^{i\star}_{s}=0$ or $r^{i\star}_{s}=\pm R$. The result follows since
$r^{i\star}_{s}$ is always positive. 
\end{proof}
\begin{cor}\label{col:r_star<epsilon}
For any $\epsilon>0, \exists W>0$ such that if $\omega>W$, then the radius of $L_{2}$ i.e. $r^{\star}_{s_{2}}<\epsilon$
\end{cor}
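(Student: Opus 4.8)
The plan is to obtain this immediately from Corollary~\ref{special_limit_eq}, which carries all the analytical content and has already been established. First I would record that, by the construction preceding Theorem~\ref{lem:two_eql_points}, the limit cycle $L_2$ is the circular trajectory traced by the equilibrium $(r^{\star}_{s_2},\psi^{\star}_{s_2})$ via $\phi^{\star}_{s_2}=\psi^{\star}_{s_2}+\omega t$; its radius is therefore exactly $r^{\star}_{s_2}$. Consequently the assertion to be proved, namely that the radius of $L_2$ is smaller than $\epsilon$, is identical to the inequality $r^{\star}_{s_2}<\epsilon$.

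With this reduction in hand, I would invoke Corollary~\ref{special_limit_eq}, which states that $r^{\star}_{s_2}\to 0$ as $\omega\to\infty$. The conclusion then follows by unwinding the definition of this limit: for the given $\epsilon>0$ there exists $W>0$ such that $r^{\star}_{s_2}<\epsilon$ for every $\omega>W$. Here I would also use Lemma~\ref{special_lem_threeroots} to note $r^{\star}_{s_2}>0$, so that $|r^{\star}_{s_2}-0|=r^{\star}_{s_2}$ and the two-sided $\epsilon$-neighbourhood of the limit $0$ collapses to the one-sided bound claimed in the statement.

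There is no genuine obstacle in this corollary, since the limiting behaviour was the substantive step and is already proved in Corollary~\ref{special_limit_eq}. The only point worth flagging is that monotonicity of the map $\omega\mapsto r^{\star}_{s_2}$ is not needed: the bare existence of the limit suffices to extract the threshold $W$. Thus the proof is a one-line appeal to Corollary~\ref{special_limit_eq}, exactly mirroring the corollary that follows Theorem~\ref{thm:eql set theorem multiple} in the spiraling case.
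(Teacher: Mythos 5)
Your proof is correct and follows the same route the paper intends: the corollary is stated immediately after Corollary~\ref{special_limit_eq} precisely because it is the $\epsilon$--$W$ unwinding of that limit, just as the analogous corollary in the spiraling case is justified by Lemma~\ref{lem:r_limit_multiple}. Your additional remarks (identifying the radius of $L_2$ with $r^{\star}_{s_2}$, using positivity from Lemma~\ref{special_lem_threeroots}, and noting that monotonicity is unnecessary) are sound but add nothing beyond the paper's implicit one-line argument.
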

\subsection{Stability of equilibrium points}\label{sec:special_Stability of equilibrium points}
\begin{thm}\label{thm:stability}
For the system described in \eqref{eq:transferred_dynamics} with $k_1=0$,
let $0<\frac{k}{\omega R^{3}}<\frac{2}{3\sqrt{3}}$, and $(r_{s_{1}}^*,\psi_{s_{1}}^*)$, $(r_{s_{2}}^*,\psi_{s_{2}}^*)$ be the equilibrium points defined in \eqref{special_threeroots} and Theorem \ref{lem:two_eql_points}. Then 
\begin{enumerate}
    \item$(r_{s_{2}}^{\star},\psi_{s_{2}}^{\star})$ is asymptotically stable
    \item $(r_{s_{1}}^*,\psi_{s_{1}}^*)$ is a saddle point
\end{enumerate}
\end{thm}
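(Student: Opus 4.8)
The plan is to linearize \eqref{eq:transferred_dynamics} with $k_1=0$ about each of the two equilibrium points and apply the standard trace--determinant test for planar systems: an equilibrium of a two-dimensional autonomous system is a saddle when the Jacobian has negative determinant, and it is asymptotically stable when the determinant is positive and the trace is negative (both eigenvalues then have strictly negative real part, so Lyapunov's indirect method applies with no degenerate case). Writing the right-hand side as $\dot r = f(r,\psi)$, $\dot\psi = g(r,\psi)$ and letting $d^2 = r^2+R^2-2rR\cos\psi$ denote the squared pursuer--evader distance, I would compute the four partial derivatives of $f$ and $g$ and evaluate them at an equilibrium.

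The key simplification is the equilibrium relation $r^\star = R\cos\psi^\star$ coming from \eqref{eq:special_r_eql} (i.e.\ $\dot r = 0$). Substituting it makes the numerator $r-R\cos\psi$ of $f$ vanish and collapses the distance to $d^2 = R^2 - (r^\star)^2$. Consequently, every term in the partial derivatives proportional to $(r-R\cos\psi)$ drops out, and the Jacobian at an equilibrium reduces to the clean form
\[
J = \begin{bmatrix} \dfrac{k}{P} & \dfrac{kR\sin\psi^\star}{P} \\[2mm] -\dfrac{kR\sin\psi^\star}{(r^\star)^2 P} & -\dfrac{2k}{P}\end{bmatrix},\qquad P := (R^2-(r^\star)^2)^{3/2}>0,
\]
where the $(2,2)$ entry additionally requires the identities $\cos\psi^\star = r^\star/R$ and $\sin^2\psi^\star = (R^2-(r^\star)^2)/R^2$ to simplify.

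From here the two claims fall out. The trace is
\[
\operatorname{tr}(J) = \frac{k}{P} - \frac{2k}{P} = -\frac{k}{P} < 0
\]
at \emph{both} equilibria, so the trace never obstructs stability. The determinant, after using $R^2\sin^2\psi^\star = R^2-(r^\star)^2$, simplifies to
\[
\det(J) = \frac{k^2}{P^2}\!\left(\frac{R^2\sin^2\psi^\star}{(r^\star)^2}-2\right) = \frac{k^2\bigl(R^2-3(r^\star)^2\bigr)}{(r^\star)^2\,P^2},
\]
whose sign is governed entirely by the threshold $r^\star = R/\sqrt3$. Invoking the bounds from Lemma \ref{special_lem_threeroots} and Theorem \ref{lem:two_eql_points}, namely $r_{s_1}^\star > R/\sqrt3$ and $r_{s_2}^\star < R/\sqrt3$, I would conclude that $\det(J) < 0$ at $(r_{s_1}^\star,\psi_{s_1}^\star)$, making it a saddle, while $\det(J) > 0$ with negative trace at $(r_{s_2}^\star,\psi_{s_2}^\star)$, making it asymptotically stable.

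I expect the main obstacle to be the bookkeeping in the four partial-derivative computations, since each involves differentiating the $d^{-3}$ factor and produces several terms; this is tamed by the observation that $r^\star = R\cos\psi^\star$ annihilates exactly the terms carrying the factor $r-R\cos\psi$. The only genuinely delicate simplification is the $(2,2)$ entry, where the surviving $\cos\psi^\star$ and $\sin^2\psi^\star$ contributions must be combined through the two trigonometric identities to produce the compact value $-2k/P$. Getting this constant right is what pins down the clean determinant formula and hence the $R/\sqrt3$ dichotomy that aligns precisely with the root ordering of Lemma \ref{special_lem_threeroots}.
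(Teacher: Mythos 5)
Your proof is correct, and although it starts from the same linearization as the paper, it finishes by a genuinely different and much leaner route. Your Jacobian is in fact the right one: the paper's displayed $(2,1)$ entry contains a typo, but the eigenvalue formula the paper actually uses is exactly the one generated by your matrix, so the two computations agree. The divergence is in how the sign pattern of the spectrum is extracted. The paper eliminates $r^{\star}$-dependence in favor of $\omega$, substituting the equilibrium relation $\omega = kR\sin\psi^{\star}_{s}/\bigl(r^{\star}_{s}(R\sin\psi^{\star}_{s})^{3}\bigr)$ to obtain the explicit eigenvalues $\lambda_{1,2}=\frac{1}{2(R^{2}-r^{\star2}_{s})^{3/2}}\bigl(-k\pm\sqrt{9k^{2}-4\omega^{2}(R^{2}-r^{\star2}_{s})^{3}}\bigr)$, and must then place the discriminant relative to $k^{2}$: for $(r_{s_{2}}^{\star},\psi_{s_{2}}^{\star})$ this is a short bound using $\omega^{2}>27k^{2}/(4R^{6})$, but for $(r_{s_{1}}^{\star},\psi_{s_{1}}^{\star})$ it consumes all of Appendix~C --- the closed-form trigonometric representation of $r_{s_{1}}^{\star}$ (Lemma \ref{lem:simplified_rs1}), a monotonicity argument in $\omega R^{3}$ (Lemma \ref{monotonically}), and two limiting evaluations (Lemmas \ref{upper_bound_on_wR^3} and \ref{lower_bound_on_wR^3}) --- just to pin $9k^{2}-4\omega^{2}(R^{2}-r^{\star2}_{s_{1}})^{3}$ inside $[k^{2},9k^{2}]$ and conclude that the eigenvalues are real with opposite signs. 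You instead keep $\omega$ out of the determinant altogether: with $P=(R^{2}-r^{\star2})^{3/2}$, your formula $\det J=k^{2}(R^{2}-3r^{\star2})/\bigl(r^{\star2}P^{2}\bigr)$ has its sign decided by the single dichotomy $r^{\star}\lessgtr R/\sqrt{3}$, which is precisely the root ordering already proved in Lemma \ref{special_lem_threeroots}, and the trace $-k/P<0$ then settles both claims at once via the trace--determinant test (the strict inequalities $r_{s_{1}}^{\star}>R/\sqrt{3}>r_{s_{2}}^{\star}$ guarantee hyperbolicity, so Lyapunov's indirect method and the saddle classification of the nonlinear system both apply). In effect your argument makes Appendix~C unnecessary. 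The trade-off is quantitative information: the paper's explicit expression \eqref{special_eigen_value} is reused in the remark following the theorem to argue that the convergence rate scales roughly as the inverse cube of the pursuer radius, a rate estimate that your sign-only argument does not provide.
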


\begin{proof}
Linearizing \eqref{eq:transferred_dynamics} with $k_1=0$ at any equilibrium point $(r^{\star}_{s},\psi^{\star}_{s})$, the Jacobian turns out to be: 
\begin{align}
J & = \begin{bmatrix}
\frac{k}{\left(R\sin\left(\psi^{\star}_{s}\right)\right)^{3}} & \frac{k}{\left(R\sin\left(\psi^{\star}_{s}\right)\right)^{2}} \\
\frac{-k R^{2}\cos^{2}\left(\psi^{\star}_{s}\right)}{\left(R\sin\left(\psi_{0}^{\star}\right)\right)^{2}} & \frac{-2k}{R^{3}\sin^{3}\left(\psi^{\star}_{s}\right)}
\end{bmatrix}
\end{align}

The eigenvalues of $J$ can be readily computed as:
\begin{align*}
\lambda_{1,2} & =\frac{-k}{2R^{3}\sin^{3}\left(\psi^{\star}_{s}\right)}\pm\frac{1}{2}\sqrt{\frac{9k^{2}}{R^{6}\sin^{6}\left(\psi^{\star}_{s}\right)}-4\omega^{2}}.\\
\end{align*} 
From \eqref{eq:special_r_eql} and \eqref{special_condition_for_eql}, $sin^{3}\left(\psi^{\star}_{s}\right)=\left(\frac{\sqrt{R^{2}-r^{\star2}_{s}}}{R}\right)^{3}$\!.
Hence
\begin{equation}\label{special_eigen_value}
\lambda_{1,2}=\frac{1}{2(R^{2}-r^{\star2}_{s})^{\frac{3}{2}}}\left(-k\pm\sqrt{9k^{2}-4\omega^{2}(R^{2}-r^{\star2}_{s})^{3}}\right)
\end{equation}
Now we know that $r^{\star}_{s_{2}}<\frac{R}{\sqrt{3}}<r^{\star}_{s_{1}}$. We claim that for $r^{\star}_{s}=r^{\star}_{s_{2}}$ the maximum real value of $9k^{2}-4\omega^{2}(R^{2}-r^{\star}_{s_{2}})^{3}<k^2$. Since $r^{\star}_{s_{2}}<\frac{R}{\sqrt{3}}$, we can write 
\begin{equation}
\begin{aligned}
9k^{2}-4\omega^{2}(R^{2}-r^{\star^2}_{s})^{3} &< 9k^{2}-4\omega^{2}(R^{2}-\frac{R^2}{3})^{3}\\
&=9k^{2}-\frac{4\times 8}{27}(\omega^{2} R^6)
\end{aligned}
\end{equation}
However, we have assumed that $\omega>\frac{3\sqrt{3}k}{2R^{3}}$ i.e. $\omega^{2}>\frac{27k^{2}}{4R^{6}}$. Hence $9k^{2}-\frac{4\times 8}{27}(\omega^{2} R^6)<9k^2-8k^2=k^2$. Claim 1 follows. See Appendix C for the proof of claim 2.
\end{proof}
\begin{rem} Note that the real parts of the eigenvalues of the Jacobian $J$ are roughly proportional to the inverse of the cube of the pursuer radius (see \eqref{special_eigen_value}). This implies that the decreasing the radius accelerates the herding action, assuming that one can increase the $\omega$ appropriately to keep satisfying  $0<\frac{k}{\omega R^{3}}<\frac{2}{3\sqrt{3}}$.
\end{rem}

\begin{rem}
As shown in Corollaries \ref{special_limit_eq} and \ref{col:r_star<epsilon}, one can increase $\omega$ to ensure that the stable equilibrium point $r^{\star}_{s_{2}}$ is arbitrarily close to the origin. Consequently, one can independently control the final distance of the evader from the target and the rate of convergence to this final position conveniently through $\omega$ and $R$, respectively.  
\end{rem}

\subsection{Translation of Results to $\{x_s,y_s\}$/$\{r_s,\phi_s\}$-frame}\label{subsubsection:translation of results to static frame}
We can interpret all the results in the $\{x_s,y_s\}$/$\{r_s,\phi_s\}$ frame for physical understanding of the solution. Let us denote the disc centered at $z$ with radius $R$ as $D_z^R:=\{x\in\mathbb{R}^2:\|{x}\|_2\le R\}$.
    \begin{thm}
        Consider system (\ref{eq:actual_system}) with $k_1=0$. If $\frac{k}{\omega R^{3}}< \frac{2}{3\sqrt{3}}$, then (\ref{eq:actual_system}) has two limit cycles inside $D_{z}^{R}$, $\forall i\in\{0,1,\dots,n-1\}$.
        \begin{enumerate}
            \item The first limit cycle is described by the trajectory $L_{2}:=\{r^{\star}_{s_{2}} =r^{\star}_{s_{2}},\phi_{s_{2}}^{\star} =\psi_{s_{2}}^{\star}+\omega t,~\forall t\geq 0\}$.
            \item $L_{2}$ is asymptotically stable.
            \item The second limit cycle is described by the trajectory $L_{1}:=\{r^{\star}_{s_{1}} =r^{\star}_{s_{1}},\psi_{s_{1}}^{\star} =\psi_{s_{1}}^{\star}+\omega t,~\forall t\geq 0\}$.
            \item $L_{1}$ is unstable.
        \end{enumerate} 
        \label{thm:eql set theorem}
    \end{thm}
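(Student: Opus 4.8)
The plan is to transport the analysis already carried out in the time-invariant rotating frame $\{r_s,\psi_s\}$ back to the static frame $\{r_s,\phi_s\}$ through the relation $\phi_s=\psi_s+\omega t$, exactly as in the translation arguments preceding this theorem (cf. Theorem \ref{thm:eql set theorem multiple}). Under this program the proof reduces to combining Theorem \ref{lem:two_eql_points} and Theorem \ref{thm:stability}, so I expect it to be short.

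First I would record that the radial coordinate $r_s$ is invariant under the rotation relating the two frames, while the angular coordinates differ only by the state-independent drift $\omega t$. Therefore an equilibrium $(r^{\star}_{s_i},\psi^{\star}_{s_i})$ of \eqref{eq:transferred_dynamics} with $k_1=0$ --- at which both $r_s$ and $\psi_s$ are held fixed --- lifts to a trajectory of \eqref{eq:actual_system} on which $r_s\equiv r^{\star}_{s_i}$ is constant and the polar angle evolves as $\phi^{\star}_{s_i}=\psi^{\star}_{s_i}+\omega t$. This is precisely a circle of radius $r^{\star}_{s_i}$ traced at the constant angular rate $\omega$, i.e. a periodic orbit. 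Applying this to the two equilibria furnished by Theorem \ref{lem:two_eql_points} produces the closed orbits $L_1$ and $L_2$, establishing items 1 and 3. Their containment in $D_z^R$ is immediate from the bound $R>r^{\star}_{s_1}>\frac{R}{\sqrt{3}}>r^{\star}_{s_2}>0$ of Theorem \ref{lem:two_eql_points} (equivalently Lemma \ref{special_lem_threeroots}), since both radii are strictly smaller than $R$.

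I would then transfer the stability conclusions. Because at each instant the change of coordinates is a rigid rotation about the origin --- a diffeomorphism preserving radial distance and infinitesimal angular displacement --- the local phase portrait around each equilibrium in $\{r_s,\psi_s\}$ is carried faithfully onto the behavior of nearby trajectories relative to the corresponding circle in $\{r_s,\phi_s\}$. Hence the asymptotic stability of $(r^{\star}_{s_2},\psi^{\star}_{s_2})$ from Theorem \ref{thm:stability} yields the asymptotic (orbital) stability of $L_2$ (item 2), and the saddle character of $(r^{\star}_{s_1},\psi^{\star}_{s_1})$ yields the instability of $L_1$ (item 4).

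The only point requiring care, and the step I would state explicitly rather than gloss over, is the equivalence between stability of an equilibrium in the rotating frame and orbital stability of its lifted periodic orbit. The argument is clean here precisely because the frame rotates at the fixed, state-independent rate $\omega$: perturbations measured in the $\{r_s,\psi_s\}$ coordinates coincide with the co-rotating deviation from $L_i$, so no Floquet or Poincar\'e-map machinery is needed and the equivalence follows directly from the coordinate map being an isometry at each time. Everything else is a direct citation of the prior results.
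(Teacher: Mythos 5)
Your proposal is correct and takes essentially the same route as the paper: the paper's proof is a one-line citation of Lemma \ref{lem_existance_of_eql_points}, Lemma \ref{special_lem_threeroots}, and Theorem \ref{thm:stability}, combined with the frame translation $\phi_s=\psi_s+\omega t$ set up in Section \ref{subsubsection:translation of results to static frame}, which is exactly the combination you assemble (your citation of Theorem \ref{lem:two_eql_points} subsumes the two lemmas). Your explicit justification of the stability transfer---that the rotating-frame change of coordinates is a time-dependent isometry, so no Floquet or Poincar\'e-map argument is needed---is precisely the content the paper leaves implicit.
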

    \begin{proof}
    The proof follows directly from Lemmas \ref{lem_existance_of_eql_points}, \ref{special_lem_threeroots}, and Theorem \ref{thm:stability}. 
    \end{proof}
\subsection{Estimation of the Region of Attraction of $(r^{\star}_{s_{2}},\psi^{\star}_{s_{2}})$}\label{special_section3c}
Since the asymptotic stability of $(r^{\star}_{s_{2}},\psi^{\star}_{s_{2}})$ has already been established, we aim to estimate a maximal region of attraction (ROA). 
\subsubsection{ROA in $(u_s,v_s)$-frame}
First note that the asymptotic stable equilibrium set defined in \eqref{eq:R_E} specializes to $R_E=\{(r_{s_{2}}^{\star},\psi_{s_{2}}^{\star})\}$ and is not dependent on $\boldsymbol{\kappa}$. Hence, the stability set of $R_{E}$ defined earlier in \eqref{eq:stability set}, becomes:
\begin{align*}
    S = \{(r_s(0),\psi_s(0))|(r_s(t),\psi_s(t))\to R_E~ \text{as}~ t\to \infty\}.
\end{align*}
An estimate of $S$ (now equivalent to the conventional notion of ROA of $(r_{s_{2}}^{\star},\psi_{s_{2}}^{\star})$) can be computed using the optimization framed in Problem \ref{optimation_problem}. Note that in this case, since the solution of Problem \ref{optimation_problem}, i.e. $\Omega_s$, is not dependent on $\boldsymbol{\kappa}$, $\Omega_s$ is directly an estimate of $S$.
Two instances of $\Omega_{s}$ for the system \eqref{eq:spiral_ROA_shifted_equation_farthest} with $k_1=0$ are shown in Fig. \ref{ROA1_ROA2} in a two separate $\{u_s,v_s\}$ frames.

\subsubsection{ROA in $\{x_s,y_s\}$ frame}
Clearly, \eqref{eq:actual_system} is a time-varying system, and thus, the region of attraction of the stable limit cycle changes with time in the $\{x_s,y_s\}$ frame, while this set is invariant in the $\{u_s,v_s\}$ frame. It is evident that both the actual ROA and our ellipsoidal approximation, $\Omega_{s}$, are rotating around the origin in the $\{x_s,y_s\}$ frame at a constant angular velocity $\omega$. Two snapshots of this rotation of $\Omega_s$ are shown in Fig. \ref{ROA1_ROA2}. Clearly, $\Omega_s$ depends on the instantaneous orientation of the $\{u_s,v_s\}$ frame (denoted by $\theta$ in Fig. \ref{ROA1_ROA2}), which in turn depends on the relative position of the pursuer to the evaders. To address this $\{u_s,v_s\}$ dependence, we refine the notation of our estimated ROA to $\Omega_{s}(\theta)$ as shown in Fig. \ref{ROA1_ROA2}. Based on the above discussion, an appropriate notion of pursuer position independent ROA (PI-ROA) in the fixed $(x_s,y_s)$-frame in this case can be defined as follows.

Let the boundary of $D_{z}^{R}$ be represented as $\partial D_{z}^{R}$. Denote the solution of system (\ref{eq:evader_dynamics}) ($k_1=0$ and dropping the subscript $i$) corresponding to initial evader position $\{x_{s_{0}},y_{s_{0}}\}$ and initial pursuer position $\{P_{0}\}$, as $\phi_{s}(t;x_{s_{0}},y_{s_{0}},P_{0})$. Note that the $\phi_{s}(t;x_{s_{0}},y_{s_{0}},P_{0})$ will converge onto the stable limit cycle $L_{2}$.

\begin{defn}
    The PI-ROA of the limit set $L_{2}$ is defined as:
    $S^{\{x_s,y_s\}} =\{\{x_{s_{0}},y_{s_{0}}\}\in \mathbb{R}^{2}|\phi^{s}_{i}(t;x_{s_{0}},y_{s_{0}},P_{0})\to L_{2}\quad \text{as}~ t\to \infty,~\forall P_{0}\in \partial D_{z}^{R}\}$
\end{defn} 
Clearly, a simple approximation (though possibly conservative) of $S^{\{x_s,y_s\}}$ can be computed by the intersection:
\begin{align*}
    \hat{S}^{\{x_s,y_s\}}=\underset{\theta\in \{0,2\pi\}}\cap \Omega_{s}(\theta)
\end{align*}
An example of this intersection for $R=2, k = 1$ and $\omega=1$ is plotted in Fig. \ref{ROA_intersection_2}, where the pink circle encompasses $\hat{S}^{\{x_s,y_s\}}$. Evidently, any evader with initial position within $\hat{S}^{\{x_s,y_s\}}$ converges to $L_{2}$ as $t\to \infty$ regardless of the initial position of the pursuer on $D_{z}^{R}$.

\begin{figure}[ht]
\centering

\begin{subfigure}{0.49\linewidth}
    \centering
    \includegraphics[width=\linewidth]{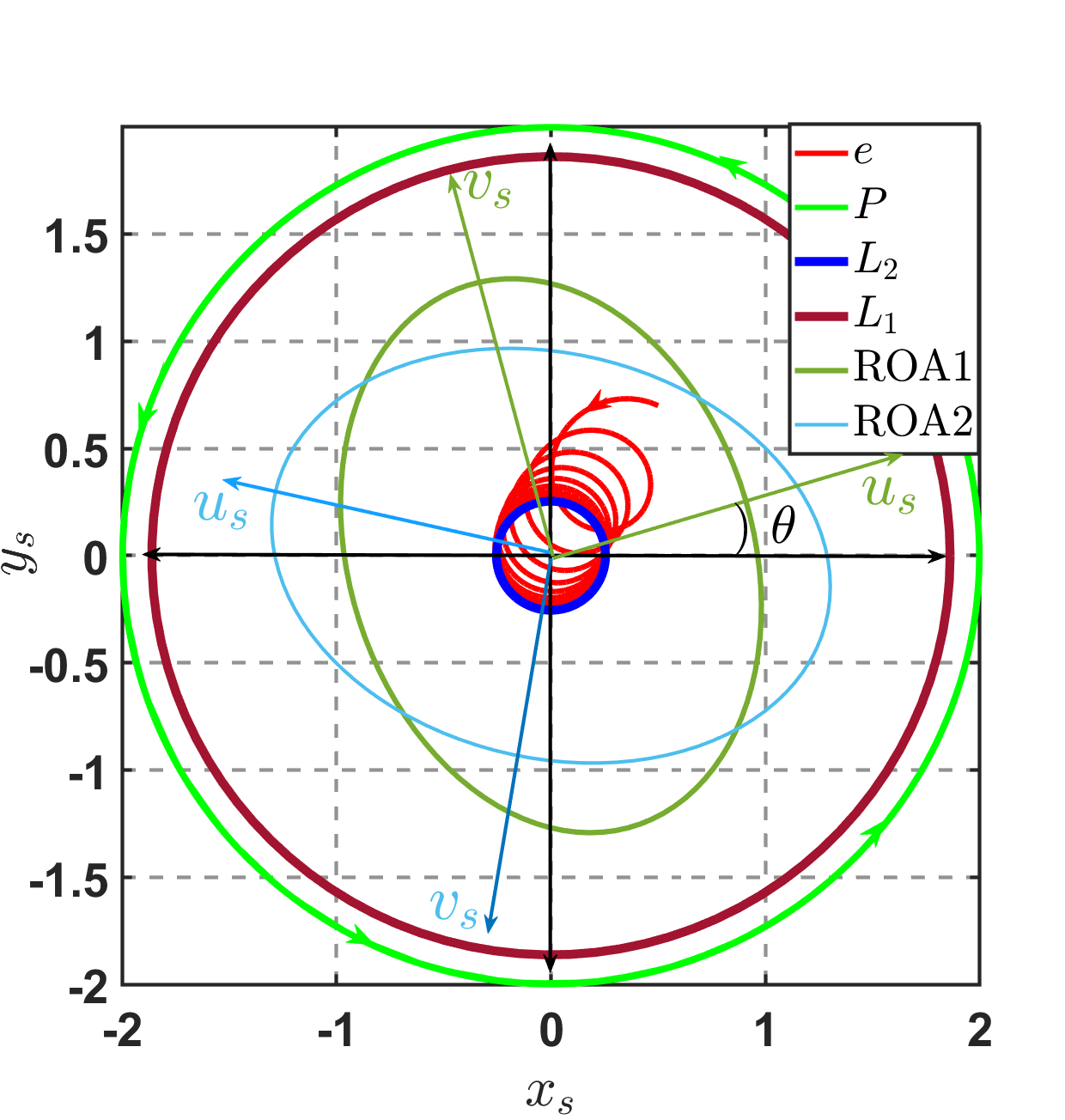}
    \caption{}
    \label{ROA1_ROA2}
\end{subfigure}%
\hfill
\begin{subfigure}{0.51\linewidth}
    \centering
    \includegraphics[width=\linewidth]{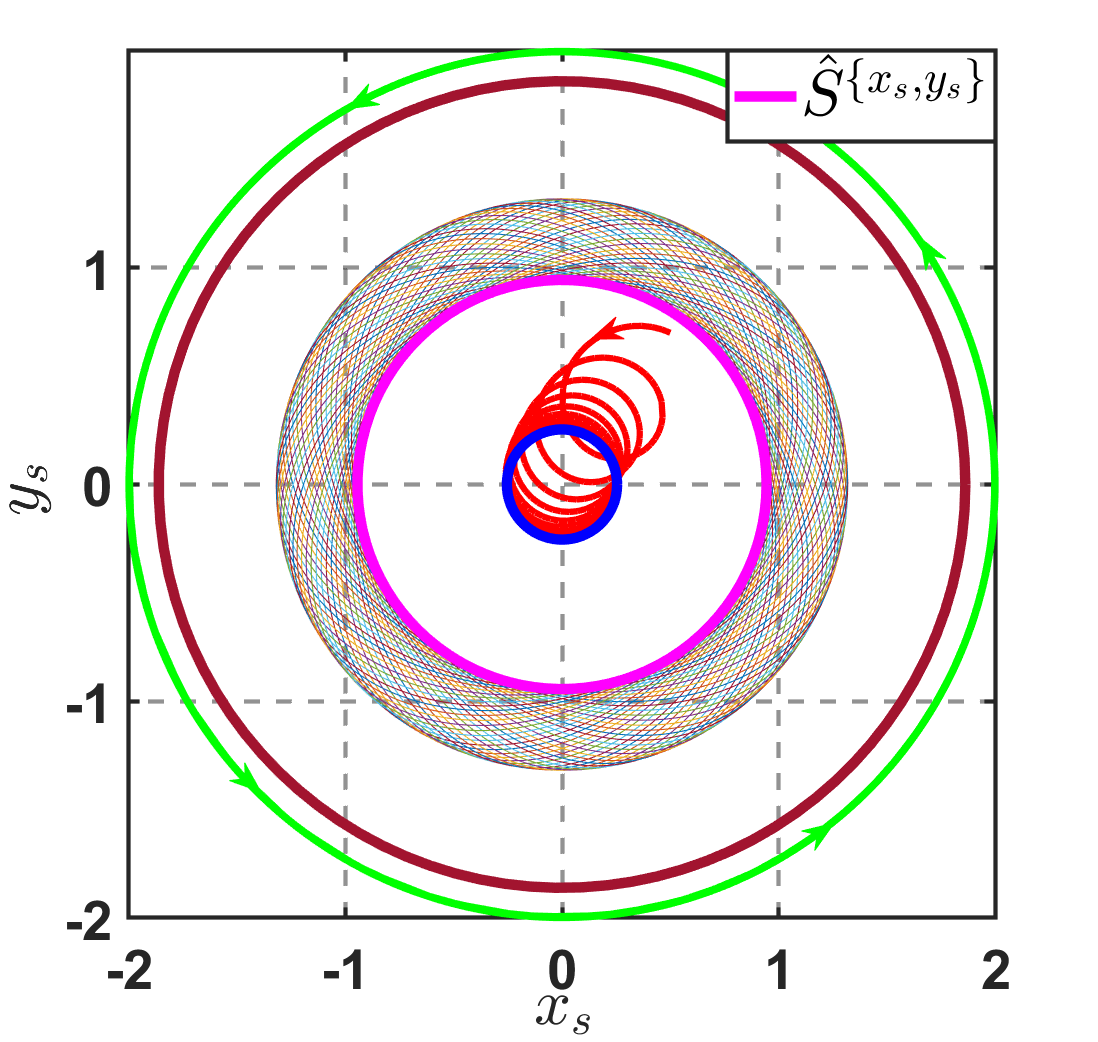}
    \caption{}
    \label{ROA_intersection_2}
\end{subfigure}

\caption{(a) Rotated ROA in $\{x,y\}$-frame, (b) Intersection of all ROA represents a circle in $\{x,y\}$-frame}
\label{ROA_interseaction}
\end{figure}

 We observe that the entire stability analysis from Section \ref{sec:special case} is independent of the specific position of each evader $e_{i}$ as long as $\{x_{e_{i}}(0),y_{e_{i}}(0)\}\in \hat{S}^{\{x_s,y_s\}}$. Clearly, the computation of the ROA is independent of the evader index and hence it is guaranteed that if $\{u^i_{s}(0),v^i_{s}(0)\}\in \Omega_s,~\forall i=\{0,1,\dots,n-1\}$ then each $u^i_{s} \to u^{i\star}_{s_{2}}=r^{i\star}_{s_{2}}\sin(\psi^{i\star}_{s_{2}})$ and $v^i_{s} \to v^{i\star}_{s_{2}}=r^{i\star}_{s_{2}}\cos(\psi^{i\star}_{s_{2}})$ as $t\to \infty$. In the $\{x_s,y_s\}$ frame, equivalently, Theorem \ref{thm:eql set theorem} holds for each evader, as does the computation of the PI-ROA, $\hat{S}^{\{x_s,y_s\}}$. 
\section{Simulation Results}\label{sec:simulation results}
This section illustrates the above results with typical convergent trajectories in various reference frames.
\subsection{Spiral pursuit: $k_1>0$}
\subsubsection{Single evader}
Figs. \ref{singh8} and \ref{singh9} depict the behavior
of an evader when $R=2$, $k=1$, $k_1=1$, and initial position $(x_{e_{0}},y_{e_{0}})=(0.7071,0.7071)$, with $\omega=2$ and $\omega=5$, respectively. We observe that the evader's trajectory forms a shifting spiral as it approaches the limit set, which is a circle with radius $r_{0}^{\star}$. Meanwhile, the pursuer converges to $R^{\star}$. Comparing these two figures, it is clear that $\omega$ is inversely related to the radius $r_{0}^{\star}$ of the limit set $L$, as expected from Lemma \ref{lem:r_limit_multiple}.

\begin{figure}[ht]
\centering

\begin{subfigure}{0.556\linewidth}
\hspace{-0.8cm}
    \centering
    \includegraphics[width=\linewidth]{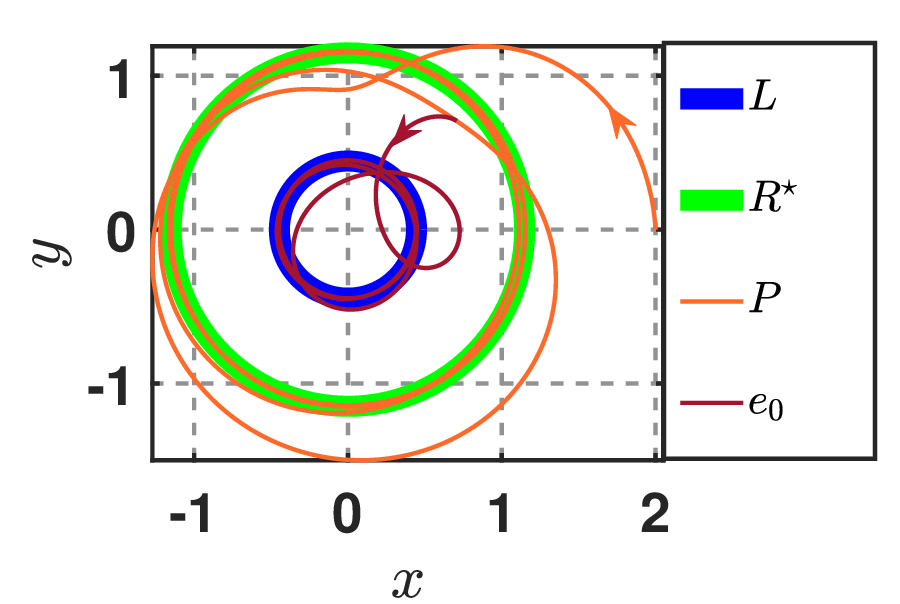}
    \caption{}
    \label{singh8}
\end{subfigure}%
\hspace{-0.5cm}
\begin{subfigure}{0.46\linewidth}
    \centering
    \includegraphics[width=\linewidth]{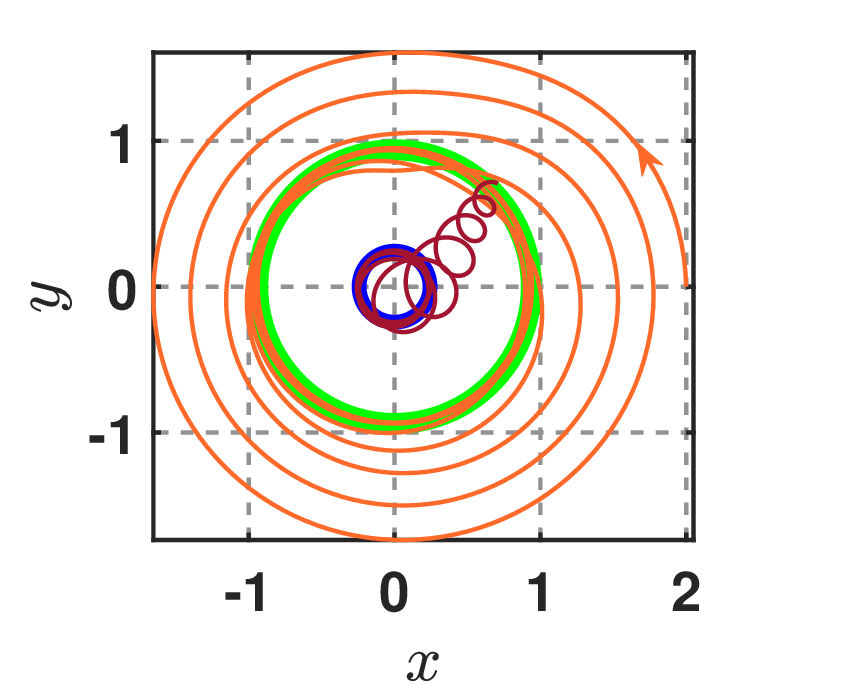}
    \caption{}
    \label{singh9}
\end{subfigure}

\caption{Evader and pursuer trajectories in $\{x,y\}$ coordinate frame for (a): $R=2, \omega=2, k=1, k_1=1$, resulting in $r_{0}^{\star}=0.4458$ and $R^{\star}=1.149$, (b): $R=2, \omega=5, k=1, k_1=1$, resulting in $r_{0}^{\star}=0.2434$ and $R^{\star}=0.9385$}
\label{single_pursuer_simulation_actual}
\end{figure}

Figs. \ref{singh10} and \ref{singh11} show the corresponding  evader trajectories in the $\{r,\psi\}/\{u,v\}$ frame, where the evader converges to the equilibrium point $(r_{0}^{\star},\psi_{0}^{\star})$. The effect of decreasing $\omega$ is evident from the reduced frequency of the spiral in Fig. \ref{single_pursuer_simulation_transferred}.
\begin{figure}[ht]
\centering
 \begin{subfigure}{0.5\linewidth}
     \includegraphics[width=\linewidth]{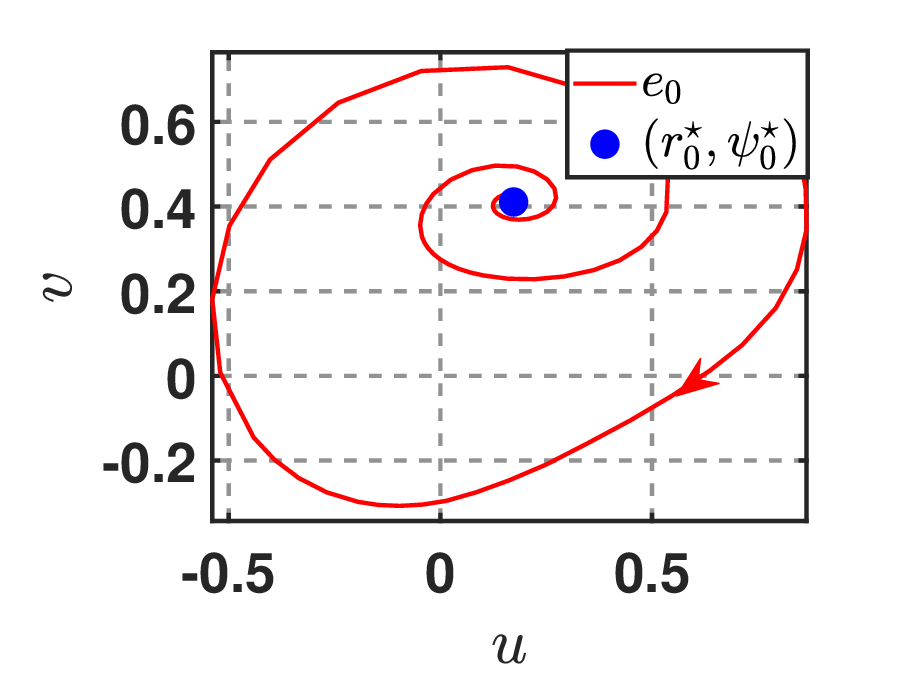}
     \caption{$\omega=2,r_{0}^{\star}=0.4458$}
     \label{singh10}
 \end{subfigure}
 \hfill
 \begin{subfigure}{0.48\linewidth}
 \centering
     \includegraphics[width=\linewidth]{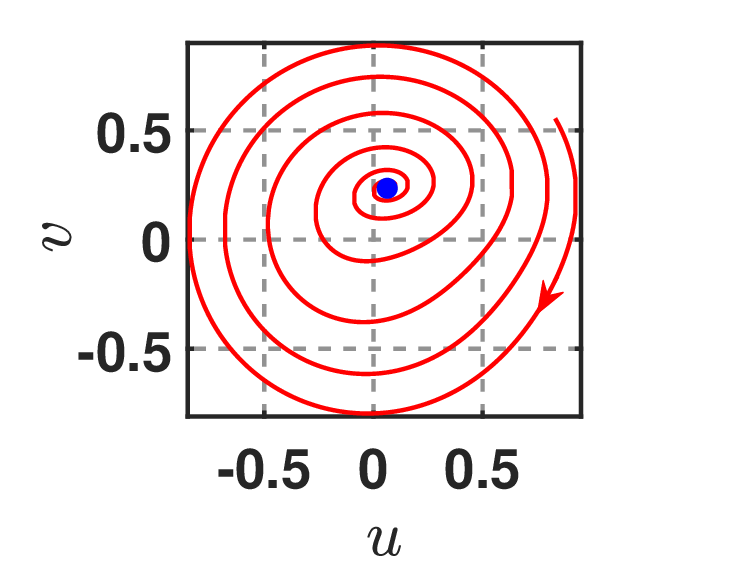}
     \caption{$\omega=5,r_{0}^{\star}=0.2434$}
     \label{singh11}
 \end{subfigure}
 \caption{Trajectories in $\{r,\psi\}$ frame for the two cases shown in Fig. \ref{single_pursuer_simulation_actual}}
 \label{single_pursuer_simulation_transferred}
\end{figure}
\subsubsection{Multiple evaders}
In the case of (multiple) two evaders, with $e_0$ as the outermost evader having initial position $(x_{e_{0}},y_{e_{0}})=(0.7071,0.7071)$, i.e., $\boldsymbol{\kappa}=1$, and another evader $e_i$, where $i\in\{1,\dots,n-1\}$, with initial position $(x_{e_{i}},y_{e_{i}})=(-0.3535,0.3535)$ as shown in Fig. \ref{singh12}, both evaders asymptotically converge onto the limit set $L$ in the $\{x,y\}/\{r,\phi\}$ frame.
Fig. \ref{singh13} illustrates the corresponding evader trajectories in the $\{u,v\}/\{r,\psi\}$ coordinate frame, where all evaders converge to the common equilibrium point $(r^{\star},\psi^{\star})=(0.4458,1.1728)$, regardless of the initial positions.
\begin{figure}[ht]
\centering
\hspace{-0.5cm}
 \begin{subfigure}{0.555\linewidth}
     
     \includegraphics[width=\linewidth]{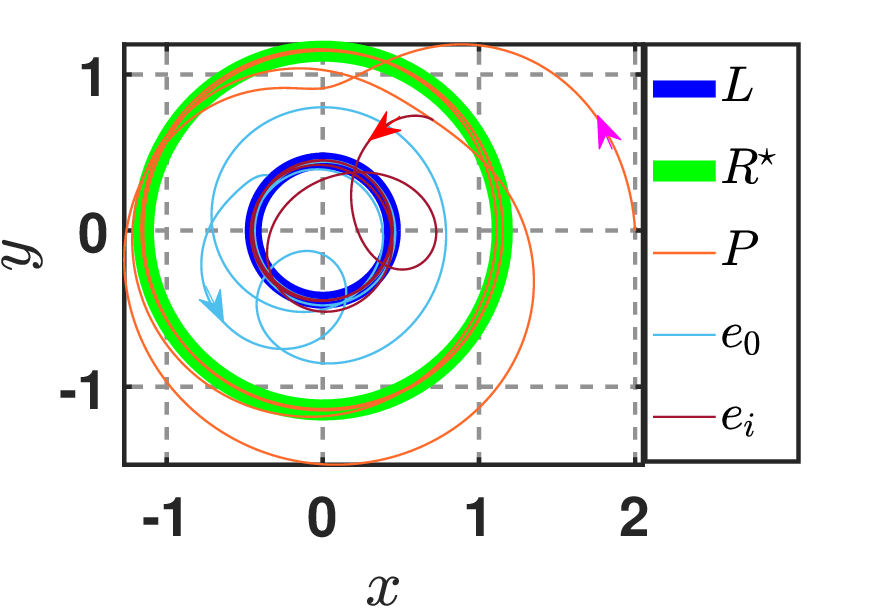}
     \caption{$\{x, y\}$ coordinate frame}
     \label{singh12}
 \end{subfigure}
 \hspace{-0.5cm}
 \begin{subfigure}{0.51\linewidth}
 \centering
     \includegraphics[width=\linewidth]{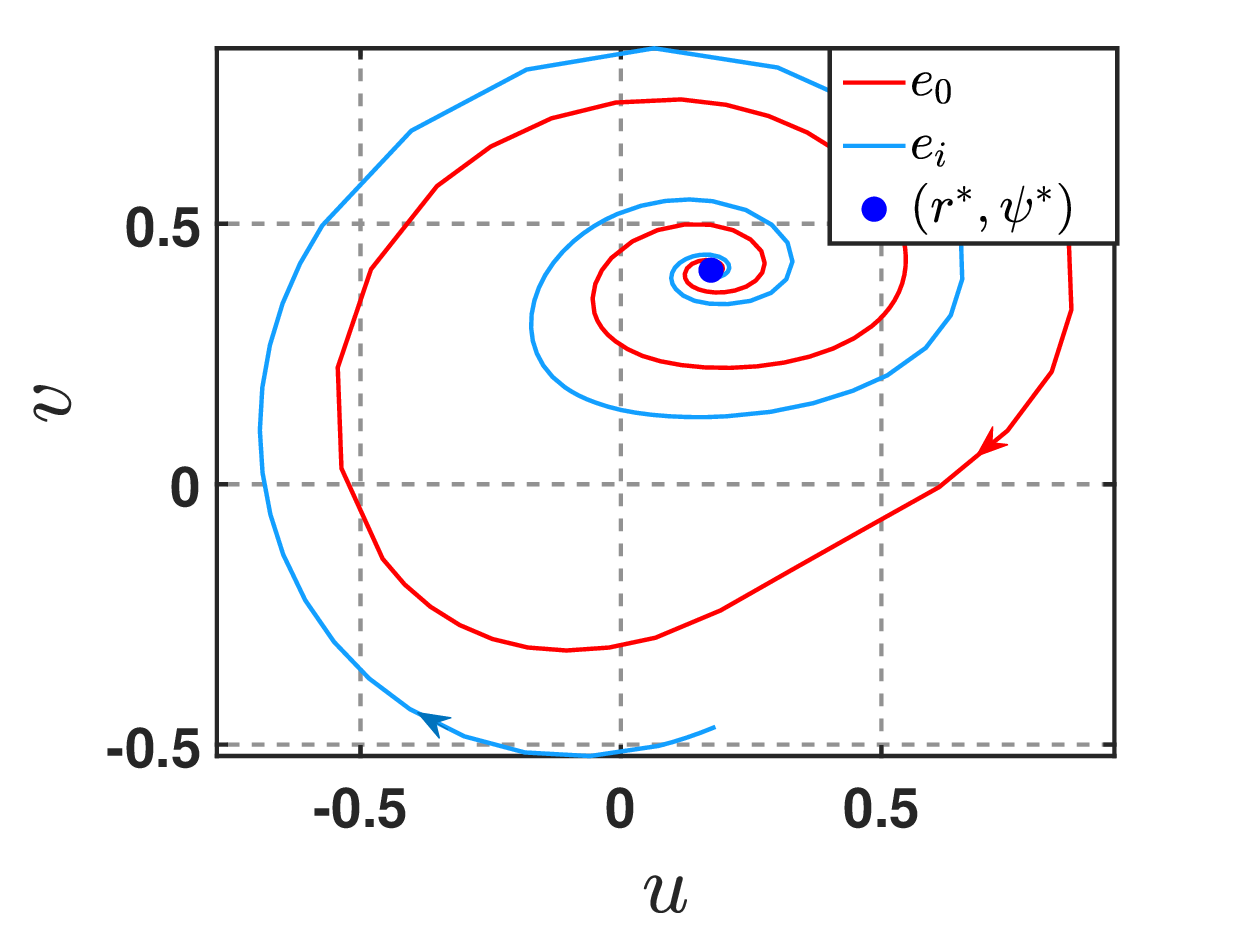}
     \caption{$\{r,\psi\}$ frame}
     \label{singh13}
 \end{subfigure}
 \caption{Evader and pursuer trajectory for $R = 2, \boldsymbol{\kappa}=1, \omega = 2, k = 1, k_1=1$}
 \label{multiple_pursuer_simulation_transferred}
\end{figure}
\subsection{Circular pursuit: $k_1=0$}
The multiple (three) evaders, shown in Fig. \ref{multiple_pursuer_simulation_actual}, asymptotically converge onto the limit set $L_{2}$ in the $\{x_s,y_s\}/\{r_s,\phi_s\}$ frame. 
\begin{figure}[ht]
\centering
 \begin{subfigure}{0.44\linewidth}
     \includegraphics[width=\linewidth]{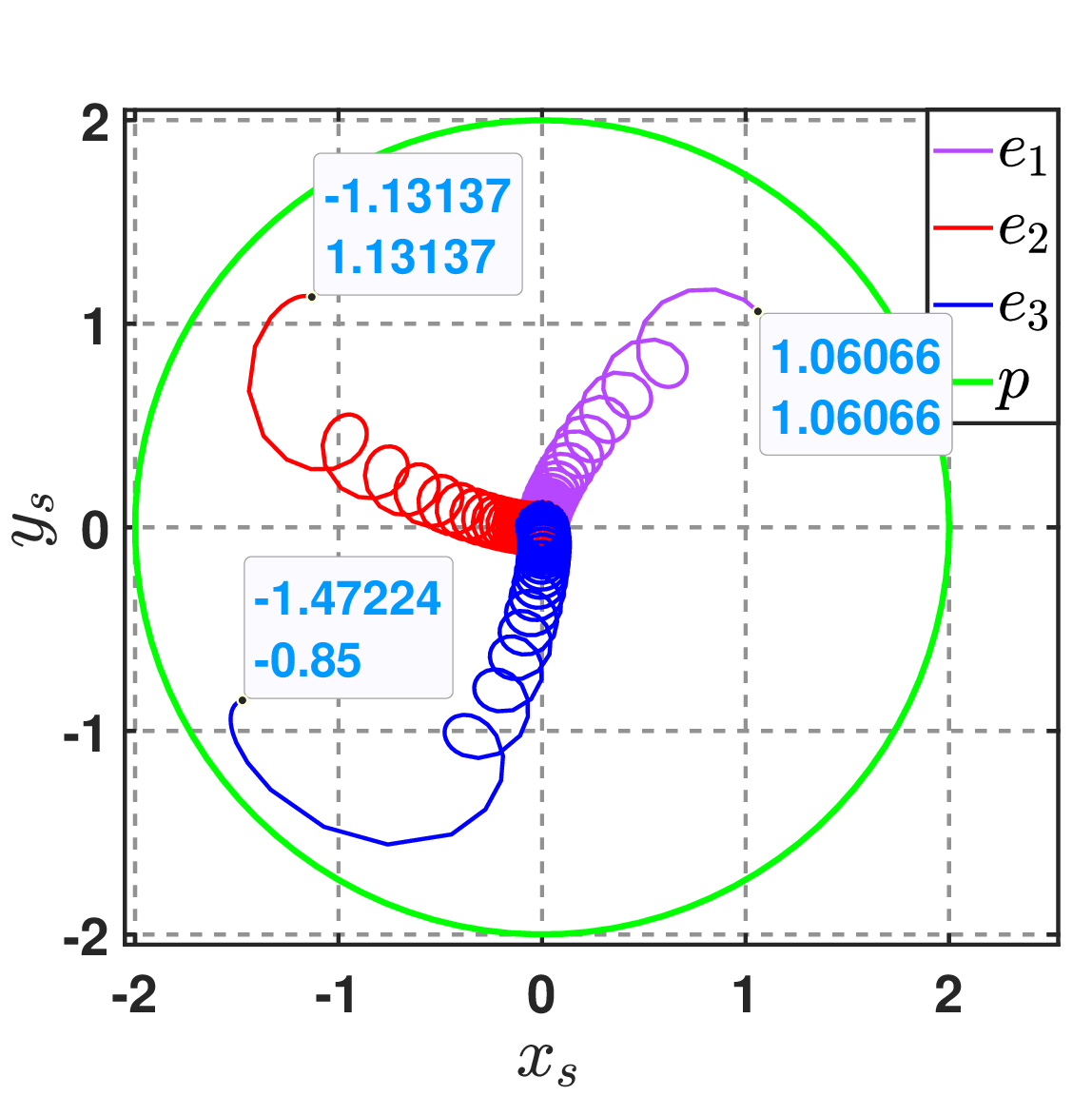}
     \caption{}
     \label{multiple_pursuer_simulation_actual}
 \end{subfigure}
 \hspace{0.2cm}
 \begin{subfigure}{0.51\linewidth}
 \centering
     \includegraphics[width=\linewidth]{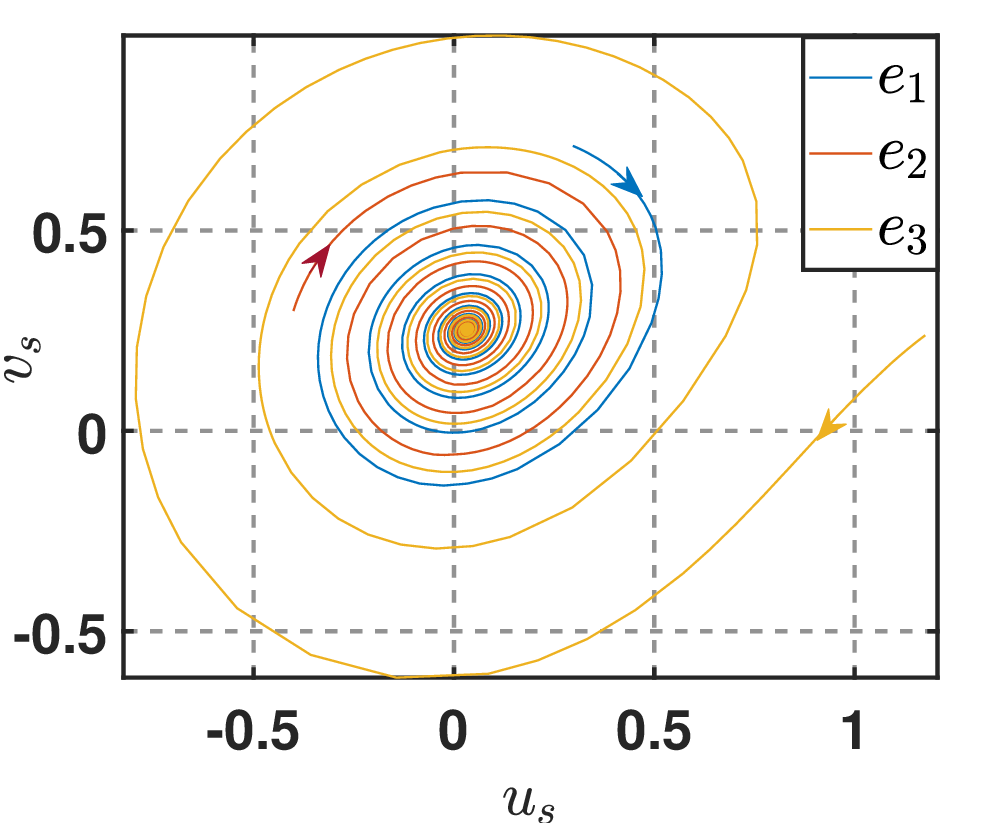}
     \caption{}
     \label{multiple_evader_simulation_transferred}
 \end{subfigure}
 \caption{Evader and pursuer trajectories in (a): $\{x_s,y_s\}$ coordinate frame, (b): $\{u_s,v_s\}$ coordinate frame for $R=2,\omega=2$, and $k=1$}
\end{figure}
Fig. \ref{multiple_evader_simulation_transferred} shows the corresponding evader trajectories in the $\{u_s,v_s\}/\{r_s,\psi_s\}$ coordinate frame, where all evaders converge to the common equilibrium point $(r^{\star}_{s_{2}},\psi_{s_{2}}^{\star})=(0.2541,1.4434)$, regardless of the initial positions.

\section{Conclusion}\label{conslusion}
In this paper, we have demonstrated that it is possible to herd multiple planar evaders into an arbitrarily small neighborhood of any target, provided they are located closer to the target than the pursuer at the start of the game. The resulting motions have been analyzed in detail, and various insights into the stability and region of attraction of the limiting periodic trajectories have been obtained. Variations of these pursuit laws should be capable of achieving more complex herding objectives and are currently under investigation.
\nocite{*}  
\bibliographystyle{IEEEtran}

\section*{Appendix A}
\textbf{Proof of Theorem \ref{thm:For-the-single}:} Linearizing \eqref{eq:transferred_dynamics} at equilibrium point $(r_{0}^{\star},\psi_{0}^{\star})$, the Jacobian matrix is: 
\begin{align}\label{eq:jacobian for farthest}
J_1 = \begin{bmatrix}
\frac{k\left(1 - Rk_{1}\cos(\psi_{0}^{\star})e^{k_{1}(r_{0}^{\star} - \boldsymbol{\kappa})}\right)}{\left(Re^{k_{1}(r_{0}^{\star} - \boldsymbol{\kappa})}\sin(\psi_{0}^{\star})\right)^{3}} & \frac{k}{\left(Re^{k_{1}(r_{0}^{\star} - \boldsymbol{\kappa})}\sin(\psi_{0}^{\star})\right)^{2}} \\
\frac{-k(2k_{1}r_{0}^{\star} + 1)}{\left(r_{0}^{\star}Re^{k_{1}(r_{0}^{\star} - \boldsymbol{\kappa})}\sin(\psi_{0}^{\star})\right)^{2}} & \frac{-2k\cos(\psi_{0}^{\star})}{r_{0}^{\star}R^{2}e^{2k_{1}(r_{0}^{\star} - \boldsymbol{\kappa})}\sin^{3}(\psi_{0}^{\star})}
\end{bmatrix}
\end{align}

Let $J_1(i,j)$ represents the $(i,j)th$ entry of Jacobian matrix $J_1$, where $(i,j)\in \{1,2\}\times\{1,2\}$.
Define $a=J_1(1,1), b=J_1(1,2), c=J_1(2,1)$ and $d=J_1(2,2)$. The eigenvalues of $J_1$ are:
\begin{align*}
\lambda_{1,2} & =\frac{a+d}{2}\pm\frac{\sqrt{(a+d)^{2}-4(ad-bc)}}{2}.\\
\end{align*} 
\textbf{Step 1: Analyzing $\frac{a+d}{2}$}
{\footnotesize
\begin{equation}
\begin{aligned}
    \frac{a+d}{2} &= \frac{k\left(1-Rk_{1}\cos(\psi_{0}^{\star})e^{k_{1}(r_{0}^{\star}-\boldsymbol{\kappa})}\right)}{\left(Re^{k_{1}(r_{0}^{\star}-\boldsymbol{\kappa})}\sin\left(\psi_{0}^{\star}\right)\right)^{3}} - \frac{-2k\cos(\psi_{0}^{\star})}{r_{0}^{\star}R^{2}e^{2k_{1}(r_{0}^{\star}-\boldsymbol{\kappa})}\sin^{3}\left(\psi_{0}^{\star}\right)}\\
    &= \frac{-k\cos{\psi_{0}^{\star}}-kRk_{1}(\cos^{2}(\psi_{0}^{\star}))e^{k_{1}(r_{0}^{\star}-\boldsymbol{\kappa})}}{R^3e^{3k_{1}(r_{0}^{\star}-\boldsymbol{\kappa})}\sin^3(\psi_{0}^{\star})\cos(\psi_{0}^{\star})}.
\end{aligned}
\end{equation}
}
Since $\cos(\psi_{0}^{\star})>0$ from \eqref{eq:r_eql} and $k,R,k_{1}, \boldsymbol{\kappa}>0$, we conclude that $\frac{a+d}{2}<0$.\\
\textbf{Step 2: Analyzing $(ad-bc)$}\\
    $ad-bc =$ 
\[
\frac{
    \left( 
    \begin{aligned}
        & -2k^2\cos^2(\psi_{0}^{\star}) + 2Rk_1k^2\cos^3(\psi_{0}^{\star})e^{k_1(r_{0}^{\star} - \boldsymbol{\kappa})} \\
        & \quad + 2r_{0}^{\star}k_1k^2\sin^2(\psi_{0}^{\star}) + k^2\sin^2(\psi_{0}^{\star})
    \end{aligned}
    \right)
}{
    R^6e^{6k_1(r_{0}^{\star}-\boldsymbol{\kappa})}\sin^6(\psi_{0}^{\star})\cos^2(\psi_{0}^{\star})
}.
\]
After further simplification, we get:\\
{\footnotesize
\begin{equation}
ad - bc = 
\frac{
    \left( 
    2k^2\cos(\psi_{0}^{\star}) (-\cos(\psi_{0}^{\star})+k_1Re^{k_1(r_{0}^{\star}-\boldsymbol{\kappa})})+ k^2\sin^2(\psi_{0}^{\star})
    \right)
}{
    R^6e^{6k_1(r_{0}^{\star}-\boldsymbol{\kappa})}\sin^6(\psi_{0}^{\star})\cos^2(\psi_{0}^{\star})
}
\label{eq:c4}
\end{equation}
}
The denominator $R^6e^{6k_1(r_{0}^{\star}-\boldsymbol{\kappa})}\sin^6(\psi_{0}^{\star})\cos^2(\psi_{0}^{\star})>0$. Also in the numerator, the terms $2k^2\cos(\psi_{0}^{\star})>0$ and $k^2\sin^2(\psi_{0}^{\star})>0$. Now consider the term $(-\cos(\psi_{0}^{\star})+k_1Re^{k_1(r_{0}^{\star}-\boldsymbol{\kappa})}))$. Using Theorem \ref{lem:unieq eql point}, we have:
\begin{align*}
    0>-\cos(\psi_{0}^{\star})>\frac{-1}{k_1Re^{1-\boldsymbol{\kappa}k_1}}
\end{align*}
Then:
{\footnotesize
\begin{align*}
    -\cos(\psi_{0}^{\star})+k_1Re^{k_1(r_{0}^{\star}-\boldsymbol{\kappa})})&>\frac{-1}{k_1Re^{(1-\boldsymbol{\kappa}k_1)}}+k_1Re^{k_1(r_{0}^{\star}-\boldsymbol{\kappa})},
\end{align*}
}
which simplifies to:
\begin{align}\label{eq:c6}
    -\cos(\psi_{0}^{\star})+k_1Re^{k_1(r_{0}^{\star}-\boldsymbol{\kappa})})>\frac{-1+k^2_1R^2e^{(1+k_1r_{0}^{\star}-2\boldsymbol{\kappa}k_1)}}{k_1Re^{(1-\boldsymbol{\kappa}k_1)}}.
\end{align}
On the right hand side of the inequality, denominator $k_1Re^{(1-\boldsymbol{\kappa}k_1)}>0$ and the numerator can be rewritten as 
\begin{align}\label{eq:c1}
    -1+k^2_1R^2e^{(1+k_1r_{0}^{\star}-2\boldsymbol{\kappa}k_1)} = -1+\underbrace{k^2_1R^2e^{-2\boldsymbol{\kappa}k_1}}_{T_1}\underbrace{e^{1+k_1r_{0}^{\star}}}_{T_2}
\end{align}
We have $k_1,r_{0}^{\star},R,\boldsymbol{\kappa}>0$ and from Lemma \ref{lem:one root existance}, we have 
\begin{align*}
    \boldsymbol{\kappa}<\frac{\ln{2k^2_1R^2}}{2k_1}\implies e^{-2\boldsymbol{\kappa}k_1}>\frac{1}{2k^2_1R^2}.
\end{align*}
Then from \eqref{eq:c1}
\begin{equation}\label{eq:c2}
\begin{aligned}
    &T_1=k^2_1R^2e^{-2\boldsymbol{\kappa}k_1}>\frac{k^2_1R^2}{2k^2_1R^2}=\frac{1}{2},\\
    & T_2 = e^{1+k_1r_{0}^{\star}}\geq 2.71.
\end{aligned}
\end{equation}
From \eqref{eq:c1} and \eqref{eq:c2}, we conclude:
\begin{align}\label{eq:c3}
-1+k^2_1R^2e^{-2\boldsymbol{\kappa}k_1}e^{1+k_1r_{0}^{\star}}>0.359.
\end{align}
From \eqref{eq:c4}, \eqref{eq:c6} and \eqref{eq:c3}, $ad-bc>0$ and $\frac{a+b}{2}<0$ which implies that $\mathfrak{Re}(\lambda_{1,2})<0$.
\section*{Appendix B}
\textbf{Proof of Theorem \ref{thm:multiple_spiral}:} Linearizing \eqref{eq:transferred_dynamics} for the $j^{th}$ evader at the equilibrium point $(r^{\star},\psi^{\star})$, where $j\in\{1,2,\dots,n-1\}$, the Jacobian matrix is given by: 
\begin{align}\label{eq:jacobian for other}
J_2 = \begin{bmatrix}
\frac{k}{\left(Re^{k_{1}(r^{\star} - \boldsymbol{\kappa})}\sin(\psi^{\star})\right)^{3}} & \frac{k}{\left(Re^{k_{1}(r^{\star} - \boldsymbol{\kappa})}\sin(\psi^{\star})\right)^{2}} \\
\frac{-k}{\left(r^{\star}Re^{k_{1}(r^{\star} - \boldsymbol{\kappa})}\sin(\psi^{\star})\right)^{2}} & \frac{-2k}{r^{\star}R^{2}e^{2k_{1}(r^{\star} - \boldsymbol{\kappa})}\sin^{3}(\psi^{\star})}
\end{bmatrix}
\end{align}
Now, the overall Jacobian matrix for all the evaders, i.e., for $i = 0, 1, \dots, n-1$ using \eqref{eq:jacobian for farthest} and \eqref{eq:jacobian for other}, is given by:
\begin{align}
J = \begin{bmatrix}
J_1 & \bold{0} \\
B & J_2
\end{bmatrix},
\end{align}
where $\bold{0}\in\mathbb{R}^{2\times2}$ is the zero matrix, and $B\in\mathbb{R}^{2\times2}$ is not relevant for the current proof. Let $J_2(m,n)$ represents the $(m,n)th$ entry of Jacobian matrix $J_2$, where $(m,n)\in \{1,2\}\times\{1,2\}$.
Define $a=J_2(1,1), b=J_2(1,2), c=J_2(2,1)$ and $d=J_2(2,2)$. The eigenvalues of $J_2$ are:
\begin{align*}
\lambda_{1,2} & =\frac{a+d}{2}\pm\frac{\sqrt{(a+d)^{2}-4(ad-bc)}}{2}.\\
\end{align*} 
\textbf{Step 1: Analyzing $\frac{a+d}{2}$}
{\footnotesize
\begin{equation}
\begin{aligned}
    a+d &= \frac{k}{\left(Re^{k_{1}(r^{\star}-\boldsymbol{\kappa})}\sin\left(\psi^{\star}\right)\right)^{3}} - \frac{-2k}{r^{\star}R^{2}e^{2k_{1}(r^{\star}-\boldsymbol{\kappa})}\sin^{3}\left(\psi^{\star}\right)}\\
    &= \frac{k\cos{\psi^{\star}}-2k}{R^3e^{3k_{1}(r^{\star}-\boldsymbol{\kappa})}\sin^3(\psi^{\star})\cos(\psi^{\star})}.
\end{aligned}
\end{equation}
}
Since $1>\cos(\psi^{\star})>0$ from \eqref{eq:r_eql} and $k,R,k_{1}, \boldsymbol{\kappa}>0$, we conclude that $a+d<0$.\\
\textbf{Step 2: Analyzing $(ad-bc)$}\\    
\[
ad-bc = \frac{k(\sin^{2}(\psi^{\star})-2\cos^{2(\psi^{\star})})}{
    R^6e^{6k_1(r^{\star}-\boldsymbol{\kappa})}\sin^6(\psi^{\star})\cos^2(\psi^{\star})}.
\]
Since the denominator $R^6e^{6k_1(r^{\star}-\boldsymbol{\kappa})}\sin^6(\psi^{\star})\cos^2(\psi^{\star})>0$, we focus on numerator:
\begin{equation}\label{eq:one_three_cos_sqaure}
 \sin^{2}(\psi^{\star})-2\cos^{2}(\psi^{\star})=1-3\cos^{2}(\psi^{\star}).    
\end{equation}
Using \eqref{eq:cos_psi_bound}, we obtain:
\begin{align}\label{cos_sqaure}
    \cos^{2}(\psi^{\star})<\frac{1}{k_1^{2}R^{2}e^{2(1-\boldsymbol{\kappa}k_1)}}
\end{align}
Since: 
\begin{align}\label{eq:e^{2r_f_0k_1}}
    \boldsymbol{\kappa}<\frac{\log(2k_1^{2}R^{2}}{2k_1} \implies  e^{2k_1\boldsymbol{\kappa}}<2k_1^{2}R^{2}.
\end{align}
Substituting \eqref{eq:e^{2r_f_0k_1}} into \eqref{cos_sqaure}, we get:
\begin{equation}\label{eq:bound_cos_sqaure}
    \cos^{2}(\psi^{\star})<\frac{2}{e^2}<\frac{1}{3}.
\end{equation}
From the above steps, we deduce that $ad-bc>0$.
Since Step 1 shows $a+d<0$ and step 2 confirms $ad-bc>0$, it follows that $\mathfrak{Re}(\lambda_{1,2})<0$. Moreover, as shown in Appendix A, the eigenvalues of $J_1$ also have negative real parts. Hence, the equilibrium point is locally asymptotically stable.
\section*{Appendix C}
\begin{lem}\label{lem:simplified_rs1}
    For fixed $k$ and $\omega$, the value of $r_{s_{1}}^{\star}$ is given by:    \begin{equation*}
       r_{s_{1}}^{\star} = \frac{4R^{2}}{3}\cos^{2}\left(\frac{\theta_{1}}{3}\right),
    \end{equation*}
\end{lem}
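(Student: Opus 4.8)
The plan is to solve the depressed cubic \eqref{eq:special_cubic_equation}, namely $r_s^3 - R^2 r_s + \frac{k}{\omega} = 0$, in closed form using the trigonometric (Vi\`ete) method for the \emph{casus irreducibilis}. Lemma \ref{special_lem_threeroots} already guarantees that all three roots are real and, in particular, that the largest root $r_{s_{1}}^{\star}$ lies in $(\frac{R}{\sqrt{3}}, R)$; this is exactly the regime in which the cosine form of the cubic solution applies, so I would first invoke that lemma (equivalently the discriminant condition $\frac{k^2}{4\omega^2} - \frac{R^6}{27} < 0$) to justify a real-cosine parametrization of the roots.

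Next, I would introduce the substitution $r_s = \frac{2R}{\sqrt{3}}\cos\theta$, chosen so that the cubic collapses under the triple-angle identity. Substituting and factoring out $\frac{2R^3}{3\sqrt{3}}$ turns the left-hand side of \eqref{eq:special_cubic_equation} into $\frac{2R^3}{3\sqrt{3}}\left(4\cos^3\theta - 3\cos\theta\right) + \frac{k}{\omega}$, and using $4\cos^3\theta - 3\cos\theta = \cos(3\theta)$ reduces the equation to the single scalar relation
\begin{equation*}
\cos(3\theta) = -\frac{3\sqrt{3}\,k}{2\omega R^3}.
\end{equation*}
Because $\frac{k}{\omega R^3} < \frac{2}{3\sqrt{3}}$ by hypothesis, the right-hand side lies in $(-1,0)$, so I may define $\theta_1 := \arccos\!\left(-\frac{3\sqrt{3}k}{2\omega R^3}\right) \in (\frac{\pi}{2}, \pi)$ and recover the three roots as $r_s = \frac{2R}{\sqrt{3}}\cos\!\left(\frac{\theta_1}{3} - \frac{2\pi n}{3}\right)$ for $n = 0,1,2$.

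Finally, I would match the branch $n=0$ to $r_{s_{1}}^{\star}$. Since $\theta_1 \in (\frac{\pi}{2}, \pi)$ gives $\frac{\theta_1}{3} \in (\frac{\pi}{6}, \frac{\pi}{3})$ and hence $\cos(\frac{\theta_1}{3}) \in (\frac{1}{2}, \frac{\sqrt{3}}{2})$, the $n=0$ root satisfies $\frac{2R}{\sqrt{3}}\cos(\frac{\theta_1}{3}) \in (\frac{R}{\sqrt{3}}, R)$, which by Lemma \ref{special_lem_threeroots} is precisely the defining interval of $r_{s_{1}}^{\star}$; the branches $n=1,2$ land in $(0,\frac{R}{\sqrt{3}})$ and among the negatives, matching $r_{s_{2}}^{\star}$ and $r_{s_{3}}^{\star}$. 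This identifies $r_{s_{1}}^{\star} = \frac{2R}{\sqrt{3}}\cos(\frac{\theta_1}{3})$, and squaring yields the claimed expression $\frac{4R^2}{3}\cos^2(\frac{\theta_1}{3})$; it is this squared quantity $r_{s_{1}}^{\star 2}$ that subsequently enters the eigenvalue formula \eqref{special_eigen_value} in the proof of Theorem \ref{thm:stability}. The main obstacle is bookkeeping rather than depth: one must pin down the sign inside the arccosine and the $\frac{2\pi n}{3}$ phase shifts so that the three cosine branches are correctly paired with the ordered roots $r_{s_{1}}^{\star} > r_{s_{2}}^{\star} > 0 > r_{s_{3}}^{\star}$ of Lemma \ref{special_lem_threeroots}, and confirm that the argument of $\arccos$ remains in $[-1,1]$ under the standing hypothesis.
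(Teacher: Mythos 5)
Your proposal is correct, and it reaches the right formula by a route that differs in mechanics from the paper's. The paper does not re-solve the cubic: it starts from the Cardano-type radical expression for $r_{s_{1}}^{\star}$ already written in \eqref{special_threeroots}, observes that under $\tfrac{k}{\omega R^{3}}<\tfrac{2}{3\sqrt{3}}$ the two cube-root arguments are complex conjugates $-\tfrac{k}{2\omega}\pm cj$ with $c=\sqrt{\tfrac{R^{6}}{27}-\tfrac{k^{2}}{4\omega^{2}}}$, writes them in polar form with argument $\theta_{1}=\pi-\tan^{-1}\left(\tfrac{2\omega c}{k}\right)$, and sums the conjugate cube roots to get $2\left(c^{2}+\tfrac{k^{2}}{4\omega^{2}}\right)^{1/6}\cos\left(\tfrac{\theta_{1}}{3}\right)$, then squares and uses $c^{2}+\tfrac{k^{2}}{4\omega^{2}}=\tfrac{R^{6}}{27}$. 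You instead bypass the radicals entirely via the Vi\`ete substitution $r_{s}=\tfrac{2R}{\sqrt{3}}\cos\theta$ and the triple-angle identity, which is cleaner in that it never touches complex arithmetic, but it costs you the branch-matching step: you must invoke Lemma \ref{special_lem_threeroots} to decide which of the three cosine branches is $r_{s_{1}}^{\star}$, whereas the paper starts from the expression for that specific root and needs no such identification. The two angles agree: your $\theta_{1}=\arccos\left(-\tfrac{3\sqrt{3}k}{2\omega R^{3}}\right)$ is exactly the argument of $-\tfrac{k}{2\omega}+cj$, whose modulus is $\tfrac{R^{3}}{3\sqrt{3}}$, i.e.\ the paper's $\theta_{1}$ in \eqref{eq:theta1}. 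Finally, you are right to flag that the identity actually produced (by both arguments) is for the \emph{square}, $r_{s_{1}}^{\star 2}=\tfrac{4R^{2}}{3}\cos^{2}\left(\tfrac{\theta_{1}}{3}\right)$; the lemma's statement drops the exponent, but the paper's own proof (see \eqref{special_r1_star_square}) and its later use in the monotonicity lemma confirm that the squared quantity is what is meant, so your reading is the consistent one.
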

where $\theta_1$ is defined as:
\begin{equation*}
    \theta_{1}=\pi-\tan^{-1}\left(\frac{2\omega\sqrt{-\frac{k^{2}}{4\omega^{2}}+\frac{R^{6}}{27}}}{k}\right).
\end{equation*}
\begin{proof}
    Recall from (\ref{special_threeroots}) that
\begin{equation*}
    r_{s_{1}}^{\star} =\left(-\sqrt{\frac{k^{2}}{4\omega^{2}}-\frac{R^{6}}{27}}-\frac{k}{2\omega}\right)^{\frac{1}{3}} + \left(\sqrt{\frac{k^{2}}{4\omega^{2}}-\frac{R^{6}}{27}}-\frac{k}{2\omega}\right)^{\frac{1}{3}}
\end{equation*}
and also recall that
 $\frac{k}{\omega R^3}<\frac{2}{3\sqrt{3}} \Rightarrow \frac{k^{2}}{4\omega^{2}}-\frac{R^{6}}{27}<0$.
Let 
\begin{align}\label{eq:c}
    c:=\sqrt{-\frac{k^{2}}{4\omega^{2}}+\frac{R^{6}}{27}},
\end{align}
where $c>0$. Then,
\begin{align*}
r_{s_{1}}^{*^2} &= \left(\left(cj-\frac{k}{2\omega}\right)^{\frac{1}{3}} + \left(-cj-\frac{k}{2\omega}\right)^{\frac{1}{3}}\right)^{2}\\
&= \left(\left(\sqrt{c^2+\frac{k^2}{4 \omega^2}}\right)^{\frac{1}{3}} e^{\frac{\theta_{1}j}{3}}+\left(\sqrt{c^2+\frac{k^2}{4 \omega^2}}\right)^{\frac{1}{3}} e^{\frac{\theta_{2}j}{3}}\right)^2,
\end{align*}
where 
\begin{align}\label{eq:theta1}
    \theta_{1}=\pi-\tan^{-1}(\frac{2\omega c}{k})
\end{align}
and $\theta_{2}=-\pi+\tan^{-1}(\frac{2\omega c}{k})$. Note that $\theta_{1}=-\theta_{2}$. Substituting $\theta_{2}=-\theta_{1}$ in the above equation, we get 

\begin{align}\label{special_r1_star_square}
r_{s_{1}}^{*2} &= \left(c^2+\frac{k^2}{4 \omega^2}\right)^\frac{1}{3}\left(2 \cos\left(\frac{\theta_{1}}{3}\right)\right)^{2}\\
    &= 4\left(c^2+\frac{k^2}{4 \omega^2}\right)^\frac{1}{3}\cos^{2}\left(\frac{\theta_{1}}{3}\right).    
\end{align}
Next observe that the term $\frac{2c\omega}{k}$  appears in the expression for $\theta_{1}$. Using \eqref{eq:c} in $\frac{2\omega c}{k}$, we get
\begin{align}\label{2cw/k_equation}
    \frac{2c\omega}{k}&=\frac{2\omega\sqrt{-\frac{k^{2}}{4\omega^{2}}+\frac{R^{6}}{27}}}{k} = 2\sqrt{\frac{R^{6}\omega^{2}}{27k^{2}}-\frac{1}{4}} \nonumber\\
    &= 2\sqrt{\frac{1}{27\left(\frac{k}{\omega R^{3}}\right)^{2}}-\frac{1}{4}}.
\end{align}
Applying \eqref{2cw/k_equation} in \eqref{eq:theta1}, and \eqref{eq:c} in \eqref{special_r1_star_square}, we obtain the desired result.

\end{proof}

\begin{lem}\label{monotonically}
    For fixed $k$, $z:=4\omega^{2}(r_{s_{1}}^{*^2}-R^{2})^{3}$ is a monotonically increasing function of $(\omega R^{3})$, where $\frac{3\sqrt{3}k}{2}<\omega R^{3}<\infty$.
\end{lem}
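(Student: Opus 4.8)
The plan is to collapse $z$ into a function of the single combined variable $s:=\omega R^{3}$ (with $k$ held fixed), after which monotonicity can be read off by composition. Write $\rho:=r_{s_{1}}^{\star}$. First I would use the cubic \eqref{eq:special_cubic_equation} satisfied by $\rho$, namely $\rho^{3}-R^{2}\rho+\tfrac{k}{\omega}=0$, to trade $\omega$ for $\rho$. Since $\rho<R$ by Lemma \ref{special_lem_threeroots}, this gives $\tfrac{k}{\omega}=\rho(R^{2}-\rho^{2})>0$, hence $\omega^{2}=\dfrac{k^{2}}{\rho^{2}(R^{2}-\rho^{2})^{2}}$. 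Substituting into $z=4\omega^{2}(\rho^{2}-R^{2})^{3}=-4\omega^{2}(R^{2}-\rho^{2})^{3}$, the factors of $(R^{2}-\rho^{2})$ cancel and I obtain the clean identity
\begin{equation*}
z=-\frac{4k^{2}(R^{2}-\rho^{2})}{\rho^{2}}=4k^{2}\left(1-\frac{R^{2}}{\rho^{2}}\right).
\end{equation*}

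Next I would invoke Lemma \ref{lem:simplified_rs1}, which yields $\rho^{2}=\tfrac{4R^{2}}{3}\cos^{2}(\theta_{1}/3)$, so that $R^{2}/\rho^{2}=3/\!\left(4\cos^{2}(\theta_{1}/3)\right)$ and therefore
\begin{equation*}
z=4k^{2}\left(1-\frac{3}{4\cos^{2}(\theta_{1}/3)}\right).
\end{equation*}
This is the decisive observation: the explicit dependence on $R$ has cancelled, and by \eqref{eq:theta1} together with \eqref{2cw/k_equation} we have $\theta_{1}=\pi-\tan^{-1}\!\left(2\sqrt{\tfrac{s^{2}}{27k^{2}}-\tfrac14}\right)$, so $\theta_{1}$—and hence $z$—depends on $\omega$ and $R$ only through $s=\omega R^{3}$, exactly as required for a statement about a function of $(\omega R^{3})$.

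Finally I would settle monotonicity by tracking the composition as $s$ increases on $\left(\tfrac{3\sqrt3 k}{2},\infty\right)$. The argument $2\sqrt{s^{2}/(27k^{2})-1/4}$ increases from $0$ to $\infty$, so $\tan^{-1}(\cdot)$ increases from $0$ to $\pi/2$ and $\theta_{1}$ decreases from $\pi$ to $\pi/2$; consequently $\theta_{1}/3$ decreases through $(\pi/6,\pi/3)\subset(0,\pi/2)$, on which cosine is strictly decreasing. Hence $\cos^{2}(\theta_{1}/3)$ is strictly increasing in $s$, $3/\!\left(4\cos^{2}(\theta_{1}/3)\right)$ is strictly decreasing, and so $z$ is strictly increasing in $s$. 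As a consistency check, $z$ runs from $-8k^{2}$ at $s=\tfrac{3\sqrt3 k}{2}$ up to $0$ as $s\to\infty$.

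The only delicate step is the first one. Producing the collapsed form $z=4k^{2}(1-R^{2}/\rho^{2})$ requires using the cubic to eliminate $\omega$ and carefully carrying the sign through $(\rho^{2}-R^{2})^{3}=-(R^{2}-\rho^{2})^{3}$; once this algebraic cancellation is done, the vanishing of the $R$-dependence and the reduction to a monotone composition are immediate. I would deliberately avoid the alternative of substituting $\rho^{2}=\tfrac{4R^{2}}{3}\cos^{2}(\theta_{1}/3)$ straight into $z=4s^{2}\!\left(\tfrac43\cos^{2}(\theta_{1}/3)-1\right)^{3}$: although that also exhibits $z$ as a function of $s$, verifying $dz/ds>0$ then forces one to sign $2w+3s\,w'$ with $w=\tfrac43\cos^{2}(\theta_{1}/3)-1$, which is considerably messier than the route above.
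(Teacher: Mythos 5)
Your proof is correct, and it takes a genuinely different—and in fact tighter—route than the paper. The paper's own proof substitutes Lemma \ref{lem:simplified_rs1} directly into $z$ to get the factorization $z=f(\omega R^{3})\,g(\omega R^{3})$ of \eqref{z_equation}, with $f=4\omega^{2}R^{6}$ quadratic and positive and $g=\bigl(\tfrac43\cos^{2}(\theta_{1}/3)-1\bigr)^{3}$ increasing, and then concludes that the product $fg$ is increasing. That inference is delicate: on the stated domain $g<0$ (since $\cos^{2}(\theta_{1}/3)\in(1/4,3/4)$), and a positive increasing function times a negative increasing function need not be increasing (e.g.\ $x\cdot(-x^{-1/2})=-\sqrt{x}$), so the paper's argument as written leaves exactly the ``sign $2w+3s\,w'$'' issue you flagged unaddressed. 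Your approach sidesteps this entirely: by using the cubic \eqref{eq:special_cubic_equation} to eliminate $\omega$, you obtain the closed form $z=4k^{2}\bigl(1-R^{2}/\rho^{2}\bigr)=4k^{2}\bigl(1-\tfrac{3}{4\cos^{2}(\theta_{1}/3)}\bigr)$, after which monotonicity in $s=\omega R^{3}$ is an elementary composition argument via \eqref{eq:theta1} and \eqref{2cw/k_equation}, with no product-of-monotone-functions step needed. As a bonus, your formula immediately yields the boundary values $z\to-8k^{2}$ as $s\to\tfrac{3\sqrt3 k}{2}$ and $z\to 0$ as $s\to\infty$, which the paper proves separately as Lemmas \ref{upper_bound_on_wR^3} and \ref{lower_bound_on_wR^3} (the former via L'H\^opital); so your computation both repairs the gap in the paper's monotonicity argument and subsumes the two subsequent lemmas.
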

\begin{proof}
    Substituting $r_{s_{1}}^{\star}$ from Lemma \ref{lem:simplified_rs1}, we obtain:
    \small
    \begin{align}\label{z_equation}
        z &= 4\omega^{2}\left(\frac{4R^{2}}{3}\cos^{2}\left(\frac{\theta_{1}}{3}\right)-R^{2}\right)^{3} \nonumber\\
         &=\underbrace{ 4\omega^{2}R^{6}}_{f}  \underbrace{\left(\frac{4}{3}\cos^{2}\left(\frac{\pi}{3}-\frac{1}{3}\tan^{-1}\left(2\sqrt{\frac{1}{27\left(\frac{k}{\omega R^{3}}\right)^{2}}-\frac{1}{4}}\right)\right)-1\right)^{3}}_{g}\\ \nonumber
         &=: f(\omega R^{3}) g(\omega R^{3}) \quad \text{(say)}.
    \end{align}
    \normalsize
    Now it is easy to verify that $g(\omega R^{3})$ is an monotonically increasing function of $\omega R^{3}$ for $\frac{3\sqrt{3}k}{2}<\omega R^{3}<\infty$, while $f(\omega R^{3})$ is quadratic in $\omega R^{3}$. Hence the product $z=fg$ is an monotonically increasing function of $\omega R^{3}$.
\end{proof}

\begin{lem}\label{upper_bound_on_wR^3}
    For fixed $k$, $z\to 0$ as $\omega R^{3}\to \infty$
\end{lem}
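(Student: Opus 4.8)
The plan is to bypass the trigonometric representation of $r_{s_1}^{\star}$ from Lemmas \ref{lem:simplified_rs1} and \ref{monotonically} and work directly with the defining cubic \eqref{eq:special_cubic_equation}, which renders the limit essentially immediate. First I would recall that $r_{s_1}^{\star}$ is a positive root of \eqref{eq:special_cubic_equation}, so that $r_{s_1}^{\star 3} - R^2 r_{s_1}^{\star} + \frac{k}{\omega} = 0$. Factoring $r_{s_1}^{\star}$ out of the first two terms yields the key identity
\[
r_{s_1}^{\star 2} - R^2 = -\frac{k}{\omega\, r_{s_1}^{\star}}.
\]

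Next I would substitute this identity into the definition $z = 4\omega^2\big(r_{s_1}^{\star 2} - R^2\big)^3$. The cube collapses and $\omega^2$ cancels against two of the three factors of $\omega$ in the denominator, leaving
\[
z = 4\omega^2\left(-\frac{k}{\omega\, r_{s_1}^{\star}}\right)^3 = -\frac{4k^3}{\omega\, r_{s_1}^{\star 3}}.
\]
This already exposes $z$ as a negative quantity whose magnitude is governed entirely by $\omega\, r_{s_1}^{\star 3}$. To convert this into the asserted limit, I would invoke the root bound $R/\sqrt{3} < r_{s_1}^{\star} < R$ from Lemma \ref{special_lem_threeroots}, valid on the whole admissible range $\tfrac{3\sqrt{3}k}{2} < \omega R^3 < \infty$. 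This gives $\tfrac{\omega R^3}{3\sqrt{3}} < \omega\, r_{s_1}^{\star 3} < \omega R^3$ and hence the two-sided estimate
\[
\frac{4k^3}{\omega R^3} < |z| < \frac{12\sqrt{3}\,k^3}{\omega R^3}.
\]
With $k$ fixed, both bounds tend to $0$ as $\omega R^3 \to \infty$, so $z \to 0$ by the squeeze principle, which is precisely the claim.

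The decisive observation is that the cubic \eqref{eq:special_cubic_equation}, rather than the closed-form trigonometric expression of Lemma \ref{lem:simplified_rs1}, is the right instrument: the identity $r_{s_1}^{\star 2} - R^2 = -k/(\omega r_{s_1}^{\star})$ performs exactly the cancellation between the vanishing factor $\big(r_{s_1}^{\star 2} - R^2\big)^3 \to 0$ and the divergent factor $4\omega^2$, with no asymptotic expansion required. Were one to insist on the trigonometric form, the main obstacle would be a delicate Taylor analysis of $\tfrac{4}{3}\cos^2(\theta_1/3) - 1$ near $\theta_1 = \pi/2$ to establish its $O\!\left(1/(\omega R^3)\right)$ decay and confirm that it outpaces the $4\omega^2 R^6 = 4(\omega R^3)^2$ growth identified in Lemma \ref{monotonically}; the cubic route sidesteps this computation entirely. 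As a useful byproduct, the same estimate shows $|z| < 8k^2$ on the admissible range (the upper bound is largest as $\omega R^3 \downarrow \tfrac{3\sqrt{3}k}{2}$), so that $z > -8k^2$ throughout; together with the monotonicity of Lemma \ref{monotonically}, this is exactly the inequality needed to certify Claim 2 (the saddle) of Theorem \ref{thm:stability}.
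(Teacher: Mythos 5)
Your proof is correct, and it takes a genuinely different route from the paper. The paper's own argument stays inside the trigonometric framework: it takes the closed form of $z$ in \eqref{z_equation} (built from Lemma \ref{lem:simplified_rs1}), and disposes of the indeterminate product $4\omega^{2}R^{6}\cdot\bigl(\tfrac{4}{3}\cos^{2}(\theta_{1}/3)-1\bigr)^{3}$ by invoking L'H\^{o}pital's rule, with the details left to the reader. You instead return to the defining cubic \eqref{eq:special_cubic_equation}: the identity $r_{s_{1}}^{\star 2}-R^{2}=-k/(\omega r_{s_{1}}^{\star})$ turns $z$ exactly into $-4k^{3}/(\omega r_{s_{1}}^{\star 3})$, and the root localization $R/\sqrt{3}<r_{s_{1}}^{\star}<R$ of Lemma \ref{special_lem_threeroots} then squeezes $|z|$ between $4k^{3}/(\omega R^{3})$ and $12\sqrt{3}\,k^{3}/(\omega R^{3})$, both of which vanish as $\omega R^{3}\to\infty$ for fixed $k$. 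What your route buys is substantial: it is fully elementary (no L'H\^{o}pital, no expansion of $\cos^{2}(\theta_{1}/3)$ near $\theta_{1}=\pi/2$), it yields an explicit $\Theta\!\left(1/(\omega R^{3})\right)$ decay rate rather than a bare limit, and the same estimate evaluated at the left endpoint $\omega R^{3}>\tfrac{3\sqrt{3}k}{2}$ gives the uniform bound $-8k^{2}<z<0$, which by itself delivers $k^{2}<y=9k^{2}+z<9k^{2}$ and hence the opposite-sign eigenvalues in Claim 2 of Theorem \ref{thm:stability}, without needing the monotonicity of Lemma \ref{monotonically} or the endpoint limit of Lemma \ref{lower_bound_on_wR^3}. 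What the paper's route buys is coherence with the machinery it has already built (the explicit trigonometric root and its monotonicity), which it reuses across the surrounding lemmas; but as a proof of this particular statement, yours is more self-contained and more informative.
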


\begin{proof}
For (48) above, it can be shown using L'Hospitals rule that \(z \to 0\) as \(\omega R^3 \to \infty\)\
\end{proof}

\begin{lem}\label{lower_bound_on_wR^3}
    For fixed $k$, $z\to -8k^{2}$ as $\omega R^{3}\to \frac{3\sqrt{3}k}{2}$
\end{lem}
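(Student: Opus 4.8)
The plan is to compute the limit directly, exploiting the fact that in the proof of Lemma \ref{monotonically} the quantity $z = 4\omega^2(r_{s_1}^{*2}-R^2)^3$ has already been reduced to a function of the single scalar variable $\omega R^3$. Concretely, I would work from the factored form \eqref{z_equation}, $z = f(\omega R^3)\,g(\omega R^3)$, where $f(\omega R^3)=4\omega^2R^6=4(\omega R^3)^2$ and
\[
g(\omega R^3)=\left(\frac{4}{3}\cos^{2}\!\left(\frac{\pi}{3}-\frac{1}{3}\tan^{-1}\!\left(2\sqrt{\frac{1}{27(k/\omega R^3)^2}-\frac{1}{4}}\right)\right)-1\right)^{3}.
\]
Both $f$ and $g$ are continuous in $\omega R^3$ on the admissible half-line $\omega R^3\ge\frac{3\sqrt3 k}{2}$, so the asserted limit is simply the value of the product $fg$ at the left endpoint, and no delicate indeterminate-form argument is needed.

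The one step that requires attention is the behaviour of the inner inverse tangent. As $\omega R^3\to\frac{3\sqrt3 k}{2}$ we have $\frac{k}{\omega R^3}\to\frac{2}{3\sqrt3}$, hence $(k/\omega R^3)^2\to\frac{4}{27}$ and $\frac{1}{27(k/\omega R^3)^2}\to\frac{1}{4}$. Therefore the radicand $\frac{1}{27(k/\omega R^3)^2}-\frac{1}{4}$ tends to $0$, the arctangent tends to $0$, and the angle inside the cosine tends to $\frac{\pi}{3}$. I would then substitute $\cos^2(\pi/3)=\frac14$, which gives $\frac{4}{3}\cos^2(\theta_1/3)\to\frac13$, equivalently $r_{s_1}^{*2}\to\frac{R^2}{3}$; this is a useful internal consistency check, since it means $r_{s_1}^\star\to R/\sqrt3$, exactly the point at which the two positive roots of Lemma \ref{special_lem_threeroots} merge at the discriminant boundary.

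Assembling the pieces, $g\to\left(\frac13-1\right)^3=\left(-\frac23\right)^3=-\frac{8}{27}$, while $f=4(\omega R^3)^2\to 4\cdot\frac{27k^2}{4}=27k^2$, so that $z=fg\to 27k^2\cdot\left(-\frac{8}{27}\right)=-8k^2$, as claimed. I do not expect a genuine obstacle here: the proof is a continuity-plus-substitution argument. The only subtlety worth a sentence is that $\omega R^3=\frac{3\sqrt3 k}{2}$ sits on the boundary of the domain, so the limit is one-sided, and for $\omega R^3>\frac{3\sqrt3 k}{2}$ one has $\frac{k}{\omega R^3}<\frac{2}{3\sqrt3}$, which guarantees the radicand stays positive and approaches $0$ from above, keeping $g$ continuous (and real) up to the endpoint.
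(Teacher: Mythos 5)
Your proposal is correct and follows essentially the same route as the paper: both evaluate $z$ at the endpoint via the factored expression \eqref{z_equation}, using $\frac{k}{\omega R^{3}}\to\frac{2}{3\sqrt{3}}$ to make the radicand (hence the arctangent) vanish, so that $\cos^{2}(\pi/3)=\frac14$ gives the factor $\left(-\frac23\right)^{3}=-\frac{8}{27}$, and $\omega^{2}R^{6}=\frac{27k^{2}}{4}$ yields $z\to-8k^{2}$. Your explicit remarks on one-sided continuity and the consistency check $r_{s_{1}}^{\star}\to R/\sqrt{3}$ are minor refinements of, not departures from, the paper's direct substitution.
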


\begin{proof}
    Using $\omega R^{3}=\frac{3\sqrt{3}k}{2}$ in \eqref{z_equation},
  \begin{align*}
      z &= 4\omega^{2}R^{6}\left(\frac{4}{3}\cos^{2}\left(\frac{\pi}{3}-\frac{1}{3}\tan^{-1}\left(0\right)\right)-1\right)^{3}\\
      &= -\frac{32}{27} \omega^{2} R^{6}\\
      &= -\frac{32}{27} \left(\frac{27k^{2}}{4}\right) ~~(\text{using}~~ \frac{k}{\omega R^{3}}=\frac{2}{3\sqrt{3}})\\
      &= -8k^{2}.
  \end{align*}
\end{proof}

\textbf{Proof of claim 2 of Theorem \ref{thm:stability}:}
To prove claim 2, 
Now, recall the expression for the eigenvalues:
\begin{align*}
    \lambda_{1,2}=\frac{1}{2(R^{2}-r_{s_{1}}^{*{2}})^{\frac{3}{2}}}\left(-k\pm\sqrt{9k^{2}-4\omega^{2}(R^{2}-r_{s_{1}}^{*2})^{3}}\right)
\end{align*}
and let
\begin{align}
    y &= 9k^{2}-4\omega^{2}(R^{2}-r_{s_{1}}^{*^2})^{3} \nonumber\\
      &= 9k^{2} + 4\omega^{2}\left(r_{s_{1}}^{*^2}-R^{2}\right)^{3} =: 9k^{2}+z.
\end{align}
Since lemma \ref{lem:simplified_rs1}, \ref{monotonically}, \ref{upper_bound_on_wR^3}, and \ref{lower_bound_on_wR^3} are valid for any $k$, $y=9k^{2}+z$ is also monotonically increasing with $\omega R^{3}$, with the bounds $9k^{2}-8k^{2}\leq y\leq 9k^{2} -0\Leftrightarrow k^{2}\leq y\leq 9k^{2}$. This implies both the roots $\lambda_{1,2}=\frac{1}{2(R^{2}-r_{s}^{*^{2}})^{\frac{3}{2}}}\left(-k\pm\sqrt{9k^{2}-4\omega^{2}(R^{2}-r_{s}^{*2})^{3}}\right)$ are always real and of opposite signs.

\end{document}